\newtheorem{theorem}{Theorem}[section]
\newtheorem{lemma}[theorem]{Lemma}
\newtheorem{corollary}[theorem]{Corollary}
\newtheorem{proposition}[theorem]{Proposition}
\newtheorem*{lemma*}{Lemma}
\theoremstyle{remark}
\newtheorem{remark}[theorem]{Remark}
\newcommand\bigcheck[1]{#1 \raise1ex\hbox{$\hspace{-1ex}{}^\vee$}}
\newcommand\sucheck[1]{#1 \raise0.5ex\hbox{$\hspace{-1ex}{}^\vee$}}
\newcommand{\mc}[1]{{\mathcal #1}}
\newcommand{\mf}[1]{{\mathfrak #1}}
\newcommand{\mb}[1]{{\mathbb #1}}
\newcommand\tint{{\textstyle\int}}
\newcommand{\id}{{1 \mskip -5mu {\rm I}}}
\renewcommand{\tilde}{\widetilde}
\newcommand{\ad}{\mathop{\rm ad }}
\newcommand{\sign}{\mathop{\rm sign }}
\newcommand{\Mat}{\mathop{\rm Mat }}
\renewcommand{\ker}{\mathop{\rm Ker }}
\newcommand{\im}{\mathop{\rm Im }}
\newcommand{\Tr}{\mathop{\rm Tr }}
\newcommand{\rk}{\mathop{\rm rk }}
\newcommand{\Vect}{\mathop{\rm Vect }}
\newcommand{\var}{\mathop{\rm var }}
\definecolor{light}{gray}{.9}
\begin{document}


\title{Essential variational Poisson cohomology}

\author{
Alberto De Sole,
Victor G. Kac
}

\thanks{Dipartimento di Matematica, Universit\`a di Roma ``La Sapienza'',
00185 Roma, Italy.
desole@mat.uniroma1.it
Supported by a PRIN grant and Fondi Ateneo, from the University of Rome.}

\thanks{Department of Mathematics, M.I.T.,
Cambridge, MA 02139, USA.
kac@math.mit.edu
Supported in part by an NSF grant.}


\begin{abstract}
\noindent
In our recent paper \cite{DSK11} we computed the dimension of the variational Poisson cohomology $\mc H^\bullet_K(\mc V)$
for any quasiconstant coefficient $\ell\times\ell$ matrix differential operator $K$ of order $N$ with invertible leading coefficient,
provided that $\mc V$ is a normal algebra of differential functions over a linearly closed differential field.
In the present paper we show that, for $K$ skewadjoint, the $\mb Z$-graded Lie superalgebra $\mc H^\bullet_K(\mc V)$ is isomorphic
to the finite dimensional Lie superalgebra $\tilde H(N\ell,S)$.
We also prove that the subalgebra of ``essential'' variational Poisson cohomology, consisting
of classes vanishing on the Casimirs of $K$, is zero.
This vanishing result has applications to the theory of bi-Hamiltonian structures and their deformations.
At the end of the paper we consider also the translation invariant case.
\end{abstract}

\maketitle


\section{Introduction}
\label{sec:intro}

The $\mb Z$-graded Lie superalgebra $W^{\var}(\Pi\mc V)=\bigoplus_{k=-1}^\infty W^{\var}_k$
of \emph{variational polyvector fields} is a very convenient framework for the theory of integrable
Hamiltonian PDE's.
This Lie superalgebra is associated to an algebra of differential functions $\mc V$,
which is an extension of the algebra of differential polynomials
$R_\ell=\mc F[u_i^{(n)}\,|\,i=1,\dots,\ell;\,n\in\mb Z_+]$
over a differential field $\mc F$ with the derivation $\partial$
extended to $R_\ell$ by $\partial u_i^{(n)}=u_i^{(n+1)}$.

The first three pieces, $W^{\var}_k$ for $k=-1,0,1$,
are identified with the most important objects in the theory of integrable systems:
First, $W^{\var}_{-1}=\Pi(\mc V/\partial\mc V)$,
where $\mc V/\partial\mc V$ is the space of \emph{Hamiltonian functions}
(or local functionals), and where $\Pi$ is just to remind that it should be considered
as an odd subspace of $W^{\var}(\Pi\mc V)$.
Second, $W^{\var}_0$ is the Lie algebra of \emph{evolutionary vector fields}
$$
X_P=\sum_{i=1}^\ell\sum_{n=0}^\infty(\partial^n P_i)\frac{\partial}{\partial u_i^{(n)}}\,,\,\,P\in\mc V^\ell\,,
$$
which we identify with $\mc V^\ell$.
Third, $W^{\var}_1$ is identified with the space of skewadjoint $\ell\times\ell$ matrix differential operators
over $\mc V$ endowed with odd parity.

For $\tint f,\,\tint g\in W^{\var}_{-1}$, $X,Y\in W^{\var}_0$, and $H=H(\partial)\in W^{\var}_1$,
the commutators are defined as follows
(as usual, $\tint$ denotes the canonical map $\mc V\to\mc V/\partial\mc V$):
\begin{eqnarray}
{[\tint f,\tint g]} &=& 0 \,, \label{0:eq2} \\
{[X,\tint f]} &=& \tint X(f) \,, \label{0:eq3} \\
{[X,Y]} &=& XY-YX \,, \label{0:eq4} \\
{[H,\tint f]} &=& H(\partial)\frac{\delta f}{\delta u} \,, \label{0:eq5} \\
{[X_P,H]} &=& X_P(H(\partial))-D_P(\partial)\circ H(\partial)-H(\partial)\circ D^*_P(\partial) \,. \label{0:eq6}
\end{eqnarray}
Here $\frac\delta{\delta u}$ is the variational derivative (see \eqref{eq:varder}),
$D_P$ is the Frechet derivative (see \eqref{frechet}),
and $D^*(\partial)$ denotes the matrix differential operator adjoint to $D(\partial)$.

The formula for the commutator of two elements $K,H$ of $W^{\var}_1$ (the so called
Schouten bracket) is more complicated (see \eqref{20110611:eq4},
but one needs only to know that conditions $[K,K]=0,\, [H,H]=0$ means that these matrix differential operators
are \emph{Hamiltonian}, and the condition $[K,H]=0$ means that they are \emph{compatible}.

There have been various various versions of the notion of variational polyvector fields,
but \cite{Kup80} is probably the earliest reference.

The basic notions of the theory of integrable Hamiltonian equations
can be easily described in terms of the Lie superalgebra $W^{\var}(\Pi\mc V)$.
Given a Hamiltonian operator $H$ and a Hamiltonian function $\tint h\in\mc V/\partial\mc V$,
the corresponding \emph{Hamiltonian equation} is
\begin{equation}\label{0:eq7}
\frac{du}{dt}=[H,\tint h]\,,\,\,u=(u_1,\dots,u_\ell)\,.
\end{equation}
One says that two Hamiltonian functions $\tint h_1$ and $\tint h_2$ are \emph{in involution} if
\begin{equation}\label{0:eq8}
[[H,\tint h_1],\tint h_2]=0\,.
\end{equation}
(Note that the LHS of \eqref{0:eq8} is skewasymmetric in $\tint h_1$ and $\tint h_2$,
since both are odd elements of the Lie superalgebra $W^{\var}(\Pi\mc V)$).
Any $\tint h_1$ which is in involution with $\tint h$ is called an \emph{integral of motion}
of the Hamiltonian equation \eqref{0:eq7},
and this equation is called \emph{integrable} if there exists an infinite dimensional
subspace $\Omega$ of $\mc V/\partial\mc V$ containing $\tint h$
such that all elements of $\Omega$ are in involution.
In this case we obtain a hierarchy of compatible integrable Hamiltonian equations,
labeled by elements $\omega\in\Omega$:
$$
\frac{du}{dt_\omega}=[H,\omega]\,.
$$

The basic device for proving integrability of a Hamiltonian equation is the so called
\emph{Lenard-Magri scheme}, proposed by Lenard in early 1970's (unpublished),
with an important imput by Magri \cite{Mag78}.
A survey of related results up to early 1990's can be found in \cite{Dor93},
and a discussion in terms of Poisson vertex algebras can be found in \cite{BDSK09}.

The Lenard-Magri scheme requires two compatible Hamiltonian operators $H$ and $K$
and a sequence of Hamiltonian functions $\tint h_n,\,n\in\mb Z_+$,
such that
\begin{equation}\label{0:eq8b}
[H,\tint h_n]=[K,\tint h_{n+1}]\,,\,\,n\in\mb Z_+\,.
\end{equation}
Then it is a trivial exercise in Lie superalgebra to show that all Hamiltonian functions
$\tint h_n$ are in involution
(hint: use the parenthetical remark after \eqref{0:eq8}).
Note to solve exercise one only uses the fact that $K,H$ lie in $W^{\var}_1$,
but in order to construct the sequence $\tint h_n,\,n\in\mb Z_+$,
one needs the Hamiltonian property of $H$ and $K$ and their compatibility.

The appropriate language here is the cohomological one.
Since $[K,K]=0$ and $K$ is an (odd) element of $W^{\var}_1$,
it follows that we have a cohomology complex
$$
\big(W^{\var}(\Pi\mc V)=\bigoplus_{k\geq-1}W^{\var}_k,\ad K\big)\,,
$$
called the variational Poisson cohomology complex.
As usual, let $\mc Z^\bullet_K(\mc V)=\bigoplus_{k\geq-1}\mc Z^-_K$
be the subalgebra of closed elements ($=\ker(\ad K)$),
and let $\mc B^\bullet_K(\mc V)=\bigoplus_{k\geq-1}\mc B^-_K$
be its ideal of exact elements ($=\im(\ad K)$).
Then the \emph{variational Poisson cohomology}
$$
\mc H^\bullet_K(\mc V)=\mc Z^\bullet_K(\mc V)\big/\mc B^\bullet_K(\mc V)
=\bigoplus_{k\geq-1}\mc H^k_K\,,
$$
is a $\mb Z$-graded Lie superalgebra.
(For usual polyvector fields the corresponding Poisson cohomology was introduced in \cite{Lic77}; cf. \cite{DSK11}).

Now we can try to find a solution to \eqref{0:eq8b} by induction on $n$ as follows
(see \cite{Kra88} and \cite{Olv87}).
Since $[K,H]=0$, we have, by the Jacobi identity:
\begin{equation}\label{0:eq11}
[K,[H,\tint h_n]]=-[H,[K,\tint h_n]]\,,
\end{equation}
hence, by the inductive assumption, the RHS of \eqref{0:eq11} is
$-[H,[H,\tint h_{n+1}]]$, which is zero since $[H,H]=0$ and $H$ is odd.
Thus, $[H,\tint h_n]\in\mc Z^0_K$.
To complete the $n$-th step of induction we need
that this element is exact, i.e. it equals $[H,\tint h_{n+1}]$ for some $\tint h_{n+1}$.
But in general we have
\begin{equation}\label{0:eq12}
[H,\tint h_n]=[K,\tint h_{n+1}]+z_{n+1}\,,
\end{equation}
where $z_{n+1}\in\mc Z^0_K$ only depends on the cohomology class in $\mc H^0_K$.

The best place to start the Lenard-Magri scheme is to take $\tint h_0=C_0\mc Z^{-1}_K$,
a \emph{central element} for $K$.
Then the first step of the Lenard-Magri scheme requires the existence of $\tint h_1$
such that
\begin{equation}\label{0:eq13}
[H,C_0]=[K,\tint h_{1}]\,.
\end{equation}
Taking bracket of both sides of \eqref{0:eq13} with arbitrary $C_1\in\mc Z^{-1}_K$,
we obtain
\begin{equation}\label{0:eq14}
[[H,C_0],C_1]=0\,.
\end{equation}
Thus, if we wish the Lenard-Magri scheme to work starting with an arbitrary central element $C_0$ for $K$,
the Hamiltonian operator $H$ (which lies in $\mc Z^1_K$), must satisfy \eqref{0:eq14} for any $C_0,C_1\in\mc Z^{-1}_K$.
In other words, $H$ must be ``essentially closed''.

It was remarked in \cite{DMS05} that condition \eqref{0:eq14} is an obstruction to triviality of deformations
of the Hamiltonian operator $K$,
which is, of course, another important reason to be interested in ``essential'' variational Poisson cohomology.

We define the subalgebra $\mc E\mc Z^\bullet_K(\mc V)=\bigoplus_{k\geq-1}\mc E\mc Z^k_K\subset\mc Z^\bullet_K(\mc V)$
of \emph{essentially closed elements}, by induction on $k\geq-1$, as follows:
$$
\mc E\mc Z^{-1}_K=0\,,\,\,
\mc E\mc Z^{k}_K=\big\{z\in\mc Z^k_K\,\big|\,[z,\mc Z^{-1}_K]\subset\mc E\mc Z^{k-1}\big\}
\,,\,\,k\in\mb Z_+\,.
$$
It is immediate to see that exact elements are essentially closed, and we define the \emph{essential variational
Poisson cohomology} as
$$
\mc E\mc H^\bullet_K(\mc V)=\mc E\mc Z^\bullet_K(\mc V)\big/\mc B^\bullet_K(\mc V)\,.
$$

The first main result of the present paper is Theorem \ref{20110612:thm},
which asserts that $\mc E\mc H^\bullet_K(\mc V)=0$,
provided that $K$ is an $\ell\times\ell$ matrix differential operator
of order $N$ with coefficients in $\Mat_{\ell\times\ell}(\mc F)$
and invertible leading coefficient,
that the differential field $\mc F$ is linearly closed,
and that the algebra of differential functions $\mc V$ is normal.
Recall that a differential field $\mc F$ is called \emph{linearly closed} \cite{DSK11}
if any linear inhomogenous (respectively homogenous) differential equation
of order greater than or equal to 1 with coefficients in $\mc F$
has a solution (resp. nonzero solution) in $\mc F$.

The proof of Theorem \ref{20110612:thm} relies on our previous paper \cite{DSK11},
where, under the same assumptions on $K$, $\mc F$ and $\mc V$,
we prove that $\dim_{\mc C}(\mc H^k_K)=\binom{N\ell}{k+2}$,
where $\mc C\subset\mc F$ is the subfield of constants,
and we constructed explicit representatives of cohomology classes.

In turn, Theorem \ref{20110612:thm} allows us to compute the Lie superalgebra structure
of $\mc H^\bullet_K(\mc V)$, which is our second main result.
Namely, Theorem \ref{20110601:thm} asserts that the $\mb Z$-graded Lie superalgebra
$\mc H^\bullet_K(\mc V)$ is isomorphic to the finite dimensional $\mb Z$-graded Lie superalgebra
$\tilde H(N\ell,S)$, of Hamiltonian vector fields over the Grassman superalgebra
in $N\ell$ indeterminates $\{\xi_i\}_{i=1}^{N\ell}$, with Poisson bracket $\{\xi_i,\xi_j\}=s_{ij}$,
divided by the central ideal $\mc C1$,
where $S=(s_{ij})$ is a nondegenerate symmetric $N\ell\times N\ell$ matrix over $\mc C$.

We hope that Theorem \ref{20110612:thm} will allow further progress
in the study of the Lenard-Magri scheme (work in progress).
First, it leads to classification of Hamiltonian operators $H$ compatible to $K$,
using techniques and results from \cite{DSKW10}.
Second, it shows that if the elements $z_{n+1}$ in \eqref{0:eq12} are essentially closed,
then they can be removed.

Also, of course, Theorem \ref{20110612:thm} shows that, if \eqref{0:eq14} holds
for a Hamiltonian operator obtained by a formal deformation of $K$,
then this formal deformation is trivial.
%

In conclusion of the paper we discuss the other ``extreme'' -- the translation invariant case -- when $\mc F=\mc C$.
In this case,
we give an upper bound for the dimension of $\mc H^k_K$, for an arbitrary Hamiltonian operator $K$
with coefficients in $\Mat_{\ell\times\ell}(\mc C)$ and invertible leading coefficient,
and we show that this bound is sharp if and only if $K=K_1\partial$, where $K_1$
is a symmetric nondegenerate matrix over $\mc C$.
Since any Hamiltonian operator of hydrodymanic type can be brought,
by a change of variables, to this form,
our result generalizes the results of \cite{LZ11,LZ11pr} on $K$ of hydrodynamic type.
Furthermore, for such operators $K$
we also prove that the essential variational Poisson cohomology is trivial,
and we find a nice description of the $\mb Z$-graded Lie superalgebra $\mc H^\bullet_K$.

We are grateful to Tsinghua University and the Mathematical Sciences Center (MSC), Beijing, where this paper was written,
for their hospitality, and especially Youjin Zhang and Si-Qi Liu for enlightening lectures and discussions.
We also thank the Center of Mathematics and Theoretical Physics (CMTP), Rome, for continuing encouragement and support.


\section{Transitive $\mb Z$-graded Lie superalgebras and prolongations}
\label{sec:before}

Recall \cite{GS64,Kac77} that a $\mb Z$-graded Lie superalgebra $\mf g=\bigoplus_{k\geq-1}\mf g_k$
is called \emph{transitive} if any $a\in\mf g_k,\,k\geq0$, such that $[a,\mf g_{-1}]=0$,
is zero.
Two equivalent definitions are as follows:
\begin{enumerate}[(i)]
\item
There are no nonzero ideals of $\mf g$ contained in $\bigoplus_{k\geq0}\mf g_k$.
\item
If $a\in\mf g_k$ is such that $[\dots[[a,C_0],C_1],\dots,C_k]=0$
for all $C_0,\dots,C_k\in\mf g_{-1}$, then $a=0$.
\end{enumerate}
If a $\mb Z$-graded Lie superalgebra $\mf g=\bigoplus_{k\geq-1}\mf g_k$ is transitive,
the Lie subalgebra $\mf g_0$ acts faithfully on $\mf g_{-1}$,
hence we have an embedding $\mf g_0\to gl(\mf g_{-1})$.

Given a Lie algebra $\mf g$ acting faithfully on a purely odd vector superspace $U$,
one calls a \emph{prolongation} of the pair $(U,\mf g)$
any transitive $\mb Z$-graded Lie superalgebra $\mf g=\bigoplus_{k\geq-1}\mf g_k$
such that $\mf g_{-1}=U$, $\mf g_0=\mf g$,
and the Lie bracket between $\mf g_0$ and $\mf g_{-1}$
is given by the action of $\mf g$ on $U$.
The \emph{full prolongation} of the pair $(U,\mf g)$
is a prolongation containing any other prolongation of $(U,\mf g)$.
It always exists and is unique.
%

\subsection{The $\mb Z$-graded Lie superalgebra $W(n)$}
\label{sec:before.1}

Let $\Lambda(n)$ be the Grassman superalgebra over the field $\mc C$
on odd generators $\xi_1,\dots,\xi_n$.
Let $W(n)$ be the Lie superalgebra of all derivations of the superalgebra $\Lambda(n)$,
with the following $\mb Z$-grading: for $k\geq-1$,
$W_k(n)$ is spanned by derivations of the form
$\xi_{i_1}\dots\xi_{i_{k+1}}\frac{\partial}{\partial \xi_j}$.
In particular, $W_{-1}(n)=\langle\frac{\partial}{\partial \xi_i}\rangle_{i=1}^n=\Pi\mc C^n$,
and $W_0(n)=\langle\xi_i\frac{\partial}{\partial \xi_j}\rangle_{i,j=1}^n\simeq gl(n)$.
It is easy to see that $W(n)$ is the full prolongation of $(\Pi\mc C^n,gl(n))$ \cite{Kac77}.
Consequently, any transitive $\mb Z$-graded Lie superalgebra $\mf g=\bigoplus_{k\geq-1}\mf g_k$,
with $\dim_{\mc C}\mf g_{-1}=n$, embeds in $W(n)$.

\subsection{The $\mb Z$-graded Lie superalgebra $\tilde H(n,S)$}
\label{sec:before.2}

Let $S=(s_{ij})_{i,j=1}^n$ be a symmetric $n\times n$ matrix over $\mc C$.
Consider the following subalgebra of the Lie algebra $gl(n)$:
\begin{equation}\label{20110628:eq1}
so(n,S)=\big\{A\in\Mat{}_{n\times n}(\mc C)\,\big|\,A^TS+SA=0\,,\,\,\Tr(A)=0\big\}\,.
\end{equation}

We endow the Grassman superalgebra $\Lambda(n)$ with a structure of a Poisson superalgebra by letting $\{\xi_i,\xi_j\}_S=s_{ij}$.
A closed formula for the Poisson bracket on $\Lambda(n)$ is
$$
 \{f,g\}_S=(-1)^{p(f)+1}\sum_{i,j=1}^n s_{ij} \frac{\partial f}{\partial \xi_i}\frac{\partial g}{\partial \xi_j}\,.
$$
We introduce a $\mb Z$-grading of the superspace $\Lambda(n)$
by letting $\deg(\xi_{i_1}\dots\xi_{i_s})=s-2$.
Note that this is a Lie superalgebra $\mb Z$ grading $\Lambda(n)=\bigoplus_{k=-2}^{n-2}\Lambda_k(n)$
(but it is not an associative superalgebra grading).
Note also that $\Lambda_{-2}(n)=\mc C1\subset\Lambda(n)$ is a central ideal
of this Lie superalgebra.
Hence $\Lambda(n)/\mc C1$ inherits the structure
of a $\mb Z$-graded Lie superalgebra of dimension $2^n-1$,
which we denote by $\tilde H(n,S)=\bigoplus_{k=-1}^{n-2}\tilde H_k(n,S)$.

The $-1$-st degree subspace is
$\tilde H_{-1}(n,S)=\langle\xi_i\rangle_{i=1}^n\simeq\Pi\mc C^n$,
and the $0$-th degree subspace $\tilde H_0(n,S)=\langle\xi_i\xi_j\rangle_{i,j=1}^n$
is a Lie subalgebra of dimension $\binom{n}{2}$.

Identifying $\tilde H_{-1}(n,S)$ with $\Pi\mc C^n$ (using the basis $\xi_i,\,i=1,\dots,n$)
and $\tilde H_0(n,S)$ with the space of skewsymmetric $n\times n$ matrices over $\mc C$
(via $\xi_i\xi_j\mapsto (E_{ij}-E_{ji})/2$),
the action of $\tilde H_0(n,S)$ on $\tilde H_{-1}(n,S)$ becomes:
$\{A,v\}_S=ASv$.
Note that, if $A$ is skewsymmetric, then $AS$ lies in $so(n,S)$.
Hence, we have a homomorphism of Lie superalgebras:
\begin{equation}\label{20110628:eq2}
\tilde H_{-1}(n,S)\oplus\tilde H_0(n,S)\to\Pi\mc C^n\oplus so(n,S)
\,\,,\,\,\,\,
(v,A)\mapsto (v,AS)
\,.
\end{equation}
\begin{lemma}\label{20110628:lem}
The map \eqref{20110628:eq2} is bijective if and only if $S$ has rank $n$ or $n-1$.
\end{lemma}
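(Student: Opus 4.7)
The map \eqref{20110628:eq2} is the identity on the first summand, so bijectivity reduces to showing that $\phi : A \mapsto AS$ is a bijection from the space of skewsymmetric $n \times n$ matrices to $so(n,S)$. A direct check using $S^T = S$ and $A^T = -A$ confirms $\phi$ is well-defined: $(AS)^T S + S(AS) = -SA^T S + SAS = 0$, and $\Tr(AS) = \Tr((AS)^T) = \Tr(A^T S) = -\Tr(AS)$ forces $\Tr(AS) = 0$.

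The main tool is an orthogonal decomposition $\mc C^n = \ker S \oplus W$ with $S|_W$ nondegenerate, available for symmetric forms in characteristic zero. In an adapted basis $S = \begin{pmatrix} S_1 & 0 \\ 0 & 0 \end{pmatrix}$, where $S_1 \in \Mat_{r \times r}(\mc C)$ is nondegenerate and $r = \mathrm{rank}\,S$. Writing a skewsymmetric $A$ in matching blocks one computes
$$
AS = \begin{pmatrix} A_{11} S_1 & 0 \\ -A_{12}^T S_1 & 0 \end{pmatrix}.
$$
By invertibility of $S_1$, the equation $AS = 0$ forces $A_{11} = 0$ and $A_{12} = 0$, leaving only the skewsymmetric $(n-r) \times (n-r)$ block $A_{22}$ free. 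Hence $\dim\ker\phi = \binom{n-r}{2}$, which vanishes precisely when $r \in \{n, n-1\}$. This already rules out bijectivity whenever $\mathrm{rank}\,S \leq n - 2$.

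For the converse I would verify surjectivity in the two borderline cases. When $r = n$ (so $S$ is invertible), $A := B S^{-1}$ is skewsymmetric by the defining identity of $so(n,S)$ and satisfies $AS = B$. When $r = n - 1$, I would analyze $so(n,S)$ in the same block basis: skewsymmetry of $SB$ forces $B_{12} = 0$ and $S_1 B_{11}$ skew, the latter implying $\Tr B_{11} = 0$ since $S_1$ is nondegenerate; combined with $\Tr B = 0$, the scalar block $b_{22}$ must vanish, so $B = \begin{pmatrix} B_{11} & 0 \\ b_{21} & 0 \end{pmatrix}$. Setting $A_{11} = B_{11} S_1^{-1}$, $a_{12} = -S_1^{-1} b_{21}^T$, and $A_{22} = 0$ yields a skewsymmetric $A$ with $AS = B$.

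The main technical point is the case $r = n - 1$: surjectivity there relies on using the trace constraint in the definition of $so(n,S)$ to pin down $b_{22} = 0$, which is exactly what gives $B$ the block-upper-triangular shape attainable by $\phi$. Without the trace condition in \eqref{20110628:eq1}, the lemma would fail at $r = n - 1$.
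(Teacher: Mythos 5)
Your proof is correct and follows essentially the same route as the paper: both pass to a block form of $S$ adapted to its radical and use the trace condition in \eqref{20110628:eq1} to handle the corank-one case. The only cosmetic differences are that you establish surjectivity at rank $n-1$ by exhibiting an explicit preimage where the paper instead uses the dimension count $\dim so(n,S)=(n-1)+\binom{n-1}{2}=\binom{n}{2}$, and that the intermediate expression $-SA^TS+SAS$ in your well-definedness check should read $SA^TS+SAS$ (the conclusion that it vanishes is unaffected).
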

\begin{proof}
Clearly, if $S$ is nondegenerate, the map \eqref{20110628:eq2} is bijective.
Moreover, if $S$ has rank less than $n-1$, the map \eqref{20110628:eq2} is clearly not injective.
In the remaining case when $S$ has rank $n-1$,
we can assume it has the form
\begin{equation}\label{20110628:eq3}
S=\left(\begin{array}{cc}
0 & 0 \\
0 & T
\end{array}\right)\,,
\end{equation}
where $T$ is a nondegenerate symmetric $(n-1)\times(n-1)$ matrix.
In this case,
one immediately checks that the map \eqref{20110628:eq2} is injective.
Moreover,
$$
so(n,S)
=\Big\{\left(\begin{array}{cc} 0 & B^T \\ 0 & A \end{array}\right)
\,\Big|\, B\in\mc C^{\ell}\,,\,\, A\in so(n-1,T)\Big\}
\,.
$$
Hence, $\dim_{\mc C}so(n,S)=n-1+\binom{n-1}{2}=\binom{n}{2}=\dim_{\mc C}\tilde H_0(n,S)$.
\end{proof}
\begin{proposition}\label{20110622:prop}
If $S$ has rank $n$ or $n-1$, then $\tilde H(n,S)$
is the full prolongation of the pair $(\mc C^n,so(n,S))$.
\end{proposition}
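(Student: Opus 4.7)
The plan is to verify in turn that (a) $\tilde H(n,S)$ is a transitive prolongation of $(\mc C^n, so(n,S))$, and (b) any prolongation of this pair embeds in $\tilde H(n,S)$. Part (a) follows quickly from Lemma~\ref{20110628:lem}; the substance is in part (b).

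For transitivity, let $f \in \Lambda_{k+2}(n)$, $k \geq 0$, satisfy $\{f, \xi_i\}_S = 0$ for all $i$. The Poisson bracket formula reduces this to the condition that the column vector $(\partial f / \partial \xi_j)_j$ lies in $\ker S$. Since $\rk S \in \{n-1, n\}$, after the normalization \eqref{20110628:eq3} from the proof of Lemma~\ref{20110628:lem} we obtain $\partial f / \partial \xi_j = 0$ for $j \geq 2$, so $f$ depends only on $\xi_1$. Since $\xi_1^2 = 0$ and $\deg f \geq 2$, we conclude $f \in \mc C 1$, i.e., $f = 0$ in $\tilde H(n,S)$.

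For maximality, let $\mf g = \bigoplus_{k \geq -1} \mf g_k$ be the full prolongation of $(\mc C^n, so(n,S))$. The embedding $\tilde H(n,S) \hookrightarrow \mf g$ from (a) is the identity in degrees $-1$ and $0$; I induct on $k \geq 1$ to show it is surjective in degree $k$. Given $D \in \mf g_k$, set $g_j = [D, \xi_j] \in \mf g_{k-1}$, which by induction lies in $\tilde H_{k-1}(n,S)$, represented by an element (still denoted $g_j$) of $\Lambda_{k+1}(n) / \mc C 1$. The Jacobi identity applied to $D, \xi_j, \xi_l$, together with $[\xi_j, \xi_l] = 0$ in $\mf g$ (no $-2$ grading), yields the compatibility relation $[g_j, \xi_l] + [g_l, \xi_j] = 0$ in $\tilde H_{k-2}(n,S)$.

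The main task is to produce $f \in \Lambda_{k+2}(n)/\mc C 1$ with $\{f, \xi_j\}_S = g_j$ for all $j$; then $D - f \in \mf g_k$ annihilates $\mf g_{-1}$, and transitivity of $\mf g$ forces $D = f \in \tilde H_k(n,S)$, completing the induction. When $\rk S = n$, the system inverts to $\nabla f = \pm S^{-1} g$ read entry-wise, and the compatibility among the $g_j$ is exactly the odd-variable Poincar\'e integrability condition. When $\rk S = n-1$, the normalized form makes $\xi_1$ central in $\tilde H(n,S)$, so $[g_i, \xi_1] = 0$ for all $i$; the compatibility then gives $[g_1, \xi_i] = -[g_i, \xi_1] = 0$, and transitivity of $\mf g$ forces $g_1 = 0$. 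Solving the remaining nondegenerate sub-system $\{f, \xi_j\}_S = g_j$ for $j \geq 2$ via the invertible block $T$ determines $f$ up to a polynomial in $\xi_1$ alone; such an ambiguity has Grassmann degree $\leq 1$ and is therefore absent in $\Lambda_{k+2}(n)$ for $k \geq 0$. The main obstacle is thus establishing the Poincar\'e-type integrability, which will require a careful odd-variable analog of the classical argument, adapted to the block structure of $S$.
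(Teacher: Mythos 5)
Your route is genuinely different from the paper's. The paper takes the nondegenerate case as known (citing \cite{Kac77}) and handles $\rk(S)=n-1$ by embedding $\tilde H(n,S)$ into $W(n)$ via the map \eqref{20110622:eq1} --- which crucially sends $f\eta\mapsto f\frac{\partial}{\partial\eta}$ rather than to $\{f\eta,\cdot\}_S$ (the latter kills the $\eta$-direction since the first row of $S$ vanishes) --- and then reduces fullness of the prolongation to fullness of $\tilde H(\ell,T)$ for the invertible block $T$. You instead compute the prolongation directly in both cases, reducing degree-$k$ surjectivity to solving $\{f,\xi_j\}_S=g_j$ subject to the Jacobi compatibility $[g_j,\xi_l]+[g_l,\xi_j]=0$. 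All of your reductions check out: the transitivity argument, the equivalence (after inverting $S$, resp.\ $T$) of the compatibility with the odd integrability condition $\partial h_i/\partial\xi_j+\partial h_j/\partial\xi_i=0$, the centrality of $\xi_1$ forcing $g_1=0$ by transitivity, and the vanishing of the ambiguity because polynomials in $\xi_1$ alone have degree $\le 1$. If completed, your argument is more self-contained than the paper's, since it reproves the nondegenerate case rather than quoting it.

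The gap is the one you name yourself: the odd Poincar\'e lemma is never proved, and it is the load-bearing step --- without it neither the $\rk(S)=n$ case nor the reduced $T$-system in the $\rk(S)=n-1$ case is closed. Fortunately it is a one-line Euler-operator argument, not a ``careful adaptation'': if $h_1,\dots,h_m\in\Lambda$ are homogeneous of degree $d\ge 1$ with $\frac{\partial h_i}{\partial\xi_j}+\frac{\partial h_j}{\partial\xi_i}=0$, set $f=\frac{1}{d+1}\sum_i\xi_i h_i$; then $\frac{\partial f}{\partial\xi_j}=\frac{1}{d+1}\big(h_j-\sum_i\xi_i\frac{\partial h_i}{\partial\xi_j}\big)=\frac{1}{d+1}\big(h_j+\sum_i\xi_i\frac{\partial h_j}{\partial\xi_i}\big)=h_j$ by the Euler identity. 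For $\rk(S)=n-1$ the right-hand sides may involve $\xi_1$; writing $h_i=a_i+\xi_1 b_i$ splits the problem into two copies of the lemma in $\xi_2,\dots,\xi_n$, with $\deg a_i=k+1\ge 2$ and $\deg b_i=k\ge 1$, so the Euler trick applies to both. With this paragraph inserted, your proof is complete and correct.
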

\begin{proof}
For $S$ nondegenerate, the proof is can be found in \cite{Kac77}.
We reduce below the case $rk(S)=n-1$ to the case of nondegenerate $S$.
If $\rk(S)=\ell=n-1$, we can choose a basis $\langle\eta,\xi_1,\dots,\xi_\ell\rangle$,
such that the matrix $S$ is of the form \eqref{20110628:eq3}.
Define the map $\varphi_S:\,\tilde H(n,S)\to W(n)$, given by
\begin{equation}\label{20110622:eq1}
\begin{array}{l}
\displaystyle{
\varphi_S(f(\xi_1,\dots,\xi_\ell))
= \{f,\cdot\}_S=(-1)^{p(f)+1}\sum_{i,j=1}^\ell t_{ij} \frac{\partial f}{\partial \xi_i}\frac{\partial}{\partial \xi_j}\,,
} \\
\displaystyle{
\varphi_S(f(\xi_1,\dots,\xi_\ell)\eta)
= f(\xi_1,\dots,\xi_\ell)\frac{\partial}{\partial\eta}\,.
}
\end{array}
\end{equation}
It is easy to check that $\varphi_S$ is an injective homomorphism of $\mb Z$-graded Lie superalgebras.
Hence, we can identify $\tilde H(n,S)$ with its image in $W(n)$.

Since $\varphi_S(\tilde H_{-1}(n,S))=\Pi\mc C^n=W_{-1}(n)$,
the $\mb Z$-graded Lie superalgebra $\varphi_S(\tilde H(n,S))$ (hence $\tilde H(n,S)$) is transitive.
It remains to prove that it is the full prolongation of the pair $(\tilde H_{-1}(n,S),\tilde H_0(n,S))$.
For this, we will prove that, if
$$
X=f_0\frac{\partial}{\partial\eta}+\sum_{i=1}^\ell f_i\frac{\partial}{\partial\xi_i}\in W_k(n)\,,
$$
with $f_i\in\Lambda(n)$, homogenous polynomials of degree $k+1\geq2$, is such that
\begin{equation}\label{20110628:eq5}
\big[\frac\partial{\partial\eta},X\big]
\,\,,\,\,\,\,
\big[\frac\partial{\partial\xi_i},X\big]
\,\in \varphi_S(\tilde H_{k-1}(n,S))
\qquad\,\forall i=1,\dots\ell
\,,
\end{equation}
then $X\in\varphi_S(\tilde H_{k}(n,S))$.
Conditions \eqref{20110628:eq5} imply that all $f_0,\dots,f_\ell$ are polynomials in $\xi_1,\dots,\xi_\ell$ only,
and there exist $g_1,\dots,g_\ell$, polynomials in $\xi_1,\dots,\xi_\ell$,
such that
\begin{equation}\label{20110629:eq1}
\frac{\partial f_j}{\partial \xi_i}=(-1)^{p(g_i)+1}\sum_{k=1}^\ell t_{jk}\frac{\partial g_i}{\partial \xi_k}\,,
\end{equation}
for every $i,j\in\{1,\dots\ell\}$.
On the other hand, the condition that $X\in\varphi_S(\tilde H_{k}(n,S))$ means that
there exists $h$, a polynomial in $\xi_1,\dots,\xi_\ell$, such that
\begin{equation}\label{20110629:eq2}
f_i=
(-1)^{p(h)+1}\sum_{k=1}^\ell t_{ik}\frac{\partial h}{\partial g_k}\,.
\end{equation}
To conclude, we observe that conditions \eqref{20110629:eq1}
imply the existence of $h$ solving equation \eqref{20110629:eq2},
since $\tilde H(\ell,T)$ is a full prolongation.
\end{proof}

\begin{remark}
The notation $\tilde H(n,S)$ comes from the fact that,
if $S$ is nondegenerate, then the derived Lie superalgebra
$H(n,S)=\{\tilde H(n,S),\tilde H(n,S)\}=\bigoplus_{k=-1}^{n-3}\tilde H_k(n,S)$
has codimension 1 in $\tilde H(n,S)$, and it is simple for $n\geq4$.
\end{remark}


\section{Variational Poisson cohomology}
\label{sec:2}

In this section we recall our results from \cite{DSK11} on the variational Poisson cohomology,
in the notation of the present paper.

\subsection{Algebras of differential functions}
\label{sec:2.0}

An \emph{algebra of differential functions} $\mc V$
in one independent variable $x$ and $\ell$ dependent variables $u_i$,
indexed by the set $I=\{1,\dots,\ell\}$,
is, by definition, a differential algebra
(i.e. a unital commutative associative algebra with a derivation $\partial$),
endowed with commuting derivations
$\frac{\partial}{\partial u_i^{(n)}}\,:\,\,\mc V\to\mc V$, for all $i\in I$ and $n\in\mb Z_+$,
such that, given $f\in\mc V$,
$\frac{\partial}{\partial u_i^{(n)}}f=0$ for all but finitely many $i\in I$ and $n\in\mb Z_+$,
and the following commutation rules with $\partial$ hold:
\begin{equation}\label{eq:comm_frac}
\Big[\frac{\partial}{\partial u_i^{(n)}} , \partial\Big] = \frac{\partial}{\partial u_i^{(n-1)}}\,,
\end{equation}
where the RHS is considered to be zero if $n=0$.
An equivalent way to write the identities \eqref{eq:comm_frac} is in terms of generating series:
\begin{equation}\label{eq:comm_frac_b}
\sum_{n\in\mb Z_+}z^n\frac{\partial}{\partial u_i^{(n)}}\circ \partial =
(z+\partial)\circ\sum_{n\in\mb Z_+}z^n\frac{\partial}{\partial u_i^{(n)}}\,.
\end{equation}
As usual we shall denote by $f\mapsto\tint f$ the canonical quotient map $\mc V\to\mc V/\partial\mc V$.

We call $\mc C=\ker(\partial)\subset\mc V$ the subalgebra of \emph{constant functions},
and we denote by $\mc F\subset\mc V$ the subalgebra of \emph{quasiconstant functions},
defined by
\begin{equation}\label{eq:4.2}
\mc F
=
\big\{f\in\mc V\,\big|\,\frac{\partial f}{\partial u_i^{(n)}}=0\,\,\forall i\in I,\,n\in\mb Z_+\big\}\,.
\end{equation}
It is not hard to show \cite{DSK11} that
$\mc C\subset\mc F$,
$\partial\mc F\subset\mc F$,
and
$\mc F\cap\partial\mc V=\partial\mc F$.
Throughout the paper we will assume that $\mc F$ is a field of characteristic zero,
hence so is $\mc C\subset\mc F$.
Unless otherwise specified, all vector spaces, as well as tensor products, direct sums, and Hom's,
will be considered over the field $\mc C$.

One says that $f\in\mc V$ has \emph{differential order} $n$ in the variable $u_i$
if $\frac{\partial f}{\partial u_i^{(n)}}\neq0$
and $\frac{\partial f}{\partial u_i^{(m)}}=0$ for all $m>n$.

The main example of an algebra of differential functions
is the ring of differential polynomials over a differential field $\mc F$,
$R_\ell\,=\,\mc F[u_i^{(n)}\,|\,i\in I,n\in\mb Z_+]$,
where $\partial(u_i^{(n)})=u_i^{(n+1)}$.
Other examples can be constructed starting from $R_\ell$
by taking a localization by some multiplicative subset $S$,
or an algebraic extension obtained by adding solutions of some polynomial equations,
or a differential extension obtained by adding solutions of some differential equations.

The \emph{variational derivative}
$\frac\delta{\delta u}:\,\mc V\to\mc V^{\ell}$
is defined by
\begin{equation}\label{eq:varder}
\frac{\delta f}{\delta u_i}\,:=\,\sum_{n\in\mb Z_+}(-\partial)^n\frac{\partial f}{\partial u_i^{(n)}}\,.
\end{equation}
It follows immediately from \eqref{eq:comm_frac_b} that $\partial\mc V\subset\ker \frac{\delta}{\delta u}$.

A \emph{vector field} is, by definition, a derivation of $\mc V$ of the form
\begin{equation}\label{2006_X}
X=\sum_{i\in I,n\in\mb{Z}_+} P_{i,n} \frac{\partial}{\partial u_i^{(n)}}\,\,,  \quad P_{i,n} \in \mc V\,.
\end{equation}
We denote by $\Vect(\mc V)$ the Lie algebra of all vector fields.
A vector field $X$ is called \emph{evolutionary} if $[\partial,X]=0$,
and we denote by $\Vect^\partial(\mc V)\subset\Vect(\mc V)$ the Lie subalgebra of all evolutionary vector fields.
By \eqref{eq:comm_frac}, a vector field $X$ is evolutionary
if and only if it has the form
\begin{equation}\label{2006_X2}
X_P=\sum_{i\in I,n\in\mb{Z}_+} (\partial^n P_i) \frac{\partial}{\partial u_i^{(n)}}\,,
\end{equation}
where $P=(P_i)_{i\in I}\in\mc V^\ell$, is called the \emph{characteristic} of $X_P$.

Given $P\in\mc V^{\ell}$, we denote by $D_P=\big((D_P)_{ij}(\partial)\big)_{i,j\in I}$
its \emph{Frechet derivative}, given by
\begin{equation}\label{frechet}
(D_P)_{ij}(\partial)=\sum_{n\in\mb Z_+}\frac{\partial P_i}{\partial u^{(n)}_j}\partial^n\,.
\end{equation}

Recall from \cite{BDSK09} that an algebra of differential functions $\mc V$
is called \emph{normal} if we have
$\frac\partial{\partial u_i^{(m)}}\big(\mc V_{m,i}\big)=\mc V_{m,i}$
for all $i\in I,m\in\mb Z_+$,
where we let
\begin{equation}\label{eq:july21_1}
\mc V_{m,i}\,:=\,\Big\{ f\in\mc V\,\Big|\,
\frac{\partial f}{\partial u_j^{(n)}}=0\,\,
\text{ if } (n,j)>(m,i) \text{ in lexicographic order }\Big\}\,.
\end{equation}
We also denote $\mc V_{m,0}=\mc V_{m-1,\ell}$, and $\mc V_{0,0}=\mc F$.

The algebra $R_\ell$ is obviously normal.
Moreover, any its extension $\mc V$ can be further extended to a normal algebra.
Conversely, it is proved in \cite{DSK09} that any normal algebra of differential functions $\mc V$
is automatically a differential algebra extension of $R_\ell$.
Throughout the paper we shall assume that $\mc V$ is an extension of $R_\ell$.

Recall also from \cite{DSK11} that
a differential field $\mc F$ is called \emph{linearly closed}
if any linear differential equation,
$$
a_nu^{(n)}+\dots+a_1u'+a_0u=b\,,
$$
with $n\geq0$, $a_0,\dots,a_n\in\mc F,\, a_n\neq0$,
has a solution in $\mc F$ for every $b\in\mc F$,
and it has a nonzero solution for $b=0$,
provided that $n\geq1$.

\subsection{The universal Lie superalgebra $W^{\var}(\Pi\mc V)$ of variational poly--vector fields}
\label{sec:2.1}

Recall the definition of the universal Lie superalgebra
of variational polyvector fields $W^{\var}(\Pi\mc V)$,
associated to the algebra of differential funtions $\mc V$ \cite{DSK11}.
We let
$$
W^{\var}(\Pi\mc V)=\bigoplus_{k=-1}^\infty W^{\var}_k\,,
$$
where
$W^{\var}_k$ is the superspace of parity $k\mod 2$
consisting of all \emph{skewsymmetric arrays}, i.e. arrays of polynomials
\begin{equation}\label{100518:eq2}
P=\big(P_{i_0,\dots,i_k}(\lambda_0,\dots,\lambda_k)\big)_{i_1,\dots,i_k\in I}\,,
\end{equation}
where $P_{i_0,\dots,i_k}(\lambda_0,\dots,\lambda_k)\in\mc V[\lambda_0,\dots,\lambda_k]/(\partial+\lambda_0+\dots+\lambda_k)$
are skewsymmetric with respect to simultaneous permutations of the variables $\lambda_0,\dots,\lambda_k$
and the indices $i_0,\dots,i_k$.
By $\mc V[\lambda_0,\dots,\lambda_k]/(\partial+\lambda_0+\dots+\lambda_k)$
we mean the quotient of the space $\mc V[\lambda_0,\dots,\lambda_k]$
by the image of the operator $\partial+\lambda_0+\dots+\lambda_k$.
Clearly, for $k=-1$ this space is $\mc V/\partial\mc V$
and, for $k\geq0$, we can identify it with
the algebra of polynomials $\mc V[\lambda_0,\dots,\lambda_{k-1}]$
by letting
$$
\lambda_k=-\lambda_0-\dots-\lambda_{k-1}-\partial\,,
$$
with $\partial$ acting from the left.
We then define the following $\mb Z$-graded Lie superalgebra bracket on $W^{\var}(\Pi\mc V)$.
For $P\in W^{\var}_h$ and $Q\in W^{\var}_{k-h}$, with $-1\leq h\leq k+1$, we let
$[P,Q]:=P\Box Q-(-1)^{h(k-h)}Q\Box P$,
where $P\Box Q\in W^{\var}_{k}$ is zero if $h=k-h=-1$, and otherwise it is given by
\begin{equation}\label{100418:eq2}
\begin{array}{l}
\displaystyle{
\big(P\Box Q\big)_{i_0,\dots,i_k}(\lambda_0,\dots,\lambda_k)
= \sum_{\sigma\in\mc S_{h,k}}
\sign(\sigma)
\sum_{j\in I,n\in\mb Z_+}
} \\
\displaystyle{
P_{j,i_{\sigma(k-h+1)},\dots,i_{\sigma(k)}}(\lambda_{\sigma(0)}+\dots+\lambda_{\sigma(k-h)}+\partial,\lambda_{\sigma(k-h+1)},\dots,\lambda_{\sigma(k)})_\to
} \\
\displaystyle{
(-\lambda_{\sigma(0)}-\dots-\lambda_{\sigma(k-h)}-\partial)^n
\frac{\partial}{\partial u_j^{(n)}}
Q_{i_{\sigma(0)},\dots,i_{\sigma(k-h)}}(\lambda_{\sigma(0)},\dots,\lambda_{\sigma(k-h)})
\,,
}
\end{array}
\end{equation}
where $\mc S_{h,k}$ denotes the set of $h$-\emph{shuffles} in the group $S_{k+1}=Perm\{0,\dots,k\}$,
i.e. the permutations $\sigma$ satisfying
$$
\sigma(0)<\cdots <\sigma(k-h)
\,\,,\,\,\,\,
\sigma(k-h+1)<\cdots< \sigma(k)\,.
$$
The arrow in \eqref{100418:eq2} means that $\partial$ should be moved to the right.
Note that, by the skewsymmetry conditions on $P$ and $Q$,
we can replace the sum over shuffles by the sum over the whole
permutation group $S_{k+1}$,
provided that we divide by $h!(k-h+1)!$.
It follows from Proposition 9.1 and the identification (9.22) in \cite{DSK11},
that the box product \eqref{100418:eq2} is well defined
and the corresponding commutator makes $W^{\var}(\Pi\mc V)$
into a $\mb Z$-graded Lie superalgebra.

\begin{remark}
In \cite{DSK11} we identified $W^{var}(\Pi\mc V)$ with the quotient space
$\Omega^\bullet(\mc V)=\tilde\Omega^\bullet(\mc V)/\partial\tilde\Omega^\bullet(\mc V)$,
where $\tilde\Omega^\bullet(\mc V)$ is the commutative associative unital superalgebra
freely generated over $\mc V$ by odd generators $\theta_i^{(m)}=\delta u_i^{(m)},\,i\in I,m\in\mb Z_+$,
and where $\partial:\,\tilde\Omega^\bullet(\mc V)\to\tilde\Omega^\bullet(\mc V)$
extends $\partial:\mc V\to\mc V$ to an even derivation such that $\partial\theta_i^{(m)}=\theta_i^{(m+1)}$.
This identification is given by mapping the array
$$
P=\Big(\sum_{m_0,\dots,m_k\in\mb Z_+}f^{m_0,\dots,m_k}_{i_0,\dots,i_k}\lambda_0^{m_0}\dots\lambda_k^{m_k}\Big)_{i_0,\dots,i_k\in I}\,\in W^{var}_k
$$
to the element
$$
\int\sum_{i_0,\dots,i_k\in I}\sum_{m_0,\dots,m_k\in\mb Z_+} f^{m_0,\dots,m_k}_{i_0,\dots,i_k}\theta_{i_0}^{(m_0)}\dots\theta_{i_k}^{(m_k)}
\,\in\Omega^{k+1}(\mc V)\,.
$$
(It is easy to see that this map is well defined and bijective.) Here $\tint$ denotes, as usual, the quotient map
$\tilde\Omega^\bullet(\mc V)\to\tilde\Omega^\bullet(\mc V)/\partial\tilde\Omega^\bullet(\mc V)=\Omega^\bullet(\mc V)$.
We extend the variational derivative to a map
$$
\frac{\delta}{\delta u_i}=\sum_{n\in\mb Z_+}(-\partial)^n\circ\frac{\partial}{\partial u_i^{(n)}}:\,\Omega^k(\mc V)\to\Omega^{k+1}(\mc V)
\,,
$$
by letting $\frac{\partial}{\partial u_i^{(n)}}$ acts on coefficients ($\in\mc V$).
Furthermore, we introduce the odd variational derivatives
$$
\frac{\delta}{\delta \theta_i}=\sum_{n\in\mb Z_+}(-\partial)^n\circ\frac{\partial}{\partial\theta_i^{(n)}}:\,\Omega^k(\mc V)\to\Omega^{k+1}(\mc V)
\,.
$$
Then the box product \eqref{100418:eq2} takes, under the identification $W^{var}(\Pi\mc V)\simeq\Omega^\bullet(\mc V)$,
the following simple form \cite{Get02}:
$$
P\Box Q=
\sum_{i\in I}\frac{\delta P}{\delta\theta_i}\frac{\delta Q}{\delta u_i}\,.
$$
\end{remark}

We describe explicitly the spaces $W^{\var}_k$ for $k=-1,0,1$.
Clearly, $W^{\var}_{-1}=\mc V/\partial\mc V$.
Also $W^{\var}_0=\mc V^\ell$
thanks to the obvious identification of $\mc V[\lambda]/(\partial+\lambda)$ with $\mc V$.
Finally,
the space $\mc V[\lambda,\mu]/(\partial+\lambda+\mu)$ is identified with
$\mc V[\lambda]\simeq\mc V[\partial]$,
by letting $\mu=-\partial$ moved to the left and $\lambda=\partial$ moved to the right.
Hence elements in $W^{\var}_1$ correspond to $\ell\times\ell$ matrix differential operators
over $\mc V$,
and the skewsymmetry condition for an element of $W^{\var}_1$
translates into the skewadjointness of the corresponding matrix differential operator
(i.e. to the condition $H^*_{ji}(\partial)=-H_{ij}(\partial)$,
where, as usual, for a differential operator $L(\partial)=\sum_nl_n\partial^n$,
its adjoint is $L^*(\partial)=\sum_n(-\partial)^n\circ l_n$).
In order to keep the same identification as in \cite{DSK11},
we associate to the array $P=\big(P_{ij}(\lambda,\mu)\big)_{i,j\in I}\in W^{\var}_1$,
the following skewadjoint $\ell\times\ell$ matrix differential operator
$H=\big(H_{ij}(\partial)\big)_{i,j\in I}$, where
\begin{equation}\label{20110608:eq3}
H_{ij}(\lambda)=P_{ji}(\lambda,-\lambda-\partial)\,,
\end{equation}
and $\partial$ acts from the left.

Next, we write some explicit formulas for the Lie brackets in $W^{\var}(\Pi\mc V)$.
Since $\mc S_{-1,k}=\emptyset$ and $\mc S_{k+1,k}=\{1\}$, we have,
for $\tint h\in \mc V/\partial\mc V=W^{\var}_{-1}$ and $Q\in W^{\var}_{k+1}$:
\begin{equation}\label{20110612:eqf2}
\begin{array}{l}
\displaystyle{
[\tint h,Q]_{i_0,\dots,i_k}(\lambda_0,\dots,\lambda_k)
=
(-1)^{k}[Q,\tint h]_{i_0,\dots,i_k}(\lambda_0,\dots,\lambda_k)
} \\
\displaystyle{
= (-1)^k
\sum_{j\in I}
Q_{j,i_{0},\dots,i_{k}}(\partial,\lambda_{0},\dots,\lambda_{k})_\to
\frac{\delta h}{\delta u_j}
\,,
}
\end{array}
\end{equation}
In particular, $[\tint h,\tint f]=0$ for $\tint f\in\mc V/\partial\mc V$.
For $Q\in\mc V^\ell=W^{\var}_{0}$ we have
\begin{equation}\label{20110613:eq3}
[Q,\tint h]=-[\tint h,Q]=\sum_{j\in I}\int Q_j\frac{\delta h}{\delta u_j}
=\tint X_Q(h)\,,
\end{equation}
where $X_Q$ is the evolutionary vector field with characteristics $Q$, defined in \eqref{2006_X2}.
Furthermore, for $H=\big(H_{ij}(\partial)\big)_{i,j\in I}\in W^{\var}_1$ (via the identification \eqref{20110608:eq3}), we have
\begin{equation}\label{20110608:eq1}
[H,\tint h]=H(\partial)\frac{\delta h}{\delta u}\in\mc V^\ell\,.
\end{equation}

Since $\mc S_{0,k}=\{1\}$ and $\mc S_{k,k}=\{(\alpha,0,\stackrel{\alpha}{\check{\dots}},k)\}_{\alpha=0}^k$, we have,
for $P\in \mc V^\ell=W^{\var}_0$ and $Q\in W^{\var}_k$,
$$
\begin{array}{l}
\displaystyle{
[P,Q]_{i_0,\dots,i_k}(\lambda_0,\dots,\lambda_k)
=
X_P\big(
Q_{i_0,\dots,i_k}(\lambda_0,\dots,\lambda_k)
\big)
}\\
\displaystyle{
-\sum_{\alpha=0}^k
\sum_{j\in I,n\in\mb Z_+}
Q_{i_0,\dots,\stackrel{\alpha}{\check{j}},\dots,i_k}(\lambda_0,\dots,\lambda_\alpha+\partial,\dots,\lambda_k)_\to
(-\lambda_\alpha-\partial)^n
\frac{\partial P_{i_\alpha}}{\partial u_j^{(n)}} \,.
}
\end{array}
$$
In particular, for $Q\in\mc V^\ell=W^{\var}_0$, we get the usual commutator of evolutionary vector fields:
$$
[P,Q]_i=X_P(Q_i)-X_Q(P_i)\,,
$$
while, for a skewadjoint $\ell\times\ell$ matrix differential operator $H(\partial)\in W^{\var}_1$,
we get
\begin{equation}\label{20110608:eq2}
\begin{array}{c}
\displaystyle{
[P,H](\partial)
=
X_P\big(H(\partial)\big)
- D_P(\partial)\circ H(\partial) - H(\partial)\circ D^*_P(\partial)\,,
}
\end{array}
\end{equation}
where, in the first term of the RHS, $X_P(H(\partial))$ denotes the $\ell\times\ell$ matrix differential operator
whose $(i,j)$ entry is obtained by applying $X_P$ to the coefficients of the differential operator $H_{ij}(\partial)$.
In the last two terms of the RHS of \eqref{20110608:eq2}, $D_P$ denotes the Frechet derivative of $P$, defined in \eqref{frechet},
and $D_P^*$ is its adjoint matrix differential operator.

Finally, we write equation \eqref{100418:eq2} in the case when $h=1$.
Since $\mc S_{1,k}=\{(0,\stackrel{\alpha}{\check{\dots}},k,\alpha)\}_{\alpha=0}^k$
and $\mc S_{k-1,k}=\{(\alpha,\beta,0,\stackrel{\alpha}{\check{\dots}}\stackrel{\beta}{\check{\dots}},k)\}_{0\leq\alpha<\beta\leq k}^k$,
we have, for a skewadjoint matrix differential operator
$H=\big(H_{ij}(\partial)\big)_{i,j\in I}\in W^{\var}_1$ (via the identificatino \eqref{20110608:eq3}) and for $P\in W^{\var}_{k-1}$:
\begin{equation}\label{20110611:eq1}
\begin{array}{l}
\displaystyle{
[H,P]_{i_0,\dots,i_k}(\lambda_0,\dots,\lambda_k)
= (-1)^{k+1}
\sum_{j\in I,n\in\mb Z_+}
\sum_{\alpha=0}^k(-1)^{\alpha}
} \\
\displaystyle{
\Bigg(
\frac{
\partial
P_{i_0,\stackrel{\alpha}{\check{\dots}},i_k}(\lambda_0,\stackrel{\alpha}{\check{\dots}},\lambda_k)
}{\partial u_j^{(n)}}
(\lambda_\alpha+\partial)^n H_{j,i_\alpha}(\lambda_\alpha)
+ \sum_{\beta=\alpha+1}^k (-1)^{\beta}
} \\
\displaystyle{
\times
P_{j,i_0,\stackrel{\alpha}{\check{\dots}}\stackrel{\beta}{\check{\dots}},i_k}
(\lambda_\alpha+\lambda_\beta+\partial,\lambda_0,\stackrel{\alpha}{\check{\dots}}\stackrel{\beta}{\check{\dots}},\lambda_k)_\to
(-\lambda_\alpha-\lambda_\beta-\partial)^n
\frac{
\partial
H_{i_\beta,i_\alpha}(\lambda_\alpha)
}{\partial u_j^{(n)}}
\Bigg)\,.
}
\end{array}
\end{equation}
In particular, if $K=\big(K_{ij}(\partial)\big)_{i,j\in I}\in W^{\var}_1$, we have
$[K,H]=[H,K]=K\Box H+H\Box K$, where
\begin{equation}\label{20110611:eq4}
\begin{array}{l}
\displaystyle{
(K\Box H)_{i_0,i_1,i_2}(\lambda_0,\lambda_1,\lambda_2)
=
\sum_{j\in I,n\in\mb Z_+}
\Big(
\frac{\partial H_{i_0,i_1}(\lambda_1)}{\partial u_j^{(n)}}(\lambda_2+\partial)^n K_{j,i_2}(\lambda_2)
} \\
\displaystyle{
+\frac{\partial H_{i_1,i_2}(\lambda_2)}{\partial u_j^{(n)}}(\lambda_0+\partial)^n K_{j,i_0}(\lambda_0)
+\frac{\partial H_{i_2,i_0}(\lambda_0)}{\partial u_j^{(n)}}(\lambda_1+\partial)^n K_{j,i_1}(\lambda_1)
\Big)
\,.
}
\end{array}
\end{equation}
\begin{remark}\label{20110608:rem1}
Given a skewadjoint matrix differential operator $H=\big(H_{ij}(\partial)\big)$,
we can define the corresponding ``variational'' $\lambda$-brackets
$\{\cdot\,_\lambda\,\cdot\}_H:\,\mc V\times\mc V\to\mc V[\lambda]$,
given by the following formula (cf. \cite{DSK06}):
\begin{equation}\label{100502:eq2}
\{f_\lambda g\}=
\sum_{i,j\in I,m,n\in\mb Z_+}
\frac{\partial g}{\partial u_j^{(n)}}(\lambda+\partial)^n
H_{ji}(\lambda+\partial)
(-\lambda-\partial)^m\frac{\partial f}{\partial u_i^{(m)}}\,.
\end{equation}
One can write the above formulas in this language (cf. \cite{DSK11}).
\end{remark}

\begin{proposition}\label{20110613:prop2}
The $\mb Z$-graded Lie superalgebra $W^{\var}(\Pi\mc V)$ is transitive,
hence it is a prolongation of the pair $(\Pi\mc V/\partial\mc V,\Vect^\partial(\mc V))$.
\end{proposition}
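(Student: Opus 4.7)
The plan is to verify the transitivity condition from Section~\ref{sec:before}: for $P\in W^{\var}_k$ with $k\geq 0$, if $[\tint h,P]=0$ in $W^{\var}_{k-1}$ for all $\tint h\in\mc V/\partial\mc V=W^{\var}_{-1}$, then $P=0$. Once this is established, the prolongation statement follows at once from the identifications of Section~\ref{sec:2.1} of the pair $(W^{\var}_{-1},W^{\var}_0)$ with $(\Pi\mc V/\partial\mc V,\Vect^\partial(\mc V))$, combined with \eqref{20110613:eq3}, which shows that the $W^{\var}_0$-action on $W^{\var}_{-1}$ is precisely the Lie derivative action of evolutionary vector fields on local functionals.

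To prove transitivity, I would use the equivalent iterated formulation (ii): apply $[\,\cdot\,,\tint h_\alpha]$ to $P$ for $\alpha=0,\dots,k$, which by formula \eqref{20110612:eqf2} corresponds, up to sign, to iterated contraction by variational derivatives. The bookkeeping is much cleaner in the identification $W^{\var}(\Pi\mc V)\simeq\Omega^\bullet(\mc V)$ of the Remark, in which the box product has the compact form $P\Box Q=\sum_i\frac{\delta P}{\delta\theta_i}\frac{\delta Q}{\delta u_i}$, so that $[\tint h,\cdot\,]$ is (up to sign) the operator $\sum_i\frac{\delta(\cdot)}{\delta\theta_i}\cdot\frac{\delta h}{\delta u_i}$; applying it $k+1$ times to $P\in W^{\var}_k\simeq\Omega^{k+1}$ lands in $\Omega^0=\mc V/\partial\mc V$. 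I would then specialize the test functions to $h_\alpha=u_{j_\alpha}f_\alpha$ with $f_\alpha\in\mc F$ quasi-constant, so that $\frac{\delta h_\alpha}{\delta u_j}=\delta_{j,j_\alpha}f_\alpha$, reducing the vanishing condition to an identity of the form
$$\tint P_{j_0,\dots,j_k}(\partial,\dots,\partial)_\to(f_0\cdots f_k)\;=\;0\quad\text{in }\mc V/\partial\mc V,$$
required for every tuple $(j_0,\dots,j_k)\in I^{k+1}$ and every $f_0,\dots,f_k\in\mc F$.

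The final step is to conclude that this pairing being zero forces $P_{j_0,\dots,j_k}=0$ in $\mc V[\vec\lambda]/(\partial+\sum\lambda_i)$. Writing $P_{j_0,\dots,j_k}(\vec\lambda)=\sum_{\vec n}p_{\vec n}\lambda_0^{n_0}\cdots\lambda_k^{n_k}$ with $p_{\vec n}\in\mc V$ and unpacking, the identity becomes $\tint\sum_{\vec n}p_{\vec n}(\partial^{n_0}f_0)\cdots(\partial^{n_k}f_k)\in\partial\mc V$ for all $f_\alpha\in\mc F$. Since $\mc V$ is an extension of $R_\ell=\mc F[u_i^{(n)}]$, I can separate the coefficients $p_{\vec n}$ by differentiating with respect to the $u_i^{(m)}$ (to which the $f_\alpha\in\mc F$ are transparent), and extract the individual multi-indices by a Vandermonde-style independence argument that varies the $f_\alpha$ inside the differential field $\mc F$. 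The main obstacle will be executing this algebraic extraction cleanly in the presence of the quotient relation $\partial+\sum\lambda_i=0$, because the representative $p_{\vec n}$ is only well-defined modulo terms of the form $(\partial+\sum\lambda_i)R(\vec\lambda)$; one has to check that the ambiguity this introduces matches exactly the $\partial\mc V$-indeterminacy on the right-hand side of the paired identity, so that the conclusion $P_{j_0,\dots,j_k}=0$ holds in the correct quotient space. The shuffle/sign bookkeeping from iterating \eqref{100418:eq2} is the secondary technical difficulty, most cleanly avoided by working throughout in the $\Omega^\bullet(\mc V)$ model.
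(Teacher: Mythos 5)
Your reduction step is where the argument breaks. You restrict the test Hamiltonians to $h_\alpha=u_{j_\alpha}f_\alpha$ with $f_\alpha\in\mc F$, so the only data you ever pair $P$ against are quasiconstant vectors $\frac{\delta h_\alpha}{\delta u}\in\mc F^\ell$. But Proposition \ref{20110613:prop2} carries no hypothesis on $\mc F$ beyond its being a differential field; in particular $\mc F$ may equal $\mc C$ (the translation invariant case of Section \ref{sec:5}), and then $\partial^nf_\alpha=0$ for all $n\geq1$, so your identity $\tint\sum_{\vec n}p_{\vec n}(\partial^{n_0}f_0)\cdots(\partial^{n_k}f_k)=0$ only probes the single coefficient $p_{0,\dots,0}$. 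Concretely, for $\ell=1$, $\mc V=\mc C[u,u',\dots]$ and $P=\partial\in W^{\var}_1$, one already has $[P,\tint uf_0]=\partial f_0=0$ for every $f_0\in\mc F=\mc C$, so all your iterated brackets vanish on a nonzero element and no Vandermonde-type argument can recover the conclusion. A separation argument over $\mc F$ of the kind you envisage is exactly what Lemma \ref{20110612:lem2} supplies in the proof of Theorem \ref{20110612:thm}, but it requires $\mc F$ to be linearly closed (so that $\ker K(\partial)$ has a nondegenerate Wronskian), a hypothesis absent here.

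The paper instead tests against functions that are nonlinear in $u$ and of high differential order: $f=\frac{(-1)^M}{2}(u_j^{(M)})^2$, whose variational derivative is $\delta_{kj}u_j^{(2M)}$. A single bracket with $W^{\var}_{-1}$, via \eqref{20110612:eqf2}, exhibits each component of $P$ as a matrix differential operator applied to $\frac{\delta h}{\delta u}$, and applying $\frac{\partial}{\partial u_j^{(2M+N)}}$ for $M$ large isolates the leading coefficient; the only hypothesis used is $\mc V\supseteq R_\ell$. This single-bracket route also sidesteps your secondary worry about the relation $\partial+\sum_i\lambda_i=0$: for $P\in W^{\var}_k$ with $k\geq1$ the bracket lands in $W^{\var}_{k-1}$, which is a space of genuine polynomial arrays after the identification $\lambda_{k-1}=-\lambda_0-\dots-\lambda_{k-2}-\partial$, so the $\partial\mc V$-ambiguity only appears in the easy case $k=0$. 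Your overall architecture could be repaired by replacing the linear quasiconstant test functions with these quadratic ones, but as written the key extraction step fails.
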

\begin{proof}
First note that, if $H(\partial)$ is an $\ell\times\ell$ matrix differential operator such that
$H(\partial)\frac{\delta f}{\delta u}=0$ for every $f\in\mc V$, then $H(\partial)=0$ (cf. \cite{BDSK09}).
Indeed, if $H(\partial)$ has order $N$ and $H_{ij}(\partial)=\sum_{n=0}^Nh_{ij;n}\partial^n$ with $h_{ij;N}\neq0$,
then letting $f=\frac{(-1)^M}{2}(u_j^{(M)})^2$, we have
$\frac{\delta f}{\delta u_k}=\delta_{k,j}u_j^{(2M)}$
and, for $M$ sufficiently large, $\frac{\partial}{\partial u_j^{(2M+N)}}\big(H(\partial)\frac{\delta f}{\delta u}\big)_i=h_{ij;N}\neq0$
(here we are using the assumption that $\mc V$ contains $R_\ell$).
The claim follows immediately by this observation and equation \eqref{20110612:eqf2}.
\end{proof}

\subsection{The cohomology complex $(W^{\var}(\Pi\mc V),\delta_K)$}
\label{sec:2.3}

Let $K=\big(K_{ij}(\partial)\big)_{i,j\in I}\in W^{\var}_1$ be a \emph{Hamiltonian operator},
i.e. $K$ is skewadjoint and $[K,K]=0$.
Then $(\ad K)^2=0$, and we can consider
the associated \emph{variational Poisson cohomology complex} $(W^{\var}(\Pi\mc V),\ad K)$.
Let $\mc Z_K^\bullet(\mc V)=\bigoplus_{k=-1}^\infty\mc Z_K^k$,
where $\mc Z_K^k=\ker\big(\ad K\big|_{W^{\var}_k}\big)$,
and $\mc B_K^\bullet(\mc V)=\bigoplus_{k=-1}^\infty\mc B_K^k$,
where $\mc B_K^k=(\ad K)\big(W^{\var}_{k-1}\big)$.
Then $\mc Z_K^\bullet(\mc V)$ is a $\mb Z$-graded subalgebra
of the Lie superalgebra $W^{\var}(\Pi\mc V)$,
and $\mc B_K^\bullet(\mc V)$ is a $\mb Z$-graded ideal of $\mc Z_K^\bullet(\mc V)$.
Hence, the corresponding \emph{variational Poisson cohomology}
$$
\mc H_K^\bullet(\mc V)=\bigoplus_{k=-1}^\infty \mc H_K^k
\,\,,\,\,\,\,
\mc H_K^k=\mc Z_K^k\big/\mc B_K^k\,,
$$
is a $\mb Z$-graded Lie superalgebra.

In the special case when $K=\big(K_{ij}(\partial)\big)_{i,j\in I}$ has coefficients in $\mc F$,
which, as in \cite{DSK11}, we shall call a \emph{quasiconstant} $\ell\times\ell$ matrix differential operator,
formula \eqref{20110611:eq1} for the differential $\delta_K=\ad K$ becomes
for $P\in W^{\var}_{k-1},\,k\geq0$,
\begin{equation}\label{20110611:eq2}
\begin{array}{l}
\displaystyle{
(\delta_K P)_{i_0,\dots,i_k}(\lambda_0,\dots,\lambda_k)
} \\
\displaystyle{
= (-1)^{k+1}
\sum_{j\in I,n\in\mb Z_+}
\sum_{\alpha=0}^k(-1)^{\alpha}
\frac{
\partial
P_{i_0,\stackrel{\alpha}{\check{\dots}},i_k}(\lambda_0,\stackrel{\alpha}{\check{\dots}},\lambda_k)
}{\partial u_j^{(n)}}
(\lambda_\alpha+\partial)^n K_{j,i_\alpha}(\lambda_\alpha)
\,.
}
\end{array}
\end{equation}
In fact,
as shown in \cite[Prop.9.9]{DSK11}, if $K=\big(K_{ij}(\partial)\big)_{i,j\in I}$ is an arbitrary
quasiconstant $\ell\times\ell$ matrix differential operator
(not necessarily skewadjoint), then the same formula \eqref{20110611:eq2}
still gives a well defined linear map
$\delta_K:\,W^{\var}_{k-1}\to W^{\var}_k,\,k\geq0$, such that $\delta_K^2=0$.
Hence, we get a cohomology complex $(W^{\var}(\Pi\mc V),\delta_K)$.
As before, we denote
$\mc Z_K^k=\ker\big(\delta_K\big|_{W^{\var}_k}\big)\,,
\mc B_K^k=\delta_K\big(W^{\var}_{k-1}\big)$
and $\mc H_K^k=\mc Z_K^k\big/\mc B_K^k$.

For example,
$\mc H^{-1}_K=\mc Z^{-1}_K=
\big\{\tint f\in\mc V/\partial\mc V\,\Big|\,K^*(\partial)\frac{\delta f}{\delta u}=0\Big\}$,
which is called the set of \emph{central elements} (or \emph{Casimir elements}) of $K^*$.
Next, we have (see \cite{DSK11}):
$$
\mc B^0_K
=
\Big\{K^*(\partial)\frac{\delta f}{\delta u}
\Big\}_{f\in\mc V}
\,,\,\,
\mc Z^0_K
=
\Big\{P\in\mc V^\ell
\,\Big|\,
D_P(\partial)\circ K(\partial)=K^*(\partial)\circ D_P^*(\partial)
\Big\}\,.
$$
Furthermore, given $P\in\mc V^\ell=W^{\var}_0$, the element $\delta_KP\in W^{\var}_1$,
under the identification \eqref{20110608:eq3} of $W^{\var}_1$ with the space of $\ell\times\ell$
skewadjoint matrix differential operators, coincides with
\begin{equation}\label{20110612:eq3}
\delta_KP=D_P(\partial)\circ K(\partial)-K^*(\partial)\circ D_P^*(\partial)\,.
\end{equation}
Hence, $\mc B^1_K=\big\{D_P(\partial)\circ K(\partial)-K^*(\partial)\circ D_P^*(\partial)\big\}_{P\in\mc V^\ell}$.
Finally, $\mc Z^1_K$ consists, under the same identification,
of the $\ell\times\ell$ skewadjoint matrix differential operators $H(\partial)$
for which the RHS of \eqref{20110611:eq4} is zero.

\begin{remark}
If $\tint f,\tint g\in\mc V/\partial\mc V$, we have
$[\tint f,\tint g]=0$ and
$$
[\delta_K\tint f,\tint g]-[\tint f,\delta_K\tint g]
=\tint \Big(-\frac{\delta g}{\delta u}K^*(\partial)\frac{\delta f}{\delta u}
-\frac{\delta f}{\delta u}K^*(\partial)\frac{\delta g}{\delta u}\Big)\,.
$$
Hence,
the differential $\delta_K$ in \eqref{20110611:eq2} is not an odd derivation unless $K(\partial)$
is skewadjoint.
In particular, the corresponding cohomology $\mc H^\bullet_K(\mc V)$ does not have
a natural structure of a Lie superalgebra unless $K(\partial)$ is a skewadjoint operator.
\end{remark}

\subsection{The variational Poisson cohomology $H(W^{\var}(\Pi\mc V),\delta_K)$ for a quasiconstant matrix differential operator $K(\partial)$}
\label{sec:2.4}

Let $\mc V$ be an algebra of differential functions extension of $R_\ell$, the algebra of differential polynomials
in the differential variables $u_1,\dots,u_\ell$ over a differential field $\mc F$.
Let $K=\big(K_{ij}(\partial)\big)_{i,j\in I}$ be a quasiconstant $\ell\times\ell$ matrix differential operator
of order $N$ (not necessarily skewadjoint).
For $k\geq-1$, we denote by $\mc A_K^k\subset W^{\var}_k$
the subset consisting of arrays of the form
\begin{equation}\label{20110612:eqf1}
\Big(\sum_{j\in I}
\big[
P_{j,i_0,\dots,i_k}(\lambda_0,\dots,\lambda_k)u_j\big]
\Big)_{i_0,\dots,i_k\in I}\,,
\end{equation}
where $[x]$ denotes the coset of $x\in\mc V[\lambda_0,\dots,\lambda_k]$ modulo $(\lambda_0+\dots+\lambda_k+\partial)\mc V[\lambda_0,\dots,\lambda_k]$,
satisfying the following properties.
For $j,i_0,\dots,i_k\in I$,
$P_{j,i_0,\dots,i_k}(\lambda_0,\dots,\lambda_k)$ are polynomials in $\lambda_0,\dots,\lambda_k$
with coefficients in $\mc F$ of degree at most $N-1$ in each variable $\lambda_i$,
skewsymmetric with respect to simultaneous permutations of the indices $i_0,\dots,i_k$,
and the variables $\lambda_0,\dots,\lambda_k$,
and satisfying the following condition:
\begin{equation}\label{20110611:eq3}
\begin{array}{r}
\displaystyle{
\sum_{\alpha=0}^{k+1}(-1)^\alpha\sum_{j\in I}
P_{j,i_0,\stackrel{\alpha}{\check{\dots}},i_{k+1}}(\lambda_0,\stackrel{\alpha}{\check{\dots}},\lambda_{k+1})K_{j i_\alpha}(\lambda_\alpha)
\equiv 0
} \\
\displaystyle{
\text{mod } (\lambda_0+\dots+\lambda_{k+1}+\partial)\mc F[\lambda_0,\dots,\lambda_{k+1}]\,.
}
\end{array}
\end{equation}

For example, $\mc A^{-1}_K$ consists of elements of the form
$\sum_{j\in I}\tint P_ju_j\in\mc V/\partial\mc V$, where $P\in\mc F^\ell$ solves
the equation
$$
K^*(\partial)P=0\,.
$$
In fact it is not hard to show that $\mc A^{-1}_K$ coincides with the set $\mc Z^{-1}_K$ of central elements of $K^*$
(see Lemma \ref{20110612:lemf} below).

Next, $\mc A^0_K$ consists of elements of the form
$\big(\sum_{j\in I}P^*_{ij}(\partial)u_j\big)_{i\in I}\in\mc V^\ell=W^{\var}_0$,
where $P=\big(P_{ij}(\partial)\big)_{i,j\in I}$
is a quasiconstant $\ell\times\ell$ matrix differential operator of order at most $N-1$,
solving the following equation:
\begin{equation}\label{110226:eq5}
K^*(\partial)\circ P(\partial)=P^*(\partial)\circ K(\partial)\,.
\end{equation}

The description of the set $\mc A^1_K$ is more complicated. Given a polynomial in two variables
$P(\lambda,\mu)=\sum_{m,n=0}^Nc_{mn}\lambda^m\mu^n\in\mc F[\lambda,\mu]$,
we denote $P^{*1}(\lambda,\mu)=\sum_{m,n=0}^N(-\lambda-\partial)^m c_{mn}\mu^n$,
and $P^{*2}(\lambda,\mu)=\sum_{m,n=0}^N(-\mu-\partial)^n c_{mn}\lambda^m$.
Then, under the identification of $W^{\var}_1$ with the space of skewadjoint $\ell\times\ell$ matrix differential operators
given by \eqref{20110608:eq3},
$\mc A^1_K$ consists of operators $H=\big(H_{ij}(\partial)\big)_{i,j\in I}$
of the form
$$
H_{ij}(\lambda)=-\sum_{k\in I}P^*_{kij}(\lambda+\partial,\lambda)u_k\,,
$$
where, for $i,j,k\in I$, $P_{kij}(\lambda,\mu)\in\mc F[\lambda,\mu]$
are polynomials of degree at most $N-1$ in each variable,
such that $P_{kij}(\lambda,\mu)=-P_{kji}(\mu,\lambda)$,
and such that
$$
\begin{array}{c}
\displaystyle{
\sum_{h\in I}\Big(
K^*_{ih}(\lambda+\mu+\partial) P_{hjk}(\lambda,\mu)
+P^{*2}_{hki}(\mu,\lambda+\mu+\partial) K_{hj}(\lambda)
} \\
\displaystyle{
+P^{*1}_{hij}(\lambda+\mu+\partial,\lambda) K_{hk}(\mu)
\Big)=0\,.
}
\end{array}
$$

Theorem 11.9 from \cite{DSK11} can be stated as follows:
\begin{theorem}\label{110213:thm}
Let $\mc V$ be a normal algebra of differential functions
in $\ell$ differential variables
over a linearly closed differential field $\mc F$,
and let $\mc C\subset\mc F$ be the subfield of constants.
Let $K(\partial)$ be a quasiconstant $\ell\times\ell$ matrix differential operator
of order $N$ with invertible leading coefficient $K_N\in\Mat_{\ell\times\ell}(\mc F)$.
Then
we have the following decomposition of $\mc Z^k_K$ in a direct sum of vector spaces over $\mc C$:
$$
\mc Z^k_K=\mc A^k_K\oplus \mc B^k_K\,.
$$
Hence, we have a canonical isomorphism $\mc H^k_K\simeq\mc A^k_K$.
Moreover, $\mc A^k_K$ (hence $\mc H^k_K$) is a vector space over $\mc C$ of dimension $\binom{N\ell}{k+2}$.
\end{theorem}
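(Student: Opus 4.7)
The plan is to establish the theorem in three logically separate steps: first, inclusion $\mc A^k_K \subseteq \mc Z^k_K$; second, the direct-sum decomposition $\mc Z^k_K = \mc A^k_K \oplus \mc B^k_K$; and third, the dimension count $\dim_{\mc C}\mc A^k_K = \binom{N\ell}{k+2}$.

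The first step is essentially a verification. Given an element of the form \eqref{20110612:eqf1}, applying formula \eqref{20110611:eq2} to compute $\delta_K$ immediately reduces to the left-hand side of \eqref{20110611:eq3} (the only term that survives the derivatives $\partial/\partial u_j^{(n)}$ acting on a linear-in-$u$ expression is the constant coefficient of $u_j$). So membership in $\mc A^k_K$ is, by definition, a closedness condition, giving the inclusion $\mc A^k_K \subseteq \mc Z^k_K$.

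The second step, the decomposition, is the technical heart of the theorem. I would filter $W^{\var}_k$ by the differential-order filtration coming from \eqref{eq:july21_1}, and attack the question by a "reduction to normal form" argument proceeding by downward induction on the top degree. Given a closed array $z \in \mc Z^k_K$ with nontrivial top-order part, I would seek $Q \in W^{\var}_{k-1}$ so that $z - \delta_K Q$ has strictly smaller top degree. The obstruction to solving for such $Q$ is a system of linear differential equations whose leading coefficient is controlled by $K_N$; invertibility of $K_N$ makes these equations nondegenerate, linear closedness of $\mc F$ lets us produce solutions, and normality of $\mc V$ (combined with the hypothesis $\mc V \supseteq R_\ell$ used already in Proposition \ref{20110613:prop2}) supplies the antiderivatives and the inverse partial derivatives $\partial/\partial u_i^{(n)}$ needed to unwind the construction cleanly. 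Iterating drives $z$ to something linear in the $u_j^{(n)}$'s whose coefficients are quasiconstant; after a further reduction using that elements $u_j^{(n)}$ with $n \ge 1$ are $\partial$ of something, the remainder is brought into the shape \eqref{20110612:eqf1}, hence lies in $\mc A^k_K$. The same scheme applied to an exact element $\delta_K P$ sitting in $\mc A^k_K$ should reduce it to zero, proving $\mc A^k_K \cap \mc B^k_K = 0$; equivalently, one can use the characterization \eqref{20110612:eq3} of $\mc B^1_K$ (and its higher-degree analogues derived from \eqref{20110611:eq1}) to verify injectivity of $\mc A^k_K \hookrightarrow \mc H^k_K$ directly.

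For the final step the problem becomes purely combinatorial over $\mc F$. Thinking of each pair $(i,m) \in I \times \{0,1,\dots,N-1\}$ as a letter from an alphabet of size $N\ell$, the skewsymmetric polynomials $P_{j,i_0,\dots,i_k}(\lambda_0,\dots,\lambda_k)$ of degree $\le N-1$ in each $\lambda_\alpha$, modulo $(\lambda_0+\dots+\lambda_k+\partial)$, parametrize something like $\mc F^\ell \otimes \Lambda^{k+1}(\mc C^{N\ell})$; the cocycle relation \eqref{20110611:eq3} then performs exactly one more antisymmetrization in the extra index $j$, collapsing to $\Lambda^{k+2}(\mc C^{N\ell})$ of dimension $\binom{N\ell}{k+2}$. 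Making this identification precise requires being careful about the $\mc C$-structure and the use of $(\lambda_0+\dots+\lambda_k+\partial)$ to eliminate the highest-degree monomial in one variable at the cost of producing the skew $(k+2)$-symbol.

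The main obstacle is clearly the second step. Constructing $Q$ with the right leading behavior is delicate because the skewsymmetry constraints couple different components of the array, and one must verify that the candidate $Q$ produced by solving the differential equation is itself skewsymmetric (and not merely a polyvector field after symmetrization). The inductive bookkeeping must track simultaneously the lexicographic differential order and the arity in the index slots, and one must use invertibility of $K_N$ and linear closedness of $\mc F$ repeatedly and in the correct order to keep the induction going.
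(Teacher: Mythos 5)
First, a point of orientation: the paper does not prove this statement at all --- it is introduced with the words ``Theorem 11.9 from \cite{DSK11} can be stated as follows,'' so the entire proof lives in the companion paper \cite{DSK11}, where it occupies several sections and is organized around a long exact sequence (the one quoted as \cite[eq.(11.4)]{DSK11} in Section \ref{sec:5} of the present paper) relating $\mc H^k_K$ to kernels and cokernels of maps $\alpha_k$ between spaces of quasiconstant polynomial arrays. Your outline is therefore necessarily being compared against that external argument. Your first step (the inclusion $\mc A^k_K\subseteq\mc Z^k_K$) is correct and is indeed an immediate verification from \eqref{20110611:eq2}, since the entries \eqref{20110612:eqf1} are linear in the $u_j$ with quasiconstant coefficients, so only the $n=0$ derivative survives and closedness reduces to \eqref{20110611:eq3}.

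The genuine gaps are in your second and third steps. For the decomposition, you correctly identify the normal-form reduction as the technical heart, but everything that makes the theorem true is hidden inside the sentence ``the obstruction to solving for such $Q$ is a system of linear differential equations whose leading coefficient is controlled by $K_N$'': producing $Q$ compatibly with the skewsymmetry of the array, controlling the interaction between the differential-order filtration \eqref{eq:july21_1} and the arity of the array, and showing that the residual quasiconstant-linear piece can be normalized into the shape \eqref{20110612:eqf1} with the degree bound $N-1$ in each $\lambda_\alpha$ is precisely the multi-section argument of \cite{DSK11}; asserting that it works is not a proof. For the dimension count, your framing of the problem as ``purely combinatorial over $\mc F$'' is wrong in an essential way: condition \eqref{20110611:eq3} is a linear \emph{differential} condition on the quasiconstant coefficients, so its solution set is a $\mc C$-subspace, not an $\mc F$-subspace, and no monomial count in an alphabet of size $N\ell$ can produce the answer. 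This is already visible at $k=-1$, where there are no $\lambda$-variables at all and $\mc A^{-1}_K\simeq\ker\big(K^*(\partial)|_{\mc F^\ell}\big)$ has $\mc C$-dimension $N\ell$ purely because $\mc F$ is linearly closed and $K_N$ is invertible (the fact quoted from \cite[Cor.A.3.7]{DSK11} in the proof of Lemma \ref{20110612:lem2}). The correct mechanism is that $\mc A^k_K$ is identified with $\Lambda^{k+2}$ of the $N\ell$-dimensional solution space $\mc Z=\ker K(\partial)$, and making that identification an isomorphism requires both linear closedness (for surjectivity) and the Wronskian nondegeneracy of Lemmas \ref{20110612:lem1}--\ref{20110612:lem2} (for injectivity); neither ingredient appears in your sketch.
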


Recall that, if $K$ is a skewadjoint operator, then
$\mc H^\bullet_K(\mc V)=\bigoplus_{k\geq-1}\mc H^k_K$ is a Lie superalgebra
with consistent $\mb Z$-grading.
In Section \ref{sec:4} we will prove the following
\begin{theorem}\label{20110601:thm}
Let $\mc V$ be a normal algebra of differential functions,
over a linearly closed differential field $\mc F$.
Let $K(\partial)$ be a quasiconstant skewadjoint $\ell\times\ell$ matrix differential operator
of order $N$ with invertible leading coefficient $K_N\in\Mat_{\ell\times\ell}(\mc F)$.
Then the $\mb Z$-graded Lie superalgebra $\mc H^\bullet_K(\mc V)$ is isomorphic
to the $\mb Z$-graded Lie superalgebra $\tilde H(N\ell,S)$
constructed in Section \ref{sec:before.2},
where $S$ is the matrix, in some basis, of the nondegenerate symmetric bilinear form $\langle\,\cdot|\cdot\,\rangle^0_K$
constructed in Section \ref{sec:4.1}.
\end{theorem}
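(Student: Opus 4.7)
The plan is to identify $\mc H^\bullet_K(\mc V)$ with $\tilde H(N\ell,S)$ via the full-prolongation machinery of Section \ref{sec:before}. Three inputs are at hand: the dimension formula $\dim_{\mc C}\mc H^k_K=\binom{N\ell}{k+2}$ from Theorem \ref{110213:thm}, which matches $\dim_{\mc C}\tilde H_k(N\ell,S)$ degree by degree; the vanishing of the essential variational Poisson cohomology (Theorem \ref{20110612:thm}); and Proposition \ref{20110622:prop}, which characterizes $\tilde H(N\ell,S)$ as the full prolongation of the pair $(\mc C^{N\ell},so(N\ell,S))$ when $S$ is nondegenerate.

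First I would establish that the $\mb Z$-graded Lie superalgebra $\mc H^\bullet_K(\mc V)$ is transitive in the sense of Section \ref{sec:before}. Using characterization (ii), a class $[z]\in\mc H^k_K$ whose iterated brackets with all tuples $C_0,\dots,C_k\in\mc H^{-1}_K$ vanish admits, by induction on $k$ and the definition of $\mc E\mc Z^\bullet_K$ recalled in Section \ref{sec:intro}, an essentially closed representative. Theorem \ref{20110612:thm} then forces such a representative to be exact, so $[z]=0$ in $\mc H^k_K$.

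Next, I would identify $(\mc H^{-1}_K,\mc H^0_K)$ with $(\mc C^{N\ell},so(N\ell,S))$. By Theorem \ref{110213:thm}, $\dim_{\mc C}\mc H^{-1}_K=N\ell$, and Section \ref{sec:4.1} equips $\mc H^{-1}_K$ with the nondegenerate symmetric bilinear form $\langle\,\cdot|\cdot\,\rangle^0_K$ whose Gram matrix in a chosen basis is $S$. Transitivity already makes the adjoint action $\mc H^0_K\to\mf{gl}(\mc H^{-1}_K)$ injective. The crucial step is the invariance identity
\begin{equation*}
\langle[X,C_0]\,|\,C_1\rangle^0_K + \langle C_0\,|\,[X,C_1]\rangle^0_K = 0\,,
\qquad X\in\mc H^0_K,\ C_0,C_1\in\mc H^{-1}_K\,,
\end{equation*}
placing $\mc H^0_K$ inside $so(N\ell,S)$ (vanishing of the trace is automatic from symmetry of $S$). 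Matching $\dim_{\mc C}\mc H^0_K=\binom{N\ell}{2}=\dim_{\mc C}so(N\ell,S)$ upgrades the embedding to a bijection.

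By transitivity, $\mc H^\bullet_K(\mc V)$ is then a prolongation of $(\mc C^{N\ell},so(N\ell,S))$, so Proposition \ref{20110622:prop} gives a graded embedding $\mc H^\bullet_K(\mc V)\hookrightarrow\tilde H(N\ell,S)$ extending the identifications in degrees $-1$ and $0$. Equality of dimensions degree by degree, again from Theorem \ref{110213:thm}, forces this embedding to be surjective. The main obstacle I anticipate is the invariance identity above: it hinges on an intrinsic, workable description of the form $\langle\,\cdot|\cdot\,\rangle^0_K$ from Section \ref{sec:4.1} together with a graded-Jacobi computation exploiting the skewadjointness of $K(\partial)$ and the bracket formulas \eqref{20110612:eqf2} and \eqref{20110608:eq2}. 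Once this identity is in place, the rest of the argument is essentially a dimension count against the universal prolongation.
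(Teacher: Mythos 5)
Your proposal follows essentially the same route as the paper's proof in Section \ref{sec:4.2}: transitivity of $\mc H^\bullet_K(\mc V)$ deduced from the vanishing of the essential cohomology (Corollary \ref{20110613:cor}), the faithful action of $\mc H^0_K$ on $\mc H^{-1}_K\simeq\mc Z$ landing in $so(\mc Z,\langle\,\cdot|\cdot\,\rangle^0_K)$, and then the embedding into the full prolongation $\tilde H(N\ell,S)$ via Proposition \ref{20110622:prop} followed by the degree-by-degree dimension count from Theorem \ref{110213:thm}. The invariance identity you flag as the main remaining obstacle is precisely Lemma \ref{20110617:lem4}, which the paper proves by describing the action of a class in $\mc H^0_K\simeq\mc A^0_K$ on $\mc Z=\ker K(\partial)$ as the standard action of a quasiconstant operator $P(\partial)$ satisfying $K(\partial)P(\partial)+P^*(\partial)K(\partial)=0$ and exploiting that $F,G\in\ker K(\partial)$.
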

\begin{remark}
The subspace $\mc A^\bullet_K(\mc V)=\bigoplus_{k=-1}^\infty\mc A^k_K$ is NOT, in general,
a subalgebra of the Lie superalgebra $\mc Z^\bullet(\mc V)$.
We can enlarge it to be a subalgebra by letting $\tilde{\mc A}^k_K\subset\mc Z^k_K$
be the subset consisting of arrays of the form \eqref{20110612:eqf1}
where $P_{j,i_0,\dots,i_k}(\lambda_0,\dots,\lambda_k)$ are polynomials in $\lambda_0,\dots,\lambda_k$
with coefficients in $\mc F$ of arbitrary degree,
skewsymmetric with respect to simultaneous permutations of the indices $i_0,\dots,i_k$,
and the variables $\lambda_0,\dots,\lambda_k$, and satisfying condition \eqref{20110611:eq3}.
Then, clearly, $\mc A^\bullet_K(\mc V)\simeq\tilde{\mc A}^\bullet_K(\mc V)\big/\big(\tilde{\mc A}^\bullet_K(\mc V)\cap\mc B^\bullet_K(\mc V)\big)$.
For example, it is not hard to show that
$$
\tilde{\mc A}^0_K\cap\mc B^0_K = \big\{S(\partial)K(\partial)\,\big|S^*(\partial)=S(\partial)\big\}\,,
$$
so that, $\mc A^0_K$ is a Lie algebra, $\big\{S(\partial)K(\partial)\,\big|S^*(\partial)=S(\partial)\big\}$
is its ideal, and, by Theorem \ref{20110601:thm}, the quotient is isomorphic to the Lie algebra $so(N\ell)$.
\end{remark}

\begin{remark}\label{20110628:rem1}
If $N\leq1$, then $\mc A^\bullet_K(\mc V)$ is a subalgebra of the Lie superalgebra $\mc Z^\bullet_K(\mc V)$,
i.e. in this case the complex $(W^{var}(\Pi\mc V),\ad K)$ is formal (cf. \cite{Get02}).
However, this is not the case for $N>1$.
\end{remark}


\section{Essential variational Poisson cohomology}
\label{sec:3}

In this section we introduce the subalgebra of essential variational Poisson cohomology
and we prove a vanishing theorem for this cohomology.

\subsection{The Casimir subalgebra $\mc Z^{-1}_K\subset\mc V/\partial\mc V$ and the essential subcomplex $\mc E W^{\var}(\Pi\mc V)$}
\label{sec:3.1}

Throughout this section we let $\mc V$ be an algebra of differential functions
in the variables $u_i,\,i\in I$,
and we denote, as usual, by $\mc F$ the subalgebra of quasiconstant,
and by $\mc C\subset \mc F$ the subalgebra of constants.
Let $K=\big(K_{ij}(\partial)\big)_{i,j\in I}$ be a Hamiltonian $\ell\times\ell$ matrix differential operator
with coefficients in $\mc V$.
In other words, we can view $K$ as an element of $W^{\var}_1$ such that $[K,K]=0$,
hence, we can consider the corresponding cohomology complex
$(W^{\var}(\Pi\mc V)=\bigoplus_{k\geq-1}W^{\var}_k,\ad K)$.
Recall from Section \ref{sec:2.3} that we have the $\mb Z$-graded subalgebra
$\mc Z^\bullet_K(\mc V)=\bigoplus_{k\geq-1}\mc Z^k_K$
of closed elements in $W^{\var}(\Pi\mc V)$,
and, inside it, the ideal of exact elements $\mc B^\bullet(\mc V)=\bigoplus_{k\geq-1}\mc B^k_K$.
The space $\mc Z^{-1}_K$ of central elements is, in this case,
\begin{equation}\label{20110612:eq1}
\mc Z^{-1}_K=\Big\{C\in\mc V/\partial\mc V\,\Big|\,[K,C]\,\Big(=K(\partial)\frac{\delta C}{\delta u}\Big)=0\Big\}
\,.
\end{equation}
We call an element $P\in W^{\var}_k$ \emph{essential} if the following condition holds:
\begin{equation}\label{20110612:eq2}
\big[\dots\big[[P,C_0],C_1\big],\dots,C_k\big]=0
\,,\,\,
\forall C_0,\dots,C_k\in\mc Z^{-1}_K\,.
\end{equation}
We denote by $\mc EW^{\var}_k\subset W^{\var}_k$ the subspace of essential elements.
For example, $\mc EW^{\var}_{-1}=0$
and $\mc EW^{\var}_0$ consists of elements $P\in\mc V^\ell$
such that $\tint P\frac{\delta C}{\delta u}=0$ for all central elements $C\in\mc Z^{-1}_K$.
Furthermore, $\mc EW^{\var}_1$ consists,
under the identification \eqref{20110608:eq3},
of skewadjoint $\ell\times\ell$ matrix differential operators $H(\partial)$,
such that
$$
\int \frac{\delta C_1}{\delta u}H(\partial)\frac{\delta C_2}{\delta u}=0
\,,\,\,
\forall C_1,C_2\in\mc Z^{-1}_K\,.
$$
Let $\mc EW^{\var}=\bigoplus_{k\geq-1}\mc EW^{\var}_k$.
This is a $\mb Z$-graded subspace of $W^{\var}(\Pi\mc V)$, depending on the operator $K(\partial)$.
Finally, denote by $\mc E\mc Z^\bullet_K(\mc V)=\bigoplus_{k\geq-1}\mc E\mc Z^k_K$ the $\mb Z$-graded
subspace of \emph{essentially closed elements}, i.e. $\mc E\mc Z^k_K=\mc Z^k_K\cap\mc EW^{\var}_k$.
\begin{proposition}\label{20110612:prop}
\begin{enumerate}[(a)]
\item
$\mc EW^{\var}$ is a $\mb Z$-graded subalgebra of the Lie superalgebra $W^{\var}(\Pi\mc V)$.
Consequently $\mc E\mc Z^\bullet_K(\mc V)$ is a $\mb Z$-graded subalgebra of $\mc EW^{\var}$.
\item
Exact elements are essentially closed, i.e. $B^\bullet_K(\mc V)\subset\mc E\mc Z^\bullet_K(\mc V)$,
hence they form a $\mb Z$-graded ideal  of the Lie superalgebra $\mc E\mc Z^\bullet_K(\mc V)$.
\end{enumerate}
\end{proposition}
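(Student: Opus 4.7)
The plan is to deduce both parts from repeated application of the super Jacobi identity, combined with two elementary degree observations: bracketing an element of $W^{\var}_k$ with $k+1$ elements of $W^{\var}_{-1}$ lands in $W^{\var}_{-1}$ (which essentiality demands to be zero), while bracketing with strictly more than $k+1$ such elements is automatically zero since $W^{\var}_j=0$ for $j<-1$.

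For part (a), given $P\in\mc EW^{\var}_p$ and $Q\in\mc EW^{\var}_q$, I would show $[\dots[[P,Q],C_0],\dots,C_{p+q}]=0$ for arbitrary $C_0,\dots,C_{p+q}\in\mc Z^{-1}_K$. For a subset $S=\{i_1<\dots<i_s\}\subset\{0,\dots,p+q\}$ write $P^{(S)}=[\dots[P,C_{i_1}],\dots,C_{i_s}]$, and similarly $Q^{(T)}$ for the complement $T$. By induction on the number of $C_i$'s brought inside, using the super Jacobi identity
$$
[[A,B],C]=[A,[B,C]]-(-1)^{|A||B|}[B,[A,C]]
$$
to push each $C_i$ past the outer bracket, one obtains an expansion
$$
[\dots[[P,Q],C_0],\dots,C_{p+q}]=\sum_{S\sqcup T=\{0,\dots,p+q\}}\varepsilon_{S,T}\,[P^{(S)},Q^{(T)}],
$$
for some signs $\varepsilon_{S,T}\in\{\pm1\}$. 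Since $|S|+|T|=p+q+1$, at least one of $|S|\geq p+1$ or $|T|\geq q+1$ holds. In the first case $P^{(S)}\in W^{\var}_{p-|S|}$ vanishes: by essentiality of $P$ if $|S|=p+1$, and trivially if $p-|S|<-1$; the second case is symmetric. Hence every summand is zero, so $[P,Q]\in\mc EW^{\var}_{p+q}$, and the subalgebra statement for $\mc E\mc Z^\bullet_K(\mc V)$ then follows because $\mc Z^\bullet_K(\mc V)$ is already a subalgebra of $W^{\var}(\Pi\mc V)$.

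For part (b), let $P=[K,R]\in\mc B^k_K$ with $R\in W^{\var}_{k-1}$. Closedness $[K,P]=0$ is immediate from $[K,K]=0$ and the oddness of $K$. For essentiality I would exploit the centrality $[K,C]=0$ for every $C\in\mc Z^{-1}_K$: the super Jacobi identity gives
$$
[[K,R],C]=[K,[R,C]]-(-1)^{k-1}[R,[K,C]]=[K,[R,C]].
$$
Iterating this observation with $C_0,\dots,C_k$ yields
$$
[\dots[[K,R],C_0],\dots,C_k]=[K,[\dots[R,C_0],\dots,C_k]],
$$
and the inner iterated bracket lies in $W^{\var}_{(k-1)-(k+1)}=W^{\var}_{-3}=0$, so the whole expression vanishes. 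Hence $P\in\mc E\mc Z^k_K$. The remaining ideal assertion is then automatic: $\mc B^\bullet_K(\mc V)$ was already known to be an ideal of $\mc Z^\bullet_K(\mc V)\supset\mc E\mc Z^\bullet_K(\mc V)$, and it has just been shown to lie in $\mc E\mc Z^\bullet_K(\mc V)$.

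The main obstacle is purely bookkeeping: setting up the induction in part (a) so that the combinatorial expansion is cleanly stated and the signs $\varepsilon_{S,T}$ are correctly produced. Happily, since the conclusion only requires each summand to vanish, the actual values of these signs never enter the argument, so the proof reduces to an organizational exercise of listing all distributions of the $C_i$'s between $P$ and $Q$.
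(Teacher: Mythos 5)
Your proof is correct and follows essentially the same route as the paper: both parts rest on iterated use of the super Jacobi identity to distribute the central elements $C_i$ over the two factors (resp.\ to pull $K$ outside using $[K,C_i]=0$), with each resulting term killed either by essentiality or because the degree drops below $-1$. The only blemish is the harmless arithmetic slip $(k-1)-(k+1)=-3$ in part (b) (it should be $-2$); the conclusion is unaffected since the iterated bracket vanishes as soon as the degree falls below $-1$.
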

\begin{proof}
Let $P\in\mc EW^{\var}_h$ and $Q\in\mc EW^{\var}_{k-h}$, with $0\leq h\leq k$,
and let $C_0,\dots,C_k\in\mc Z^{-1}_K$.
Using iteratively the Jacobi identity, we can express
$$
[\dots[[[P,Q],C_0],C_1],\dots,C_k]
$$
as a linear combination of the commutators of
the pairs of elements of the form
$$
[\dots[[P,C_{i_0}],C_{i_1}],\dots,C_{i_{s-1}}]
\,\,\text{ and }\,\,
[\dots[[Q,C_{i_s}],C_{i_{s+1}}],\dots,C_{i_k}]\,,
$$
where $s$ is either $h$ or $h+1$.
In the latter case the first element is zero since $P$ is essential,
while in the former case the second element is zero since $Q$ is essential.
Hence, $[P,Q]$ is essential.
The second claim of part (a) follows since $\mc E\mc Z^\bullet_K(\mc V)$
is the intersection of $\mc EW^{\var}$ and $\mc Z^\bullet_K(\mc V)$,
which are both $\mb Z$-graded subalgebra of $W^{\var}(\Pi\mc V)$.

For part (b), given the exact element $[K,P]$, where $P\in\mc EW^{\var}_{k-1}$,
and given $C_0,\cdots,C_k\in\mc Z^{-1}_K$,
we have, using again the Jacobi identity,
$$
[\dots[[[K,P],C_0],C_1],\dots,C_k]
=
[K,[\dots[[P,C_0],C_1],\dots,C_k]]
=0\,.
$$
\end{proof}
So, we define the \emph{essential variational Poisson cohomology} as
$$
\mc E\mc H^\bullet_K(\mc V)=\bigoplus_{k\geq-1}\mc E\mc H^k_K
\,\,,\text{ where }\,\,
\mc E\mc H^k_K=\mc E\mc Z^k_K/\mc B^k_K\,.
$$
Clearly, this is a $\mb Z$-graded subalgebra of the Lie superalgebra $\mc H^\bullet_K(\mc V)=H(W^{\var}(\Pi\mc V),\ad K)$.
\begin{remark}\label{20110612:rem1}
Let $H(\partial)$ be a Hamiltonian operator compatible with $K(\partial)$, i.e. $[K,H]=0$.
Suppose that the first step of the Lenard-Magri scheme always works,
namely for every central element $C\in\mc Z^{-1}_K$
there exists $\tint h\in\mc V/\partial\mc V$ such that $[H,C]=[K,\tint h]$.
Then $H$ is essentially closed.
Indeed, $[[H,C],C_1]=[[K,\tint h],C_1]=[\tint h,[K,C_1]]=0$ for every $C,C_1\in\mc Z^{-1}_K$.
This is one of the reasons for the name "essential", since only for the essentially closed operators $H$
the Lenard-Magri scheme may work.
Conversely, suppose $H(\partial)$ is an essentially closed Hamiltonian operator, i.e. $H(\partial)\in\mc E\mc Z^1_K$.
Then, for every central element $C\in\mc Z^{-1}_K$,
it is immediate to see that there exists $\tint h\in\mc V/\partial\mc V$
and $A\in\mc E\mc Z^0_K$ such that $[H,C]=[K,\tint h]+A$.
If the first essential variational Poisson cohomology is zero, we can choose $A$ to be zero,
which means that the first step in the Lenard-Magri scheme works.
\end{remark}

\subsection{Vanishing of the essential variational Poisson cohomology}
\label{sec:3.2}

In this section we prove the following
\begin{theorem}\label{20110612:thm}
If $\mc V$ be a normal algebra of differential functions in $\ell$ differential variables
over a linearly closed differential field $\mc F$,
and if $K(\partial)$ is a quasiconstant $\ell\times\ell$ matrix differential operator
of order $N$ with invertible leading coefficient $K_N\in\Mat_{\ell\times\ell}(\mc F)$,
then $\mc E\mc H^\bullet_K(\mc V)=0$.
\end{theorem}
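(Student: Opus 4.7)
The strategy is to combine the canonical-representative theory of Theorem \ref{110213:thm} with induction on the cohomological degree. By Theorem \ref{110213:thm}, every class in $\mc E \mc H^k_K = \mc E \mc Z^k_K / \mc B^k_K$ lifts uniquely to $a \in \mc A^k_K$, and since $\mc B^k_K \subseteq \mc E \mc Z^k_K$ by Proposition \ref{20110612:prop}(b), this representative is itself essentially closed. It therefore suffices to prove, by induction on $k \geq -1$, that any essentially closed $a \in \mc A^k_K$ must vanish.

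The base case $k = -1$ is immediate from $\mc E \mc Z^{-1}_K = 0$. For the inductive step, I would fix a $\mc C$-basis $\{P^{(1)}, \ldots, P^{(N\ell)}\}$ of $\ker K^*(\partial) \cap \mc F^\ell$, whose $N\ell$-dimensionality is the content of the $k = -1$ case of Theorem \ref{110213:thm}, and set $C_\alpha = \tint \sum_j P^{(\alpha)}_j u_j$. Writing the canonical form of $a$ as
\[
a_{i_0, \ldots, i_k}(\lambda_0, \ldots, \lambda_k) = \sum_j \big[ F_{j; i_0, \ldots, i_k}(\lambda_0, \ldots, \lambda_k) u_j \big],
\]
with $F \in \mc F[\lambda_0, \ldots, \lambda_k]$ of degree at most $N-1$ in each $\lambda_i$ satisfying \eqref{20110611:eq3}, I would compute $[a, C_\alpha]$ explicitly via formula \eqref{20110612:eqf2}. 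By the defining inductive clause of $\mc E\mc Z^k_K$, we have $[a, C_\alpha] \in \mc E\mc Z^{k-1}_K$, and Theorem \ref{110213:thm} gives a decomposition $[a, C_\alpha] = a'_\alpha + b'_\alpha$ with $a'_\alpha \in \mc A^{k-1}_K$ and $b'_\alpha \in \mc B^{k-1}_K$; since $b'_\alpha$ is essentially closed by Proposition \ref{20110612:prop}(b), so is $a'_\alpha$, and hence by the inductive hypothesis $a'_\alpha = 0$ for every $\alpha$.

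The crux, and the main technical obstacle, is to establish injectivity of the resulting $\mc C$-linear map
\[
\Psi \colon \mc A^k_K \longrightarrow \Hom_{\mc C}(\mc Z^{-1}_K, \mc A^{k-1}_K),
\qquad
a \mapsto \bigl(C \mapsto \pi([a, C])\bigr),
\]
where $\pi \colon \mc Z^{k-1}_K \twoheadrightarrow \mc A^{k-1}_K$ is the canonical projection. Concretely, using the identity $u_l^{(n)} \equiv (-\lambda_0 - \cdots - \lambda_{k-1})^n u_l$ valid modulo $(\partial + \sum \lambda_i)$, the bracket $[a, C_\alpha]$ rewrites in the form $\sum_l G^{(\alpha)}_{l; i_0, \ldots, i_{k-1}}(\lambda_0, \ldots, \lambda_{k-1}) u_l$, with polynomials $G$ in $\mc F[\lambda]$ built explicitly from the coefficients of the $F_{j; \ldots}$ and from the components $P^{(\alpha)}_j$ together with their $\partial$-derivatives. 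Since $\deg_{\lambda_i} G$ may exceed $N-1$, a further reduction (controlled by \eqref{20110611:eq3} at level $k-1$) is needed to realize $\pi([a, C_\alpha])$ in canonical form. The remaining step is to verify that, as $\alpha$ varies over the $N\ell$ basis Casimirs, the tuple $(\pi([a, C_\alpha]))_\alpha$ determines all polynomials $F_{j; i_0, \ldots, i_k}$: this should follow from a Wronskian-type linear-independence argument for $\{P^{(\alpha)}\}$ together with their $\partial$-derivatives, using normality of $\mc V$ and linear closedness of $\mc F$, in the spirit of the arguments of \cite{DSK11}.
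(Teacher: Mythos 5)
Your reduction to canonical representatives is exactly right and matches the paper: since $\mc Z^k_K=\mc A^k_K\oplus\mc B^k_K$ (Theorem \ref{110213:thm}) and $\mc B^k_K\subset\mc E\mc Z^k_K$ (Proposition \ref{20110612:prop}(b)), it suffices to show that an essential element of $\mc A^k_K$ vanishes. But from that point on your route diverges from the paper's, and the step you yourself flag as ``the crux'' is a genuine gap: the injectivity of $\Psi\colon a\mapsto(C\mapsto\pi([a,C]))$ is asserted, not proven. Worse, the detour through the projection $\pi$ onto $\mc A^{k-1}_K$ forces you to understand which quasiconstant, linear-in-$u$ elements of $W^{\var}_{k-1}$ lie in $\mc B^{k-1}_K$ --- and this intersection is nontrivial (already in degree $0$ one has $\tilde{\mc A}^0_K\cap\mc B^0_K=\{S(\partial)K(\partial)\mid S^*=S\}$), so the reduction of $[a,C_\alpha]$ back to canonical form genuinely loses information that your sketch does not track. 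Your inductive conclusion ``$[a,C_\alpha]\in\mc B^{k-1}_K$ for all $\alpha$ implies $a=0$'' is logically equivalent to the theorem itself, so nothing has been gained until that implication is actually established.

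The paper sidesteps all of this by never stopping at degree $k-1$: it iterates the bracket with $k+1$ Casimirs all the way down to $W^{\var}_{-1}=\mc V/\partial\mc V$. For $Q\in\mc A^k_K$ formula \eqref{20110612:eqf2} then yields the closed expression \eqref{20110612:eqf3},
$[\dots[[Q,C_0],C_1],\dots,C_k]=\sum\int u_j\,P_{j,i_0,\dots,i_k}(\partial_0,\dots,\partial_k)\frac{\delta C_0}{\delta u_{i_0}}\cdots\frac{\delta C_k}{\delta u_{i_k}}$,
with no reduction modulo $(\partial+\sum\lambda_i)$ in intermediate degrees and no projection modulo $\mc B$ needed. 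Via Lemma \ref{20110612:lemf} the $\frac{\delta C_s}{\delta u}$ range over all of $\ker(K(\partial)|_{\mc F^\ell})$, and since everything in sight is quasiconstant the vanishing of the integral forces the vanishing of the multilinear expression itself; Lemma \ref{20110612:lem2} (the Wronskian nondegeneracy statement, which is the precise form of the ``Wronskian-type argument'' you gesture at) is then applied iteratively, one factor $F_s$ at a time, to kill each polynomial $P_{j,i_0,\dots,i_k}$. I would recommend abandoning the induction on $k$ and the map $\Psi$, and instead proving directly that the full $(k+1)$-fold bracket map is injective on $\mc A^k_K$, which is both what the paper does and what your Wronskian lemma is actually suited for.
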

In order to prove Theorem \ref{20110612:thm} we will need some preliminary lemmas.
\begin{lemma}\label{20110612:lemf}
Let $\mc V$ be an arbitrary algebra of differential functions.
Let $K(\partial):\,\mc V^\ell\to\mc V^\ell$ be a quasiconstant $\ell\times\ell$ matrix differential operator with
invertible leading coefficient $K_N\in\Mat_{\ell\times\ell}(\mc F)$
Then:
\begin{enumerate}[(a)]
\item
$\ker(K(\partial))=\ker\big(K(\partial)\big|_{\mc F^\ell}\big)$.
\item
The map $\frac{\delta}{\delta u}:\,\mc V/\partial\mc V\to\mc V^\ell$ restricts
to a surjective map $\frac{\delta}{\delta u}:\,\mc Z^{-1}_K\to\ker\big(K(\partial)\big|_{\mc F^\ell}\big)$.
\item
If, moreover, $\mc V$ is a normal algebra of differential functions
and $\partial:\,\mc F\to\mc F$ is surjective,
then we have a bijection
$\frac{\delta}{\delta u}:\,\mc Z^{-1}_K\stackrel{\sim}{\longrightarrow}\ker\big(K(\partial)\big|_{\mc F^\ell}\big)$.
\end{enumerate}
\end{lemma}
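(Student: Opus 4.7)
My plan is to handle the three parts in sequence, with part (a) serving as the main technical ingredient; parts (b) and (c) are then essentially one-line arguments built on top of it.

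For (a), I would run a direct ``leading coefficient'' argument. Let $P\in\mc V^\ell$ satisfy $K(\partial)P=0$, let $M$ be the maximum of the differential orders of the components $P_j$, and assume for contradiction $M\geq0$. Choose $j_0,k_0\in I$ with $\frac{\partial P_{j_0}}{\partial u_{k_0}^{(M)}}\neq 0$. Writing $K(\partial)=\sum_{n=0}^N K_n\partial^n$ with $K_n\in\Mat_{\ell\times\ell}(\mc F)$, the key observation is that $\partial^n P_j$ has differential order at most $M+n$ and the entries of $K_n$ are quasiconstant, so the variable $u_{k_0}^{(M+N)}$ occurs in $(K(\partial)P)_i=\sum_j\sum_n(K_n)_{ij}\partial^n P_j$ only through the top terms $(K_N)_{ij}\partial^N P_j$. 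Applying $\frac{\partial}{\partial u_{k_0}^{(M+N)}}$ to $K(\partial)P=0$, and using the standard identity $\frac{\partial}{\partial u_{k_0}^{(M+N)}}(\partial^N P_j)=\frac{\partial P_j}{\partial u_{k_0}^{(M)}}$ (valid since $P_j$ has order $\leq M$), yields $\sum_j(K_N)_{ij}\frac{\partial P_j}{\partial u_{k_0}^{(M)}}=0$ for every $i$; invertibility of $K_N$ then forces $\frac{\partial P_j}{\partial u_{k_0}^{(M)}}=0$ for all $j$, contradicting the choice of $j_0$.

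For (b), I would construct an explicit preimage. Given $P\in\ker(K(\partial)|_{\mc F^\ell})$, set $f=\sum_{i\in I}P_i u_i\in\mc V$. Since $P_i\in\mc F$ is quasiconstant, a one-line computation gives $\frac{\delta f}{\delta u_j}=P_j$, and hence $K(\partial)\frac{\delta f}{\delta u}=K(\partial)P=0$, so $\tint f\in\mc Z^{-1}_K$ is a preimage of $P$. Conversely, part (a) applied to $K$ shows that for any $\tint f\in\mc Z^{-1}_K$ the element $\frac{\delta f}{\delta u}$ lies in $\ker K(\partial)=\ker(K(\partial)|_{\mc F^\ell})\subset\mc F^\ell$, so the restricted map genuinely lands in $\ker(K(\partial)|_{\mc F^\ell})$.

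For (c), it remains only to check injectivity. This follows from the well-known exactness of the variational complex for normal algebras of differential functions: under the normality hypothesis, the kernel of $\delta/\delta u:\mc V/\partial\mc V\to\mc V^\ell$ is canonically $\mc F/\partial\mc F$ (see e.g.\ \cite{BDSK09,DSK09}). The extra hypothesis $\partial\mc F=\mc F$ makes this kernel trivial, so $\delta/\delta u$ is injective on all of $\mc V/\partial\mc V$, and in particular on $\mc Z^{-1}_K$. Combined with the surjection from (b), this gives the desired bijection. The only real obstacle is the differential-order bookkeeping in (a); once it is in hand, (b) and (c) are immediate.
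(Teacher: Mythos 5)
Your proof is correct and follows essentially the same route as the paper: a leading-coefficient/maximal-differential-order contradiction for (a) (the paper merely normalizes $K_N=\id$ and orders the variables lexicographically, while you keep $K_N$ general and invoke its invertibility at the end), the explicit preimage $\tint\sum_i P_iu_i$ for (b), and the identification $\ker(\delta/\delta u)=\mc F/\partial\mc F$ from \cite{BDSK09} for (c). No gaps.
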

\begin{proof}
For part (a), we need to show that, if $F\in\mc V^\ell$ solves $K(\partial)F=0$, then $F\in\mc F^\ell$.
Suppose, by contradiction, that $F\notin\mc F^\ell$.
We may assume, without loss of generality, that $K_N=\id$,
and that the first coordinate $F_1$ has maximal differential order,
i.e. $F_1,\dots,F_\ell\in\mc V_{n,i}$ and $F_1\notin\mc V_{n,i-1}$, for some $i\in I,\,n\in\mb Z_+$.
Then $\frac{\partial}{\partial u_i^{(n+N)}}\big(K(\partial)F\big)_1=\frac{\partial F_1}{\partial u_i^{(n)}}\neq0$, a contradiction.
Next, we prove part (b). The inclusion $\frac{\delta}{\delta u}(\mc Z^{-1}_K)\subset\ker\big(K(\partial)\big|_{\mc F^\ell}\big)$
immediately follows from part (a).
Furthermore, if $P\in\ker\big(K(\partial)\big|_{\mc F^\ell}\big)$,
then $C=\tint\sum_iP_iu_i\in\mc Z^{-1}_K$ is such that $\frac{\delta C}{\delta u}=P$.
Hence, $\frac{\delta}{\delta u}\big(\mc Z^{-1}_K\big)=\ker\big(K(\partial)\big|_{\mc F^\ell}\big)$, as desired.
Finally, for part (c), if $\mc V$ is normal, we have by \cite[Prop.1.5]{BDSK09} that
$\ker\big(\frac{\delta}{\delta u}:\,\mc V/\partial\mc V\to\mc V^\ell)=\mc F/\partial\mc F$,
hence, if $\partial\mc F=\mc F$, we conclude that
$\frac{\delta}{\delta u}:\,\mc V/\partial\mc V\to\mc V^\ell$ is injective.
\end{proof}
To simplify notation, let $\mc Z:=\ker\big(K(\partial)\big)$.
Under the assumptions of Theorem \ref{20110612:thm}, by part (a) in Lemma \ref{20110612:lemf}, we have $\mc Z\subset\mc F^\ell$,
and by part (c) we have a bijection
\begin{equation}\label{20110614:eq1}
\frac{\delta}{\delta u}:\,\mc Z^{-1}_K\stackrel{\sim}{\longrightarrow}\mc Z\,,
\end{equation}
the inverse map being
$$
\mc Z\ni\,\,
F=
\left(\begin{array}{c}
f_1 \\ \vdots \\ f_\ell
\end{array}\right)
\mapsto\sum_i\tint f_iu_i
\,\,\in\mc Z^{-1}_K\,.
$$
\begin{lemma}\label{20110612:lem1}
If $F_1,\dots,F_{N\ell}$ are elements of $\mc F^\ell$, linearly independent over $\mc C$,
and satisfying a differential equation
\begin{equation}\label{20110612:eq0}
F^{(N)}=A_0F+A_1F'+\dots+A_{N-1}F^{(N-1)}\,,
\end{equation}
for some $A_0,\dots,A_{N-1}\in\Mat_{\ell\times\ell}(\mc F)$,
then the vectors
\begin{equation}\label{20110612:eq7}
G_1:=\left(\begin{array}{l}
F_1 \\
F_1' \\
\vdots \\
F_1^{(N-1)}
\end{array}\right)
\,,\,\,
\dots
\,,\,\,
G_{N\ell}:=\left(\begin{array}{l}
F_{N\ell} \\
F_{N\ell}' \\
\vdots \\
F_{N\ell}^{(N-1)}
\end{array}\right)
\,\in\mc F^{N\ell}
\end{equation}
are linearly independent over $\mc F$.
\end{lemma}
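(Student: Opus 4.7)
The plan is to run the classical Wronskian-type argument: pick a nontrivial $\mc F$-linear relation among $G_1,\dots,G_{N\ell}$ with the minimum number of nonzero coefficients, then differentiate and use the ODE \eqref{20110612:eq0} to produce a shorter relation, ultimately forcing all coefficients to be constants and contradicting $\mc C$-linear independence of $F_1,\dots,F_{N\ell}$.

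First, I would suppose for contradiction that the $\mc F$-subspace of relations
$$L=\Big\{(c_1,\dots,c_{N\ell})\in\mc F^{N\ell}\,\Big|\,\sum_{k=1}^{N\ell}c_kG_k=0\Big\}$$
is nonzero. Choose a nonzero $c=(c_1,\dots,c_{N\ell})\in L$ with the minimal number of nonzero entries; after permuting the $F_k$ and rescaling, I may assume $c_1=1$. The relation $\sum_kc_kG_k=0$ unpacks to the system
$$\sum_{k=1}^{N\ell}c_kF_k^{(j)}=0\qquad(j=0,1,\dots,N-1).$$

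Next, I would differentiate each of these equations. For $j=0,\dots,N-2$, differentiating the $j$-th equation and subtracting the $(j{+}1)$-th gives $\sum_kc_k'F_k^{(j)}=0$. For $j=N-1$, differentiating yields
$$\sum_{k=1}^{N\ell}c_k'F_k^{(N-1)}+\sum_{k=1}^{N\ell}c_kF_k^{(N)}=0,$$
and substituting \eqref{20110612:eq0} expresses the second sum as $\sum_{m=0}^{N-1}A_m\big(\sum_kc_kF_k^{(m)}\big)=0$. Hence $c':=(c_1',\dots,c_{N\ell}')$ also lies in $L$. But $c_1'=0$ since $c_1=1$, so $c'$ has strictly fewer nonzero entries than $c$; by minimality $c'=0$, i.e.\ every $c_k\in\mc C$.

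Finally, reading off the $j=0$ row gives $\sum_kc_kF_k=0$ with $c_k\in\mc C$ and $c_1=1\neq0$, contradicting the $\mc C$-linear independence of the $F_k$. Hence $L=0$ and the $G_k$ are linearly independent over $\mc F$. The only substantive step is the reduction of the $j=N-1$ derived equation to one of the form $\sum c_k'F_k^{(j)}=0$, which is exactly where the hypothesis that the $F_k$ satisfy a common ODE of order $N$ is used; everything else is the classical minimal-relation trick, so I do not expect a real obstacle.
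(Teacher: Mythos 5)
Your proof is correct and follows essentially the same route as the paper's: the minimal-relation (Wronskian-type) argument, differentiating the relation, using the order-$N$ ODE to handle the top block, and concluding that the coefficients are constants, contradicting $\mc C$-linear independence. The only cosmetic difference is that you differentiate the $N$ scalar block equations separately while the paper differentiates the single vector relation; the content is identical.
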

\begin{proof}
Suppose by contradiction that
\begin{equation}\label{20110612:eq4}
a_1G_1+a_2G_2+\dots+a_{N\ell}G_{N\ell}=0\,,
\end{equation}
is a nontrivial relation of linear dependence over $\mc F$.
We can assume, without loss of generality, that such relation has minimal number
of nonzero coefficients $a_1,\dots,a_{N\ell}\in\mc F$, and that $a_1=1$.
Note that equation \eqref{20110612:eq4} can be equivalently rewritten as the following system of equations in $\mc F^\ell$:
\begin{equation}\label{20110612:eq5}
\begin{array}{l}
a_1F_1+a_2F_2+\dots+a_{N\ell}F_{N\ell}=0\\
a_1F_1'+a_2F_2'+\dots+a_{N\ell}F_{N\ell}'=0\\
\,\,\,\,\,\,\,\,\,\,\,\,\,\,\,\,\,\,\dots \\
a_1F_1^{(N-1)}+a_2F_2^{(N-1)}+\dots+a_{N\ell}F_{N\ell}^{(N-1)}=0
\end{array}
\end{equation}
Applying $\partial$ to both sides of equation \eqref{20110612:eq4},
we get
\begin{equation}\label{20110612:eq6}
a_1G_1'+a_2G_2'+\dots+a_{N\ell}G_{N\ell}'
+a_1'G_1+ a_2'G_2+\dots+a_{N\ell}'G_{N\ell}
=0\,.
\end{equation}
The vector $a_1G_1'+a_2G_2'+\dots+a_{N\ell}G_{N\ell}'$ is an element of $\mc F^{N\ell}$
whose first $\ell$ coordinates are
$a_1F_1'+a_2F_2'+\dots+a_{N\ell}F_{N\ell}'$,
which are zero by the second equation in \eqref{20110612:eq5},
the second $\ell$ coordinates are
$a_1F_1^{(2)}+a_2F_2^{(2)}+\dots+a_{N\ell}F_{N\ell}^{(2)}$,
which are zero by the third equation in \eqref{20110612:eq5},
and so on, up to the last set of $\ell$ coordinates, which are,
by the equation \eqref{20110612:eq0},
$$
\begin{array}{l}
a_1F_1^{(N)}+a_2F_2^{(N)}+\dots+a_{N\ell}F_{N\ell}^{(N)} \\
=
A_0\big(a_1F_1+a_2F_2+\dots+a_{N\ell}F_{N\ell}\big)
+A_1\big(a_1F_1'+a_2F_2'+\dots+a_{N\ell}F_{N\ell}'\big)+ \\
\dots + A_{N-1}\big(a_1F_1^{(N-1)}+a_2F_2^{(N-1)}+\dots+a_{N\ell}F_{N\ell}^{(N-1)}\big)\,,
\end{array}
$$
which is zero again by the equations \eqref{20110612:eq5}.
Hence, equation \eqref{20110612:eq6} reduces to
$$
a_1'G_1 + a_2'G_2+\dots+a_{N\ell}'G_{N\ell}
=0\,,
$$
which, by the assumption that $a_1=1$ and the minimality assumption on the coefficients of linear dependence \eqref{20110612:eq4},
implies that all coefficients $a_1,\dots,a_{N\ell}$ are constant.
This, by the first equation in \eqref{20110612:eq5}, contradicts the assumption that $F_1,\dots,F_{N\ell}$
are linearly independent over $\mc C$.
\end{proof}
\begin{lemma}\label{20110612:lem2}
If $P(\partial)$ is a quasiconstant $m\times\ell$ ($m\geq1$) matrix differential operator
of order at most $N-1$ such that $P(\partial)F=0$ for every $F\in\mc Z=\ker\big(K(\partial)\big)$,
then $P(\partial)=0$.
\end{lemma}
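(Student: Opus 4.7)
The plan is to reduce the vanishing of $P(\partial)$ to a statement about the vanishing of a single matrix acting on a spanning set of $\mc F^{N\ell}$, via the bijection between elements of $\mc Z$ and their ``jets''.

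Write $P(\partial)=\sum_{n=0}^{N-1}P_n\partial^n$ with $P_n\in\Mat_{m\times\ell}(\mc F)$, so the hypothesis becomes $\sum_{n=0}^{N-1}P_nF^{(n)}=0$ for every $F\in\mc Z$. Assembling the $P_n$ into the single $m\times N\ell$ matrix $\tilde P=(P_0\,|\,P_1\,|\,\cdots\,|\,P_{N-1})$ over $\mc F$, and assembling the derivatives of $F$ into the column $G_F=(F,F',\dots,F^{(N-1)})^T\in\mc F^{N\ell}$, the hypothesis reads $\tilde PG_F=0$ for all $F\in\mc Z$. Hence it suffices to show that the set $\{G_F\mid F\in\mc Z\}$ spans $\mc F^{N\ell}$ over $\mc F$.

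Next, since $\mc F$ is linearly closed and $K_N\in\Mat_{\ell\times\ell}(\mc F)$ is invertible, one knows (this is the content of \cite{DSK11}, as recalled in Section \ref{sec:2.4}) that $\mc Z=\ker(K(\partial))$ is a vector space of dimension $N\ell$ over $\mc C$. Fix a $\mc C$-basis $F_1,\dots,F_{N\ell}$ of $\mc Z$. Because $K_N$ is invertible, the equation $K(\partial)F=0$ is equivalent to an equation of the form
\begin{equation*}
F^{(N)}=A_0F+A_1F'+\dots+A_{N-1}F^{(N-1)},
\qquad A_i=-K_N^{-1}K_i\in\Mat_{\ell\times\ell}(\mc F),
\end{equation*}
which is exactly the hypothesis of Lemma \ref{20110612:lem1}. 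Applying that lemma to $F_1,\dots,F_{N\ell}$ yields that the vectors $G_1,\dots,G_{N\ell}\in\mc F^{N\ell}$ defined in \eqref{20110612:eq7} are linearly independent over $\mc F$, and so form an $\mc F$-basis of $\mc F^{N\ell}$.

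To conclude, the hypothesis gives $\tilde P G_j=0$ for $j=1,\dots,N\ell$. Since the $G_j$ span $\mc F^{N\ell}$ over $\mc F$, this forces $\tilde P=0$, i.e.\ $P_0=P_1=\dots=P_{N-1}=0$, so $P(\partial)=0$. The only delicate point of the argument is ensuring that the $G_j$ achieve full $\mc F$-rank despite being built from $\mc C$-linearly independent $F_j$; this is precisely what Lemma \ref{20110612:lem1} provides, so once that lemma is in hand the rest of the proof is essentially bookkeeping.
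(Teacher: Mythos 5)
Your proposal is correct and follows essentially the same route as the paper: both use the fact that $\mc Z$ has $\mc C$-dimension $N\ell$, apply Lemma \ref{20110612:lem1} (with $A_i=-K_N^{-1}K_i$) to get $\mc F$-linear independence of the jet vectors $G_1,\dots,G_{N\ell}$, and then conclude that the row of coefficient matrices annihilating them must vanish. Your $m\times N\ell$ matrix $\tilde P$ acting on the $G_j$ is exactly the paper's identity $(P_0,P_1,\dots,P_{N-1})W=0$ with $W$ the Wronskian matrix whose columns are the $G_j$.
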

\begin{proof}
Recall from \cite[Cor.A.3.7]{DSK11} that, if
$K(\partial)=K_0+K_1\partial+\dots+K_N\partial^N$,
with $K_i\in\Mat_{\ell\times\ell}(\mc F),\,i=0,\dots,N$ and $K_N$ invertible,
then the set of solutions in $\mc F^\ell$ of the homogeneous system
$K(\partial)F=0$ is a vector space over $\mc C$ of dimension $N\ell$.
Let $F_1,\dots,F_{N\ell}\in\mc F^\ell$ be a basis of this space.
Note that the equation $K(\partial)F=0$ has the form \eqref{20110612:eq0}
with $A_i=-K_N^{-1}K_i,\,i=0,\dots,N-1$.
Hence, by Lemma \ref{20110612:lem1},
all the vectors $G_1,\dots,G_{N\ell}$ in \eqref{20110612:eq7}
are linearly independent over $\mc F$,
i.e. the Wronskian matrix
$$
W=
\left(\begin{array}{llll}
F_1 & F_2 & \dots & F_{N\ell} \\
F_1' & F_2' & \dots & F_{N\ell}' \\
&& \dots & \\
F_1^{(N-1)} & F_2^{(N-1)} & \dots & F_{N\ell}^{(N-1)}
\end{array}\right)
$$
is nondegenerate.
By assumption $P(\partial)F_1=\dots =P(\partial)F_{N\ell}=0$.
Hence, letting $P(\partial)=P_0+P_1\partial+\dots+P_{N-1}\partial^{N-1}$,
where $P_i\in\Mat_{m\times\ell}(\mc F)$, we get
$$
\Big(P_0,P_1,\dots,P_{N-1}\Big)W=0\,,
$$
which, by the nondegeneracy of $W$, implies that $P_0=,\dots=P_{N-1}=0$.
\end{proof}
\begin{proof}[Proof of Theorem \ref{20110612:thm}]
Let $Q\in\mc A^k_K$.
Recalling Theorem \ref{110213:thm} and Proposition \ref{20110612:prop}(b), it suffices to show that,
if $Q$ is essential, then it is zero.
By the definition of $\mc A^k_K$, we have, in particular, that $Q$ is an array with entries
$$\begin{array}{r}
\displaystyle{
Q_{i_0,\dots,i_k}(\lambda_0,\dots,\lambda_k)
=
\sum_{j\in I}
P_{j,i_0,\dots,i_k}(\lambda_0,\dots,\lambda_k)u_j
} \\
\displaystyle{
\in\mc V[\lambda_0,\dots,\lambda_k]/(\partial+\lambda_0+\dots+\lambda_k)\mc V[\lambda_0,\dots,\lambda_k]\,,
}
\end{array}
$$
for some polynomials $P_{j,i_0,\dots,i_k}(\lambda_0,\dots,\lambda_k)\in\mc F[\lambda_0,\dots,\lambda_k]$
of degree at most $N-1$ in each variable $\lambda_i$.
Recalling formula \eqref{20110612:eqf2}, we have,
for arbitrary $C_0,\dots,C_k\in\mc V/\partial\mc V$,
\begin{equation}\label{20110612:eqf3}
[\dots[[Q,C_0],C_1],\dots,C_k]
=
\sum_{j,i_0,\dots,i_k\in I}
\int u_j P_{j,i_0,\dots,i_k}(\partial_0,\dots,\partial_k)
\frac{\delta C_0}{\delta u_{i_0}}\dots\frac{\delta C_k}{\delta u_{i_k}}\,,
\end{equation}
where $\partial_s$ means $\partial$ acting on $\frac{\delta C_s}{\delta u_{i_s}}$.
Hence, if $Q$ is essential,
\eqref{20110612:eqf3} is zero for all $C_0,\dots,C_k\in\mc Z^{-1}_K$.
By Lemma \ref{20110612:lemf}, we thus have
$$
\sum_{j,i_0,\dots,i_k\in I}
\int u_j P_{j,i_0,\dots,i_k}(\partial_0,\dots,\partial_k)
F_0\dots F_k=0\,,
$$
for all $F_0,\dots,F_k\in\ker\big(K(\partial)\big|_{\mc F^\ell}\big)$.
Since all coefficients of the $P_{j,i_0,\dots,i_k}$'s and all entries of the $F_i$'s are quasiconstant,
the above equation is equivalent to
$$
\sum_{i_0,\dots,i_k\in I}
P_{j,i_0,\dots,i_k}(\partial_0,\dots,\partial_k)
F_0\dots F_k=0\,,\,\,\forall j\in I\,.
$$
Applying Lemma \ref{20110612:lem2} iteratively to each factor,
we conclude that the polynomials $P_{j,i_0,\dots,i_k}(\lambda_0,\dots,\lambda_k)$
are zero.
\end{proof}

\begin{remark}
By Remark \ref{20110612:rem1}, from the point of view of applicability of the Lenard-Magri scheme for a bi-Hamiltonian pair $(H,K)$,
we should consider only essentially closed Hamiltonian operators $H(\partial)$.
Moreover, by Theorem \ref{20110612:thm}, if $K(\partial)$ is a quasiconstant matrix differential operator with invertible leading coefficient,
an essentially closed $H(\partial)$ must be exact, namely, recalling equation \eqref{20110612:eq3}, it must have the form
$$
H(\partial)=
D_P(\partial)\circ K(\partial) + K(\partial)\circ D^*_P(\partial)\,,
$$
for some $P\in\mc V^\ell$,
and two such $P$'s differ by an element of the form $K(\partial)\frac{\delta f}{\delta u}$ for some $\tint f\in\mc V/\partial\mc V$.
\end{remark}
\begin{corollary}\label{20110613:cor}
Under the assumptions of Theorem \ref{20110612:thm},
the $\mb Z$-graded Lie superalgebra $\mc H^\bullet_K(\mc V)$ is transitive.
\end{corollary}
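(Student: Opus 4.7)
The plan is to deduce transitivity of $\mc H^\bullet_K(\mc V)$ directly from the vanishing Theorem \ref{20110612:thm}, using the equivalence between transitivity and its iterated form (definition (ii) of Section \ref{sec:before}). Concretely I will verify: if a class $\bar a\in\mc H^k_K$ with $k\geq0$ satisfies
$$
\bigl[\dots\bigl[[\bar a,\bar C_0],\bar C_1\bigr],\dots,\bar C_k\bigr]=0\quad\text{in } \mc H^{-1}_K
$$
for all $\bar C_0,\dots,\bar C_k\in\mc H^{-1}_K=\mc Z^{-1}_K$, then $\bar a=0$.

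First I would fix any cocycle representative $a\in\mc Z^k_K$ of $\bar a$ and check that the iterated cocycle bracket $[\dots[[a,C_0],\dots],C_k]\in\mc V/\partial\mc V$ is independent of this choice. The mechanism is the super-Jacobi identity combined with the Casimir condition $[K,C_i]=0$, which yields $[[K,P],C_i]=[K,[P,C_i]]$ (the sign term in Jacobi drops because $[K,C_i]=0$). Hence if $a'=a+[K,P_0]$ then $[a',C_0]-[a,C_0]=[K,[P_0,C_0]]$, bracketing again with $C_1$ gives $[K,[[P_0,C_0],C_1]]$, and after $k+1$ iterations we land in $\mc B^{-1}_K=0$. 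The same computation shows that the iterated bracket of $a$ itself is $\ad K$-closed, hence lies in $\mc Z^{-1}_K=\mc H^{-1}_K$, and coincides with the iterated bracket of the cohomology classes.

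By hypothesis this iterated cohomology bracket vanishes, so the cocycle iterated bracket is zero for every $C_0,\dots,C_k\in\mc Z^{-1}_K$. This is precisely the essentiality condition \eqref{20110612:eq2}, hence $a\in\mc EW^{\var}_k\cap\mc Z^k_K=\mc E\mc Z^k_K$. Theorem \ref{20110612:thm} then yields $a\in\mc B^k_K$, i.e.\ $\bar a=0$ in $\mc H^k_K$, establishing transitivity.

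The only routine technicality is the representative-independence of the iterated cocycle bracket, which is a straightforward Jacobi bookkeeping exercise; the real content of the corollary is already encoded in Theorem \ref{20110612:thm}, so I do not anticipate any further obstacle.
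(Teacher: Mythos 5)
Your proof is correct and follows essentially the same route as the paper, which deduces transitivity (in the iterated form, definition (ii) of Section \ref{sec:before}) directly from Theorem \ref{20110612:thm} by observing that a class with vanishing iterated brackets against $\mc Z^{-1}_K$ has an essentially closed representative. The only difference is that you spell out the routine verification (via Jacobi and $[K,C_i]=0$, together with $\mc B^{-1}_K=0$) that the cohomology-level iterated bracket is computed by any cocycle representative, a point the paper leaves implicit.
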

\begin{proof}
By Theorem \ref{20110612:thm}, if $P\in\mc H^k_K$ is such that
$[\dots[[P,C_0],C_1],\dots,C_k]=0$ for every $C_0,\dots,C_k\in\mc Z^{-1}_K=\mc H^{-1}_K$,
then $P=0$.
This, by definition, means that $\mc H^\bullet_K(\mc V)$ is transitive.
\end{proof}


\section{Isomorphism of $\mb Z$-graded Lie superalgebras $\mc H^\bullet_K(\mc V)\simeq\tilde H(N\ell,S)$}\label{sec:4}

In this section
we introduce an inner product $\langle\,\cdot|\cdot\,\rangle_K:\,\mc F^\ell\times\mc F^\ell\to\mc F$
associated to an $\ell\times\ell$ matrix differential operator $K=\big(K_{ij}(\partial)\big)_{i,j\in I}$,
which is used to prove Theorem \ref{20110601:thm}.

\subsection{The inner product associated to $K$}\label{sec:4.1}

Let $\mc F$ be a differential algebra with derivation $\partial$, and denote by $\mc C$ the subalgebra of constants.
As usual, we denote by $\cdot$ the standard inner product on $\mc F^\ell$, i.e.
$F\cdot G=\sum_{i\in I}F_iG_i\in\mc V$ for $F,G\in\mc V^\ell$, where, as before, $I=\{1,\dots,\ell\}$.

Consider the algebra of polynomials in two variables $\mc F[\lambda,\mu]$.
Clearly, the map $\lambda+\mu+\partial:\,\mc F[\lambda,\mu]\to\mc F[\lambda,\mu]$ is injective.
Hence, given $P(\lambda,\mu)\in(\lambda+\mu+\partial)\mc F[\lambda,\mu]$,
there is a unique preimage of this map in $\mc F[\lambda,\mu]$, that we denote by
$(\lambda+\mu+\partial)^{-1}P(\lambda,\mu)\in\mc F[\lambda,\mu]$.

Let now $K(\partial)=\big(K_{ij}(\partial)\big)_{i,j\in I}$ be an arbitrary $\ell\times\ell$ matrix differential operator over $\mc F$.
We expand its matrix entries as
\begin{equation}\label{20110613:eq1}
K_{ij}(\lambda)
=\sum_{n=0}^NK_{ij;n}\lambda^n
\,\,,\,\,\,\,
K_{ij;n}\in\mc F
\,.
\end{equation}
The adjoint operator is $K^*(\partial)$, with entries
\begin{equation}\label{20110613:eq1b}
K^*_{ij}(\lambda)=K_{ji}(-\lambda-\partial)
=\sum_{n=0}^N
(-\lambda-\partial)^n K_{ji;n}\,.
\end{equation}
It follows from the expansions \eqref{20110613:eq1} and \eqref{20110613:eq1b}
that, for every $i,j\in I$, the polynomial $K_{ij}(\mu)-K^*_{ji}(\lambda)$ lies
in the image of $\lambda+\mu+\partial$, so that we can consider the polynomial
\begin{equation}\label{20110617:eq1}
(\lambda+\mu+\partial)^{-1}\big(K_{ij}(\mu)-K^*_{ji}(\lambda)\big)\,\in\mc F[\lambda,\mu]\,.
\end{equation}

Next, for a polynomial $P(\lambda,\mu)=\sum_{m,n=0}^Np_{mn}\lambda^m\mu^n \in\mc F[\lambda,\mu]$,
we use the following notation
\begin{equation}\label{20110617:eq2}
P(\lambda,\mu)\big(|_{\lambda=\partial}f\big)\big(|_{\mu=\partial}g\big)
:=\sum_{m,n=0}^Np_{mn}(\partial^mf)(\partial^ng)\,.
\big({}_{\lambda}|_{=\partial}f\big)
\end{equation}

Based on the observation \eqref{20110617:eq1}, and using the notation in \eqref{20110617:eq2},
we define the following inner product
$\langle\,\cdot|\cdot\,\rangle_K:\,\mc F^\ell\times\mc F^\ell\to\mc F$,
associated to $K=\big(K_{ij}(\partial)\big)_{i,j\in I}\in\Mat_{\ell\times\ell}(\mc F[\partial])$:
\begin{equation}\label{20110617:eq3}
\langle F|G\rangle_K
=\sum_{i,j\in I}
(\lambda+\mu+\partial)^{-1}\big(K_{ij}(\mu)-K^*_{ji}(\lambda)\big)
\big(|_{\lambda=\partial}F_i\big)\big(|_{\mu=\partial}G_j\big)
\,.
\end{equation}

It is not hard to write an explicit formula for $\langle F|G\rangle_K$,
using the expansion \eqref{20110613:eq1} for $K_{ij}(\lambda)$:
\begin{equation}\label{20110613:eq2}
\langle F|G\rangle_K
= \sum_{i,j\in I}\sum_{n=0}^N\sum_{m=0}^{n-1} \binom{n}{m}
(-\partial)^{n-1-m} (F_i K_{ij;n} \partial^m G_j) \,.
\end{equation}

\begin{lemma}\label{20110617:lem1}
For every $F,G\in\mc V^\ell$, we have
$$
\partial\langle F|G\rangle_K=F\cdot K(\partial)G-G\cdot K^*(\partial)F\,.
$$
\end{lemma}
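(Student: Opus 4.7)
My plan is to prove the identity by passing the outer $\partial$ inside the substitution notation via the basic identity
\begin{equation*}
\partial\bigl[P(\lambda,\mu)(|_{\lambda=\partial}F)(|_{\mu=\partial}G)\bigr]
=\bigl[(\lambda+\mu+\partial)P(\lambda,\mu)\bigr](|_{\lambda=\partial}F)(|_{\mu=\partial}G),
\end{equation*}
which holds for every $P(\lambda,\mu)=\sum p_{mn}\lambda^m\mu^n\in\mc F[\lambda,\mu]$ and every $F,G\in\mc F$. This is just Leibniz: the three contributions to $\partial\sum p_{mn}(\partial^m F)(\partial^n G)$, namely hitting $p_{mn}$, hitting $\partial^m F$, and hitting $\partial^n G$, correspond on the polynomial side to multiplying $P$ by $\partial$, by $\lambda$, and by $\mu$, respectively.

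Next, I would apply this identity to the definition \eqref{20110617:eq3}, using as $P$ the polynomial $P_{ij}(\lambda,\mu)=(\lambda+\mu+\partial)^{-1}\bigl(K_{ij}(\mu)-K^*_{ji}(\lambda)\bigr)$. Because $(\lambda+\mu+\partial)P_{ij}(\lambda,\mu)=K_{ij}(\mu)-K^*_{ji}(\lambda)$, the formal inverse cancels and one obtains
\begin{equation*}
\partial\langle F|G\rangle_K
=\sum_{i,j\in I}\bigl(K_{ij}(\mu)-K^*_{ji}(\lambda)\bigr)(|_{\lambda=\partial}F_i)(|_{\mu=\partial}G_j).
\end{equation*}

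Now I would split this sum and evaluate each piece. Since $K_{ij}(\mu)$ does not depend on $\lambda$, the substitution $|_{\lambda=\partial}F_i$ merely multiplies by $F_i$, and the $\mu$-substitution yields $K_{ij}(\partial)G_j$, giving $\sum_{i,j}F_iK_{ij}(\partial)G_j=F\cdot K(\partial)G$. Symmetrically, $K^*_{ji}(\lambda)$ is independent of $\mu$, so the $\mu$-substitution merely multiplies by $G_j$, while the $\lambda$-substitution yields $K^*_{ji}(\partial)F_i$, producing $\sum_{i,j}G_jK^*_{ji}(\partial)F_i=G\cdot K^*(\partial)F$. Subtracting gives the desired identity.

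No step is really an obstacle: the only thing to be careful about is verifying the passage formula for $\partial$ across the substitution notation, which is a one-line check by Leibniz. If one prefers a fully explicit verification, one can instead start from the closed formula \eqref{20110613:eq2} and reduce the statement to the telescoping identity
\begin{equation*}
F\,\partial^n G-\bigl((-\partial)^n F\bigr)\,G
=\partial\!\left(\sum_{m=0}^{n-1}\binom{n}{m}(-\partial)^{n-1-m}(F\,\partial^m G)\right),
\end{equation*}
which is the standard integration-by-parts identity for $\partial^n$ and is readily proved by induction on $n$.
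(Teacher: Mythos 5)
Your proof is correct and takes essentially the same route as the paper, which simply declares the identity immediate from the definition \eqref{20110617:eq3}; your Leibniz observation that applying $\partial$ to a substituted expression corresponds to multiplying the polynomial by $(\lambda+\mu+\partial)$, thereby cancelling the formal inverse, is precisely the computation being left implicit. The alternative verification via the closed formula \eqref{20110613:eq2} and the Lagrange-type telescoping identity is also valid but not needed.
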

\begin{proof}
It immediately follows from the definition \eqref{20110617:eq3} of $\langle F|G\rangle_K$.
\end{proof}

\begin{lemma}\label{20110617:lem2}
For every $K(\partial)\in\Mat_{\ell\times\ell}(\mc F[\partial])$ and $F,G\in\mc F^\ell$, we have
$$
\langle G|F\rangle_{K^*}=-\langle F|G\rangle_K\,.
$$
In particular, the inner product $\langle\,\cdot|\cdot\,\rangle_K$ is symmetric (respectively skewsymmetric)
if $K$ is skewadjoint (resp. selfadjoint).
\end{lemma}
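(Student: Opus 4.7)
The plan is to unravel the definition \eqref{20110617:eq3} of $\langle\,\cdot|\cdot\,\rangle_K$ and perform two elementary manipulations: a relabeling of summation indices and a renaming of the formal variables. Concretely, I would start from
$$
\langle G|F\rangle_{K^*}
=\sum_{i,j\in I}
(\lambda+\mu+\partial)^{-1}\bigl(K^*_{ij}(\mu)-K^{**}_{ji}(\lambda)\bigr)
\bigl(|_{\lambda=\partial}G_i\bigr)\bigl(|_{\mu=\partial}F_j\bigr),
$$
and use the obvious fact $K^{**}=K$ (which is clear from \eqref{20110613:eq1b}) to rewrite the polynomial kernel as $K^*_{ij}(\mu)-K_{ji}(\lambda)$.

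Next I would swap the dummy indices $i\leftrightarrow j$ in the double sum and rename the formal variables $\lambda\leftrightarrow\mu$; after these innocuous renamings the kernel becomes $K^*_{ji}(\lambda)-K_{ij}(\mu)$, while the factors $F_i$ and $G_j$ return to the positions they occupy in $\langle F|G\rangle_K$. Pulling out the overall minus sign matches the right-hand side of \eqref{20110617:eq3} exactly, yielding $\langle G|F\rangle_{K^*}=-\langle F|G\rangle_K$. The only subtle point to double check is that the operator $(\lambda+\mu+\partial)^{-1}$ behaves correctly under the swap $\lambda\leftrightarrow\mu$; but since $\lambda+\mu+\partial$ is symmetric in $\lambda$ and $\mu$, and since its inverse on the image is uniquely defined (by injectivity noted right before \eqref{20110617:eq1}), the swap is legitimate and contributes nothing extra.

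For the "in particular" clause, I would simply substitute: if $K$ is skewadjoint, then $K^*=-K$, so by linearity of $\langle\,\cdot|\cdot\,\rangle_K$ in $K$ (visible from \eqref{20110613:eq2}) we have $\langle G|F\rangle_{K^*}=-\langle G|F\rangle_K$, and combined with the main identity this gives $\langle G|F\rangle_K=\langle F|G\rangle_K$. The selfadjoint case is identical up to a sign. No real obstacle is expected here; the whole argument is essentially bookkeeping. The only place where carelessness could creep in is making sure the interaction between the variable swap and the formal substitution $|_{\lambda=\partial}$, $|_{\mu=\partial}$ in \eqref{20110617:eq2} is handled correctly, so I would write this step out explicitly rather than treating it as obvious.
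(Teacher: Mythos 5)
Your proof is correct and follows essentially the same route as the paper: apply the definition \eqref{20110617:eq3} to $K^*$, use $K^{**}=K$, and relabel the dummy indices and formal variables to pull out the overall sign. The subtlety you flag about $(\lambda+\mu+\partial)^{-1}$ being compatible with the swap is handled exactly as you say, since $\lambda+\mu+\partial$ is symmetric in $\lambda,\mu$ and its inverse on the image is unique.
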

\begin{proof}
By equation \eqref{20110617:eq3} we have
$$
\begin{array}{l}
\displaystyle{
\langle G|F\rangle_{K^*}
=\sum_{i,j\in I}
(\lambda+\mu+\partial)^{-1}\big(K^*_{ij}(\mu)-K_{ji}(\lambda)\big)
\big(|_{\lambda=\partial}G_i\big)\big(|_{\mu=\partial}F_j\big)
} \\
\displaystyle{
=-\sum_{i,j\in I}
(\lambda+\mu+\partial)^{-1}\big(K_{ij}(\mu)-K^*_{ji}(\lambda)\big)
\big(|_{\lambda=\partial}F_i\big)\big(|_{\mu=\partial}G_j\big)
=-\langle F|G\rangle_K
\,. } \\
\end{array}
$$
\end{proof}

Following the notation of the previous sections, we let
$\mc Z=\ker\big(K(\partial)\big)\subset\mc F^\ell$.
Clearly, $\mc Z$ is a submodule of the $\mc C$-module $\mc F^\ell$.
\begin{lemma}\label{20110617:lem3}
If $K(\partial)\in\Mat_{\ell\times\ell}(\mc F[\partial])$ is skewadjoint, then
$\langle F|G\rangle_K\in\mc C$ for every $F,G\in\mc Z$
\end{lemma}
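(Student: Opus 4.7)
The plan is essentially to combine the two preceding lemmas. By definition, showing $\langle F|G\rangle_K \in \mc C$ means showing $\partial \langle F|G\rangle_K = 0$, so the natural starting point is Lemma \ref{20110617:lem1}, which gives the identity
$$
\partial\langle F|G\rangle_K = F\cdot K(\partial)G - G\cdot K^*(\partial)F.
$$

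The first term vanishes because $G \in \mc Z = \ker(K(\partial))$, so $K(\partial)G = 0$. For the second term, I would use the skewadjointness hypothesis $K^* = -K$ to rewrite $K^*(\partial)F = -K(\partial)F$, which is zero since $F \in \mc Z$ as well. Thus both terms on the right-hand side are zero, giving $\partial\langle F|G\rangle_K = 0$, i.e., $\langle F|G\rangle_K \in \mc C$.

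There is no genuine obstacle here; the lemma is essentially a one-line corollary of Lemma \ref{20110617:lem1} together with the definition of skewadjointness. The only reason the statement is singled out as a lemma is presumably that it establishes that $\langle\,\cdot|\cdot\,\rangle_K$ restricts to a well-defined $\mc C$-bilinear form on $\mc Z \times \mc Z$, which is what is needed for the later identification of $\mc H^\bullet_K(\mc V)$ with $\tilde H(N\ell, S)$ via the matrix $S$ of this restricted form (as promised in Theorem \ref{20110601:thm}).
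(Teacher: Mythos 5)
Your proof is correct and is exactly the argument the paper intends: the paper's proof of this lemma simply says it is an immediate consequence of Lemma \ref{20110617:lem1}, and your spelled-out version (both terms on the right-hand side vanish, using $G\in\ker K(\partial)$ for the first and skewadjointness plus $F\in\ker K(\partial)$ for the second, so the derivative is zero and the value lies in $\mc C=\ker\partial$) is precisely that deduction.
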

\begin{proof}
It is an immediate consequence of Lemma \ref{20110617:lem1}.
\end{proof}

According to Lemmas \ref{20110617:lem2} and \ref{20110617:lem3}, if $K(\partial)\in\Mat_{\ell\times\ell}(\mc F[\partial])$ is skewadjoint,
the restriction of $\langle\,\cdot|\cdot\,\rangle_K$ to $\mc Z\subset\mc F^\ell$
defines a symmetric bilinear form on $\mc Z$ with values in $\mc C$, which we denote by
$$
\langle\,\cdot|\cdot\,\rangle^0_K:=\langle\,\cdot|\cdot\,\rangle_K\big|_{\mc Z}:\,\,\mc Z\times\mc Z\to\mc C\,.
$$

\begin{lemma}\label{20110617:lem4}
Assuming that $K(\partial)\in\Mat_{\ell\times\ell}(\mc F[\partial])$ is a skewadjoint operator
and $P(\partial)\in\Mat_{\ell\times\ell}(\mc F[\partial])$ is such that $K(\partial)P(\partial)+P^*(\partial)K(\partial)=0$,
we have
$$
\langle P(\partial)F|G\rangle_K+\langle F|P(\partial)G\rangle_K=0
$$
for every $F,G\in\mc F^\ell$.
\end{lemma}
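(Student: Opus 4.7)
My plan is to reduce the identity to a consequence of Lemma \ref{20110617:lem1} and the hypothesis on $P$, using a ``differentiate, simplify, reverse integration by parts'' pattern.

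First, I apply $\partial$ to the claimed left hand side. Using Lemma \ref{20110617:lem1} twice,
\begin{equation*}
\partial\bigl(\langle PF|G\rangle_K+\langle F|PG\rangle_K\bigr)
= (PF)\cdot KG + F\cdot K(PG) - G\cdot K^*(PF) - (PG)\cdot K^*F.
\end{equation*}

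Second, I invoke the hypothesis $KP+P^*K=0$, i.e.\ $KP=-P^*K$; taking the adjoint of this operator identity gives $K^*P=-P^*K^*$ as well. Substituting $K(PG)=-P^*(KG)$ in the second term and $K^*(PF)=-P^*(K^*F)$ in the third, the expression becomes
\begin{equation*}
\bigl[(PF)\cdot KG - F\cdot P^*(KG)\bigr] + \bigl[G\cdot P^*(K^*F) - (PG)\cdot K^*F\bigr].
\end{equation*}

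Third, I apply Lemma \ref{20110617:lem1} again, now with the operator $P$ in place of $K$, once with $(U,V)=(KG,F)$ and once with $(U,V)=(K^*F,G)$. These identify the two bracketed expressions as $\partial\langle KG|F\rangle_P$ and $-\partial\langle K^*F|G\rangle_P$ respectively. Using the skewadjointness $K^*=-K$, the total derivative of the left hand side simplifies to $\partial\bigl(\langle KG|F\rangle_P+\langle KF|G\rangle_P\bigr)$.

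The main obstacle is to upgrade this total-derivative identity to the claimed pointwise equality $\langle PF|G\rangle_K+\langle F|PG\rangle_K=0$. My plan is to argue directly from the defining formula \eqref{20110617:eq3} at the level of the generating polynomials: under the hypothesis $KP+P^*K=0$, the combination
\begin{equation*}
\sum_i\bigl[K_{ij}(\mu)-K^*_{ji}(\lambda)\bigr]P_{ik}(\lambda) + \sum_l\bigl[K_{kl}(\mu)-K^*_{lk}(\lambda)\bigr]P_{lj}(\mu)
\end{equation*}
should lie in $(\lambda+\mu+\partial)\mc F[\lambda,\mu]$, once one tracks the Leibniz-rule corrections produced by non-constant coefficients of $P$ when pushing $P_{ik}(\partial)$ through the formal substitution $|_{\lambda=\partial}$. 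Applying the formal inverse $(\lambda+\mu+\partial)^{-1}$ then yields the pointwise identity with no residual integration constant, because the defining formula \eqref{20110617:eq3} produces no zero-order term; this polynomial verification is the technical heart of the proof.
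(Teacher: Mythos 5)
Your opening computation (differentiating the left-hand side, using Lemma \ref{20110617:lem1} twice, and substituting $KP=-P^*K$) is correct as far as it goes, but it only shows that $\partial$ of the left-hand side equals $\partial\bigl(\langle KG|F\rangle_P+\langle KF|G\rangle_P\bigr)$, and the route you propose for upgrading this to a pointwise identity cannot succeed, because the identity is simply false for arbitrary $F,G\in\mc F^\ell$. The hypothesis ``$F,G\in\mc F^\ell$'' in the printed statement is a slip: the lemma is invoked in the proof of Theorem \ref{20110601:thm} only to establish invariance of the form $\langle\,\cdot|\cdot\,\rangle^0_K$ on $\mc Z=\ker\big(K(\partial)\big)$, and the paper's own proof uses $F,G\in\ker\big(K(\partial)\big)$ in its final step. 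A counterexample to the unrestricted statement: take $\ell=1$, $K(\partial)=P(\partial)=\partial$. Then $K$ is skewadjoint, $KP+P^*K=\partial^2-\partial^2=0$, and formula \eqref{20110613:eq2} gives $\langle F|G\rangle_K=FG$, so the left-hand side is $F'G+FG'=\partial(FG)$, which is nonzero for generic $F,G\in\mc F$ and vanishes exactly when one restricts to $F,G\in\ker\partial=\mc Z$.

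This example also pinpoints the flaw in your ``technical heart''. The combination you exhibit does lie in $(\lambda+\mu+\partial)\mc F[\lambda,\mu]$ --- indeed this is automatic, since the pairing $\langle\,\cdot|\cdot\,\rangle_K$ is well defined --- but membership in the image of $\lambda+\mu+\partial$ only guarantees that $(\lambda+\mu+\partial)^{-1}$ of it is a well-defined polynomial; it does not force that polynomial to be zero. In the example the quotient polynomial is $\lambda+\mu$, which reproduces exactly $F'G+FG'$. The missing idea, which is the actual content of the paper's proof, is structural: after pushing $P(\partial)$ through the substitutions, the numerator organizes itself into $\bigl(P_{ki}(\lambda)-P^*_{ik}(\mu+\partial)\bigr)K_{kj}(\mu)$ plus $\bigl(P_{kj}(\mu)-P^*_{jk}(\lambda+\mu)\bigr)K_{ki}(\lambda)$, where each first factor is divisible \emph{as an operator} by $\lambda+\mu+\partial$, say $P_{ki}(\lambda)-P^*_{ik}(\mu+\partial)=(\lambda+\mu+\partial)\circ Q_{ki}(\lambda,\mu+\partial)$. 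Cancelling against $(\lambda+\mu+\partial)^{-1}$ leaves expressions of the form $Q_{ki}(\lambda,\partial)K_{kj}(\partial)G_j$ and $Q_{kj}(\mu,\partial)K_{ki}(\partial)F_i$, which vanish after summing over $j$ (respectively $i$) precisely because $G$ (respectively $F$) is annihilated by $K(\partial)$. Without invoking $F,G\in\mc Z$ at this stage there is no way to close the argument, so you should both correct the hypothesis and replace the last paragraph of your plan by this factorization step.
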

\begin{proof}
By equation \eqref{20110617:eq3}, we have
$$
\begin{array}{l}
\displaystyle{
\langle P(\partial)F|G\rangle_K
} \\
\displaystyle{
=\sum_{i,j,k\in I}
(\lambda+\mu+\partial)^{-1}\big(K_{kj}(\mu)+K_{jk}(\lambda)\big)
\big(|_{\lambda=\partial}P_{ki}(\partial)F_i\big)\big(|_{\mu=\partial}G_j\big)
} \\
\displaystyle{
=\sum_{i,j,k\in I}
(\lambda+\mu+\partial)^{-1}\big(K_{kj}(\mu)+K_{jk}(\lambda+\partial)\big)P_{ki}(\lambda)
\big(|_{\lambda=\partial}F_i\big)\big(|_{\mu=\partial}G_j\big)
} \\
\displaystyle{
=\!\!\!\sum_{i,j,k\in I}\!\!\!
(\lambda+\mu+\partial)^{-1}\big(P_{ki}(\lambda)K_{kj}(\mu)
\!-\!
P^*_{jk}(\lambda+\mu)K_{ki}(\lambda)\big)
\big(|_{\lambda=\partial}F_i\big)\big(|_{\mu=\partial}G_j\big)
.}
\end{array}
$$
In the last identity we used the assumption that $K(\partial)P(\partial)=-P^*(\partial)K(\partial)$.
Similarly,
$$
\begin{array}{l}
\displaystyle{
\langle F|P(\partial)G\rangle_K
=\sum_{i,j,k\in I}
(\lambda+\mu+\partial)^{-1}
} \\
\displaystyle{
\times\big(-P^*_{ik}(\mu+\partial)K_{kj}(\mu)+P_{kj}(\mu)K_{ki}(\lambda)\big)
\big(|_{\lambda=\partial}F_i\big)\big(|_{\mu=\partial}G_j\big)
\,.}
\end{array}
$$
Combining these two equations, we get
\begin{equation}\label{20110617:eq6}
\begin{array}{l}
\displaystyle{
\langle P(\partial)F|G\rangle_K
+\langle F|P(\partial)G\rangle_K
} \\
\displaystyle{
=\sum_{i,j,k\in I}
(\lambda+\mu+\partial)^{-1}
\Big(
\big(P_{ki}(\lambda)-P^*_{ik}(\mu+\partial)\big)K_{kj}(\mu)
} \\
\displaystyle{
\,\,\,\,\,\,\,\,\,\,\,\,\,\,\,\,\,\,
+\big(P_{kj}(\mu)-P^*_{jk}(\lambda+\mu)\big)K_{ki}(\lambda)
\Big)
\big(|_{\lambda=\partial}F_i\big)\big(|_{\mu=\partial}G_j\big)
\,.}
\end{array}
\end{equation}
We next observe that the differential operator $P_{ki}(\lambda)-P^*_{ik}(\mu+\partial)$
lies in $(\lambda+\mu+\partial)\circ(\mc F[\lambda,\mu])[\partial]$,
i.e. it is of the form
$$
P_{ki}(\lambda)-P^*_{ik}(\mu+\partial)=(\lambda+\mu+\partial)\circ Q_{ki}(\lambda,\mu+\partial)\,,
$$
for some polynomial $Q_{ki}$.
Hence,
$$
(\lambda+\mu+\partial)^{-1}\big(P_{ki}(\lambda)-P^*_{ik}(\mu+\partial)\big)K_{kj}(\mu)\big(|_{\mu=\partial}G_j\big)
=Q_{ik}(\lambda,\partial)K_{kj}(\partial)G_j\,,
$$
which, after summing with respect to $j\in I$, becomes zero since, by assumption, $G\in\ker(K(\partial))$.
Similarly,
$$
(\lambda+\mu+\partial)^{-1}
\big(P_{kj}(\mu)-P^*_{jk}(\lambda+\mu)\big)
K_{ki}(\lambda)
\big(|_{\lambda=\partial}F_i\big)
=Q_{kj}(\mu,\partial)K_{ki}(\partial)F_i\,,
$$
which is zero after summing with respect to $i\in I$, since $F\in\ker(K(\partial))$.
Therefore the RHS of \eqref{20110617:eq6} is zero, proving the claim.
\end{proof}
\begin{proposition}\label{20110613:prop}
Assuming that $\mc F$ is a linearly closed differential field,
and that $K(\partial)\in\Mat_{\ell\times\ell}(\mc F[\partial])$
is a skewadjoint $\ell\times\ell$ matrix differential operator with invertible leading coefficient,
the $\mc C$-bilieanr form $\langle\,\cdot|\cdot\,\rangle^0_K:\,\mc Z\times\mc Z\to\mc C$
is nondegenerate.
\end{proposition}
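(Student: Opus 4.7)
The plan is to reduce nondegeneracy of $\langle\cdot|\cdot\rangle^0_K$ on $\mc Z$ to a statement about an explicit bilinear form on the jet space $\mc F^{N\ell}$, and then invoke Lemma~\ref{20110612:lem1} to pass from $\mc F^{N\ell}$ back to $\mc Z$ via a Wronskian-type isomorphism.

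Starting from the explicit formula \eqref{20110613:eq2} and expanding each $(-\partial)^{n-1-m}(F_i K_{ij;n} \partial^m G_j)$ by the Leibniz rule, one observes that $\langle F | G \rangle_K$ depends only on the derivatives $F_i^{(a)}$ and $G_j^{(b)}$ with $a,b \geq 0$ and $a+b \leq N-1$. I will therefore write
\[
\langle F | G \rangle_K \,=\, \sum_{a,b=0}^{N-1} \mathbf{F}_a^T B_{ab} \mathbf{G}_b \,,
\]
where $\mathbf{F}_a = (F_i^{(a)})_{i\in I}$, $\mathbf{G}_b = (G_j^{(b)})_{j\in I}$, and $B_{ab} \in \Mat_{\ell\times\ell}(\mc F)$ is read off from the coefficients of $K$ and their derivatives. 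By the range of the summation in \eqref{20110613:eq2} one automatically has $B_{ab} = 0$ whenever $a+b \geq N$. The key point is to compute the ``anti-diagonal'' blocks $a+b = N-1$: there only the summand with $n=N$ and no derivative landing on $K_{ij;N}$ contributes, and after collecting the binomial coefficients via a standard Chu--Vandermonde identity one obtains $B_{ab} = (-1)^a K_N$. Assembled as an $N\times N$ block matrix with $\ell\times\ell$ entries, $B$ is thus block anti-triangular with invertible anti-diagonal blocks; reversing the block row order turns it into a block upper triangular matrix with $\pm K_N$ on the diagonal, so $B$ is invertible in $\Mat_{N\ell\times N\ell}(\mc F)$.

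To finish, Corollary~A.3.7 of \cite{DSK11} (already invoked in the proof of Lemma~\ref{20110612:lem2}) gives $\dim_{\mc C}\mc Z = N\ell$, and on any $\mc C$-basis $H_1,\dots,H_{N\ell}$ of $\mc Z$ Lemma~\ref{20110612:lem1} guarantees that the jet vectors $\mathbf{H}_r \in \mc F^{N\ell}$ assemble into an invertible Wronskian matrix $W \in \Mat_{N\ell\times N\ell}(\mc F)$. If $F\in\mc Z$ is orthogonal to all of $\mc Z$ under $\langle\cdot|\cdot\rangle^0_K$, then the row vector $\mathbf{F}^T B W$ vanishes in $\mc F^{N\ell}$, and invertibility of both $B$ and $W$ forces $\mathbf{F}=0$, hence $F=0$. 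The only nontrivial step will be the combinatorial identification of the anti-diagonal block of $B$; everything else is bookkeeping on \eqref{20110613:eq2} and elementary linear algebra. Skewadjointness of $K$ will be used only to identify $\ker K = \ker K^*$ and to secure symmetry of the form; in fact the same argument shows that for an arbitrary $K$ with invertible leading coefficient the bilinear concomitant defines a nondegenerate pairing $\ker K^* \times \ker K \to \mc C$.
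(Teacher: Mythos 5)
Your proposal is correct, and it rests on the same two pillars as the paper's proof --- the Wronskian nondegeneracy of a $\mc C$-basis of $\mc Z$ (Lemma \ref{20110612:lem1}, via $\dim_{\mc C}\mc Z=N\ell$ from \cite[Cor.~A.3.7]{DSK11}) and the invertibility of $K_N$ --- but it assembles them differently. The paper views $G\mapsto\langle F|G\rangle_K$ as a single quasiconstant $1\times\ell$ differential operator $P_F(\partial)$ of order $\le N-1$, invokes Lemma \ref{20110612:lem2} as a black box to conclude $P_F(\partial)=0$, and then reads off only the coefficient of $\partial^{N-1}$, which is $\sum_i F_i(K_N)_{ij}$ (the single binomial sum $\sum_{m=0}^{N-1}\binom{N}{m}(-1)^{N-1-m}=1$); invertibility of $K_N$ finishes it. You instead unpack the pairing into the full concomitant matrix $B=(B_{ab})$ on jets and prove $B$ itself is invertible, which obliges you to compute \emph{every} anti-diagonal block: your claims check out ($B_{ab}=0$ for $a+b\ge N$ since a derivative count gives $a+b\le n-1-c\le N-1$, and $B_{ab}=(-1)^aK_N$ on $a+b=N-1$ from $\sum_{m=0}^{b}\binom{N}{m}(-1)^{N-1-m}\binom{N-1-m}{a}=(-1)^a$, which I verified in low cases and is indeed Chu--Vandermonde), but this is strictly more combinatorial work than the paper needs, since the paper never has to know the sub-anti-diagonal blocks of $B$. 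What your route buys in exchange is a genuinely stronger conclusion: the concomitant defines a nondegenerate pairing of the full $N\ell$-dimensional jet spaces, and hence, as you note at the end, a nondegenerate pairing $\ker K^*\times\ker K\to\mc C$ for an arbitrary (not necessarily skewadjoint) quasiconstant $K$ with invertible leading coefficient --- a fact the paper's leading-coefficient argument also yields implicitly but does not state. The one step you defer, the Chu--Vandermonde evaluation, is genuinely the only computation in your plan and it does come out as claimed, so the proposal stands as a complete (if slightly heavier) alternative proof.
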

\begin{proof}
Given $F\in\mc F^\ell$, consider the map $P_F:\,\mc F^\ell\to\mc F$ given by $G\mapsto P_F(G)=\langle F|G\rangle^0_K$.
Equation \eqref{20110613:eq2} can be rewritten
by saying that $P_F$ is a $1\times\ell$ matrix differential operator, of order less than or equal to $N-1$,
with entries
$$
(P_F)_{j}(\partial)
= \sum_{i\in I}\sum_{n=0}^N\sum_{m=0}^{n-1} \binom{n}{m}
(-\partial)^{n-1-m} \circ F_i K_{ij;n} \partial^m \,.
$$
Suppose now that $P_F(G)=\langle P|G\rangle^0_K=0$ for all $G\in\mc Z\subset\mc F^\ell$.
By Lemma \ref{20110612:lem2} we get that $P_F(\partial)=0$.
On the other hand, the (left) coefficient of $\partial^{N-1}$ in $(P_F)_j(\partial)$ is
$$
0=\sum_{i\in I}\sum_{m=0}^{N-1} \binom{N}{m}
(-1)^{N-1-m} F_i (K_N)_{ij}=\sum_{i\in I}F_i(K_N)_{ij}\,.
$$
Since, by assumption, $K_N\in\Mat_{\ell\times\ell}(\mc F)$ is invertible, we conclude that $F=0$.
\end{proof}

\subsection{Proof of Theorem \ref{20110601:thm}}\label{sec:4.2}

Recall from Lemma \ref{20110612:lemf} that $\mc H^{-1}_K=\mc Z^{-1}_K$
is isomorphic, as a $\mc C$-vector space, to $\mc Z=\ker\big(K(\partial)\big)$,
and, from Theorem \ref{110213:thm}, that $\dim_{\mc C}\mc Z=N\ell$.
By Corollary \ref{20110612:thm}, the $\mb Z$-graded Lie superalgebra $\mc H^\bullet_K(\mc V)$
is transitive, i.e. if $P\in\mc H^k_K,\,k\geq0$,
is such that $[P,\mc H^{-1}_K]=0$, then $P=0$.
Hence, due to transitivity, the representation of $\mc H^0$ on $\mc H^{-1}_K=\mc Z^{-1}_K$ is faithful.
Identifying $\mc Z^{-1}_K\simeq\mc Z$, we can therefore view
$\mc H^0_K$ as a subalgebra of the Lie algebra $gl(\mc Z)=gl_{N\ell}$.
Recall, from Theorem \ref{110213:thm} that $\mc H^0_K\simeq\mc A^0_K$
consists of elements of the form $Q=\big(\sum_j P^*_{ij}(\partial)u_j\big)_{i\in I}\in\mc V^\ell$,
where $P(\partial)=\big(P_{ij}(\partial)\big)_{i\in I}$ is an $\ell\times\ell$ matrix differential operator
of order at most $N-1$ solving equation \eqref{110226:eq5}.
Moreover, by \eqref{20110613:eq3}, the bracket of an element $Q\in\mc H^0_K$ as above and an element
$C\in\mc Z^{-1}_K=\mc H^{-1}_K\subset\mc V/\partial\mc V$, is given by
$$
[Q,C]
=\sum_{i,j\in I}\int \big(P^*_{ij}(\partial)u_j\big)\frac{\delta C}{\delta u_i}
=\sum_{i,j\in I}\int u_iP_{ij}(\partial)\frac{\delta C}{\delta u_j}\,.
$$
Hence, by the identification \eqref{20110614:eq1},
the corresponding action of $Q\in\mc H^0_K$ on $\mc Z\subset\mc F^\ell$ is simply given by
the standard action of the $\ell\times\ell$ matrix differential operator $P(\partial)$ on $\mc F^\ell$.
By Lemmas \ref{20110617:lem2} and \ref{20110617:lem3} and by Proposition \ref{20110613:prop},
$\langle\,\cdot|\cdot\,\rangle^0_K$ is a nondegenerate symmetric bilinear form on $\mc Z$,
and by Lemma \ref{20110617:lem4}
it is invariant with respect to this action of $Q\in\mc H^0_K$ on $\mc Z$.
Hence, the image of $\mc H^0_K$ via the above embedding $\mc H^0_K\to gl(\mc Z)$,
is a subalgebra of $so(\mc Z,\langle\,\cdot|\cdot\,\rangle^0_K)$.
Due to transitivity of the $\mb Z$-graded Lie superalgebra $\mc H^\bullet_K(\mc V)$, it
embeds in the full prolongation of the pair $\big(\mc Z,so(\mc Z,\langle\,\cdot|\cdot\,\rangle^0_K)\big)$,
which, by Proposition \ref{20110622:prop}, is isomorphic to $\tilde H(N\ell,S)$,
where $S$ is the $N\ell\times N\ell$ matrix of the bilinear form $\langle\,\cdot|\cdot\,\rangle^0_K$, in some basis.
By Theorem \ref{110213:thm}, $\dim_{\mc C}\mc H^k_{K}=\binom{N\ell}{k+2}$, which is equal to $\dim_{\mc C}\tilde H_k(N\ell,S)$.
We thus conclude that the $\mb Z$-graded Lie superalgebras $\mc H^\bullet_K(\mc V)$ and $\tilde H(N\ell,S)$ are isomorphic.
\begin{remark}
The same arguments as above show that,
without any assumption on the algebra of differential functions $\mc V$ and on the differential field $\mc F$
(with subfield of constants $\mc C$),
and for every Hamiltonian operator $K$ (not necessarily quasiconstant nor with invertible leading coefficient),
we have an injective homomorphism of $\mb Z$-graded Lie superalgebras
$\mc H^\bullet_K(\mc V)/\mc E\mc H^\bullet_K(\mc V)\to W(n)$,
where $n=\dim_{\mc C}(\mc H^{-1}_K)$.
\end{remark}


\section{Translation invariant variational Poisson cohomology}
\label{sec:5}

In the previous sections we studied the variational Poisson cohomology $\tilde H^\bullet_K(\mc V)$
in the simplest case when the differential field of quasiconstants $\mc F\subset\mc V$ is linearly closed.
In this section we consider the other extreme case, often studied in literature --
the translation invariant case, when $\mc F=\mc C$.

\subsection{Upper bound of the dimension of the translation invariant variational Poisson cohomology}
\label{sec:5.1}

Let $\mc V$ be a normal algebra of differential functions, and assume that it is \emph{translation invariant},
i.e. the differential field $\mc F$ of quasiconstants coincides with the field $\mc C$ of constants.
Let $K(\partial)$ be an $\ell\times\ell$ matrix differential operator of order $N$,
with coefficients in $\Mat_{\ell\times\ell}(\mc C)$, and with invertible leading coefficient $K_N$.

For $k\geq-1$, denote by $\tilde{\mc H}^k$ the space of arrays $\big(P_{i_0,\dots,i_k}(\lambda_0,\dots,\lambda_k)\big)_{i_0,\dots,i_k\in I}$
with entries $P_{i_0,\dots,i_k}(\lambda_0,\dots,\lambda_k)\in\mc C[\lambda_0,\dots,\lambda_k]$,
of degree at most $N-1$ in each variable,
which are skewsymmetric with respect to simultaneous permutations of the indices $i_0,\dots,i_k$
and the variables $\lambda_0,\dots,\lambda_k$
(in the notation of \cite{DSK11}, $\tilde{\mc H}^k=\tilde\Omega^{k-1}_{0,0}$).
In particular, $\tilde{\mc H}^{-1}=\mc C$.
Note that, for $k\geq-1$, we have
\begin{equation}\label{20110618:eq2}
\dim_{\mc C}\tilde{\mc H}^k=\binom{N\ell}{k+1}\,.
\end{equation}

The long exact sequence \cite[eq.(11.4)]{DSK11} becomes (in the notation of the present paper):
\begin{equation}\label{20110618:eq1}
\begin{array}{l}
\displaystyle{
0\to\mc C
\stackrel{\beta_{-1}}{\longrightarrow}
\mc H^{-1}_K
\stackrel{\gamma_{-1}}{\longrightarrow}
\tilde{\mc H}^0
\stackrel{\alpha_0}{\longrightarrow}
\tilde{\mc H}^0
\stackrel{\beta_0}{\longrightarrow}
\dots
} \\
\displaystyle{
\dots
\stackrel{\gamma_{k-1}}{\longrightarrow}
\tilde{\mc H}^k
\stackrel{\alpha_{k}}{\longrightarrow}
\tilde{\mc H}^k
\stackrel{\beta_{k}}{\longrightarrow}
\mc H^{k}_K
\stackrel{\gamma_{k}}{\longrightarrow}
\tilde{\mc H}^{k+1}
\stackrel{\alpha_{k+1}}{\longrightarrow}
\tilde{\mc H}^{k+1}
\stackrel{\beta_{k+1}}{\longrightarrow}
\dots
}
\end{array}
\end{equation}

For every $k\geq-1$, we have $\dim_{\mc C}(\mc H^k_K)=\dim_{\mc C}(\ker\gamma_k)+\dim_{\mc C}(\im\gamma_k)$.
By exactness of the sequence \eqref{20110618:eq1},
we have that $\dim_{\mc C}(\im\gamma_k)=\dim_{\mc C}(\ker\alpha_{k+1})$,
and $\dim_{\mc C}(\ker\gamma_k)=\dim_{\mc C}(\im\beta_k)$.
Moreover, $\dim_{\mc C}(\im\beta_{-1})=1$ and, for $k\geq0$,
we have, again by exactness of \eqref{20110618:eq1}, that
$\dim_{\mc C}(\im\beta_k)=\dim_{\mc C}\tilde{\mc H}^k-\dim_{\mc C}(\ker\beta_k)
=\dim_{\mc C}\tilde{\mc H}^k-\dim_{\mc C}(\im\alpha_k)=\dim_{\mc C}(\ker\alpha_k)$.
Hence, using \eqref{20110618:eq2} we conclude that
\begin{equation}\label{20110620:eq2}
\dim_{\mc C}(\mc H^{-1}_K)=1+\dim_{\mc C}(\ker\alpha_{0})\leq N\ell+1\,,
\end{equation}
and, for $k\geq0$ (by the Tartaglia-Pascal triangle),
\begin{equation}\label{20110620:eq1}
\dim_{\mc C}(\mc H^k_K)=\dim_{\mc C}(\ker\alpha_k)+\dim_{\mc C}(\ker\alpha_{k+1})
\leq\binom{N\ell+1}{k+2}\,.
\end{equation}

Recalling equation \eqref{20110612:eq1}, we have
$\mc H^{-1}_K=\mc Z^{-1}_K=\big\{\tint f\in\mc V/\partial\mc V\,\big|\,K(\partial)\frac{\delta f}{\delta u}=0\big\}$.
By Lemma \ref{20110612:lemf}(b) we have
a surjective map $\frac{\delta}{\delta u}:\,\mc H^{-1}_K\to\ker\big(K(\partial)\big|_{\mc C^\ell}\big)$.
Recall that, if $\mc V$ is a normal algebra of differential functions,
we have $\ker\big(\frac{\delta}{\delta u}:\,\mc V\to\mc V^\ell\big)=\mc C+\partial\mc V$ \cite{BDSK09}.
It follows that $\ker\big(\frac{\delta}{\delta u}\big|_{\mc H^{-1}_K}\big)=\ker\big(\frac{\delta}{\delta u}\big|_{\mc V/\partial\mc V}\big)\simeq\mc C$.
Therefore,
$$
\mc H^{-1}_K
=
\tint\mc C\oplus\big\{\tint uA\,\big|\,A\in\ker(K_0)\subset\mc C^\ell\big\}\,,
$$
where, $u=(u_1,\dots,u_\ell)$, and $K_0=K(0)$ is the constant coefficient of the differential operator $K(\partial)$.
Hence,
\begin{equation}\label{20110629:eq4}
\dim_{\mc C}(\mc H^{-1}_K)
=1+\dim_{\mc C}(\ker K_0)
=1+\ell-\rk(K_0)\,.
\end{equation}
In conclusion, the inequality in \eqref{20110620:eq2} is a strict inequality
unless $K(\partial)$ has order 1 with $K_0=0$, i.e. $K(\partial)=S\partial$,
where $S\in\Mat_{\ell\times\ell}(\mc C)$ is a nondegenerate matrix.

\begin{remark}\label{20110629:rem1}
The map $\alpha_k:\,\tilde{\mc H}^k\to\tilde{\mc H}^k$ can be constructed as follows \cite{DSK11}.
Let $P=\big(P_{i_0,\dots,i_k}(\lambda_0,\dots,\lambda_k)\big)_{i_0,\dots,i_k\in I}$ be in $\tilde{\mc H}^k$,
i.e. $P_{i_0,\dots,i_k}(\lambda_0,\dots,\lambda_k)$ are
polynomials of degree at most $N-1$ in each variable $\lambda_i$ with coefficients in $\mc C$,
skewsymmetric with respect to simultaneous permutations in the indices $i_0,\dots,i_k$ 
and the variables $\lambda_0,\dots,\lambda_k$.
Then, there exist a unique element $\alpha_k(P):=R=\big(R_{i_0,\dots,i_k}(\lambda_0,\dots,\lambda_k)\big)_{i_0,\dots,i_k\in I}\in\tilde{\mc H}^k$
and a (unique) array 
$Q=\big(Q_{j,i_1,\dots,i_k}(\lambda_1,\dots,\lambda_k)\big)_{j,i_1,\dots,i_k\in I}$,
where $Q_{j,i_1,\dots,i_k}(\lambda_1,\dots,\lambda_k)$ are polynomials of degree at most $N-1$ in each variable,
with coefficients in $\mc C$, skewsymmetric with respect of simultaneous permutations of the indices $i_1,\dots,i_k$
and the variables $\lambda_1,\dots,\lambda_k$,
such that the following identity holds in $\mc C[\lambda_0,\dots,\lambda_k]$:
%
%
\begin{equation}\label{20110629:eq5}
\begin{array}{c}
\displaystyle{
(\lambda_0+\dots+\lambda_k)P_{i_0,\dots,i_k}(\lambda_0,\dots,\lambda_k)
=
R_{i_0,\dots,i_k}(\lambda_0,\dots,\lambda_k)
} \\
\displaystyle{
+ \sum_{\alpha=0}^k(-1)^\alpha
\sum_{j\in I}
Q_{j,i_0,\stackrel{\alpha}{\check{\dots}},i_k}(\lambda_0,\stackrel{\alpha}{\check{\dots}},\lambda_k)
K_{ji_\alpha}(\lambda_\alpha)
\,.
}
\end{array}
\end{equation}
Hence, $\ker(\alpha_k)$ is in bijection with the space $\Sigma_k$ of arrays $Q$ as above,
satisfying the condition:
$$
\sum_{\alpha=0}^k(-1)^\alpha
\sum_{j\in I}
Q_{j,i_0,\stackrel{\alpha}{\check{\dots}},i_k}(\lambda_0,\stackrel{\alpha}{\check{\dots}},\lambda_k)K_{ji_\alpha}(\lambda_\alpha)
\in(\lambda_0+\dots+\lambda_k)\mc C[\lambda_0,\dots,\lambda_k]\,.
$$
For example, $\Sigma_0=\big\{Q\in\mc C^\ell\,\big|\,K_0^TQ=0\big\}$,
hence its dimension equals $\dim_{\mc C}(\ker\alpha_0)=\dim(\ker K_0)=\ell-\rk(K_0)$ (in accordance with \eqref{20110629:eq4}).
Furthermore, $\Sigma_1$ consists of polynomials $Q(\lambda)$ with coefficients in $\Mat_{\ell\times\ell}(\mc C)$,
of degree at most $N-1$, such that
$$
K^T(-\lambda)Q(\lambda)=Q^T(-\lambda)K(\lambda)\,.
$$
\end{remark}
\begin{remark}
It is clear from Remark \ref{20110629:rem1} that, while in the linearly closed case,
the Lie superalgebra $\mc H^\bullet_K(\mc V)$ depends only on $\ell$ and the order $N$ of $K(\partial)$,
in the translation invariant case $\mc F=\mc C$ the dimension of $\mc H^\bullet_K(\mc V)$
depends essentially on the operator $K(\partial)$.
Hence, in this sense, the choice of an algebra $\mc V$ over a linearly closed differential field $\mc F$
seems to be a more natural one.
This is the key message of the paper.
%
\end{remark}

In the next section we study in more detail the variational Poisson cohomology $\mc H^k_K$,
and its $\mb Z$-graded Lie superalgebra structure, for a ``hydrdynamic type'' Hamiltonian operator,
i.e. for $K(\partial)=S\partial$, where $S\in\Mat_{\ell\times\ell}(\mc C)$ is nondegenerate and symmetric.

\subsection{Translation invariant variational Poisson cohomology for $K=S\partial$}
\label{sec:5.2}

As in the previous section, let $\mc V$ be a translation invariant normal algebra of differential functions,
with field of constants $\mc C$ (which coincides with the field of quasiconstants).
Let $S\in\Mat_{\ell\times\ell}(\mc C)$ be nondegenerate and symmetric,
and consider the Hamiltonian operator $K(\partial)=S\partial$.

For $k\geq-1$, we denote by $\Lambda^{k+1}$ the space of skewsymmetric $(k+1)$-linear forms on $\mc C^\ell$,
i.e. the space of arrays
$B=\big(b_{i_0,\dots,i_k}\big)_{i_0,\dots,i_k\in I}$,
totally skewsymmetric with respect to permutations of the indices $i_0,\dots,i_k$.
For $k\geq0$, we also denote by $\Lambda_S^{k+1}$ the space of arrays of the form
$A=\big(a_{j,i_1,\dots,i_k}\big)_{j,i_1,\dots,i_k\in I}$,
which are skewsymmetric with respect to permutations of the indices $i_1,\dots,i_k$,
and which satisfy the equation
$$
\sum_{j\in I}s_{i_0,j}a_{j,i_1,i_2\dots,i_{k}}=
-\sum_{j\in I}a_{j,i_0,i_2,\dots,i_{k}}s_{j,i_1}\,.
$$
Clearly, $\dim_{\mc C}(\Lambda^{k+1}_S)=\dim_{\mc C}(\Lambda^{k+1})=\binom{\ell}{k+1}$ for every $k\geq-1$.
For example, $\Lambda^0=\mc C$, $\Lambda^1_S=\Lambda^1=\mc C^\ell$,
$\Lambda^2$ is the space of skewsymmetric $\ell\times\ell$ matrices over $\mc C$, and
$$
\Lambda^2_S=\big\{A\in\Mat{}_{\ell\times\ell}(\mc C)\,\big|\,A^TS+SA=0\big\}=so(\ell,S)\,.
$$
Given $A=\big(a_{j,i_0,\dots,i_k}\big)_{j,i_0,\dots,i_k\in I}\in\Lambda_S^{k+2}$,
we denote
$$
u A=\big(\sum_{j\in I}u_ja_{j,i_0,\dots,i_k}\big)_{i_0,\dots,i_k\in I}\in W^{var}_{k}\,.
$$
Let $\mc A^\bullet=\bigoplus_{k=-1}^\infty\mc A^k$, where
$$
\mc A^{k}
=\Lambda^{k+1}\oplus
\big\{uA\,\big|\,A\in\Lambda^{k+2}_S\big\}\subset W^{var}_{k}\,,\,\,k\geq-1.
$$

\begin{theorem}\label{20110622:thm}
Let $\mc V$ be trnslation invariant normal algebra of differential functions,
and let $K(\partial)=S\partial$, where $S$ is a symmetric nondegenerate $\ell\times\ell$
matrix over $\mc C$. Then:
\begin{enumerate}[(a)]
\item
$\mc A^\bullet$ is a subalgebra of the $\mb Z$-graded Lie superalgebra $\mc Z^\bullet_K(\mc V)$,
complementary to the ideal $\mc B^\bullet_K(\mc V)$.
In particular, we have the following decomposition of $\mc Z^k_K$ in a direct sum of vector spaces over $\mc C$:
$$
\mc Z^k_K=\mc A^k\oplus\mc B^k_K\,.
$$
\item
We have an isomorphism of $\mb Z$-graded Lie superalgebras (cf. Secction \ref{sec:before.2}):
$$
\mc H^\bullet_K(\mc V)=\mc A^\bullet\simeq\tilde H(\ell+1,\tilde S)\,,
$$
where
$\tilde S$ is the $(\ell+1)\times(\ell+1)$ matrix obtained from $S$ by adding a zero row and column.
In particular, $\dim_\mc C(\mc H^k_K)=\binom{l+1}{k+2}$.
\end{enumerate}
\end{theorem}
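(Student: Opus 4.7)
The strategy mirrors the linearly closed case (Theorem \ref{20110601:thm}), adapted to accommodate the extra direction of $\mc H^{-1}_K$ arising from constants in $\mc F = \mc C$.

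First, I verify that $\mc A^k \subset \mc Z^k_K$. Constants $B \in \Lambda^{k+1}$ are closed since all $\partial/\partial u_j^{(n)}$ annihilate them. For $uA$ with $A \in \Lambda_S^{k+2}$, formula \eqref{20110611:eq2} with $K_{ji}(\lambda) = s_{ji}\lambda$ reduces (only the $n=0$ contribution survives, since entries of $uA$ are linear in $u$) to a degree-one polynomial in $(\lambda_0, \dots, \lambda_{k+1})$ whose $\lambda_\alpha$-coefficient is $(-1)^\alpha \sum_j a_{j, i_0, \stackrel{\alpha}{\check{\cdots}}, i_{k+1}}\, s_{j, i_\alpha}$; the defining condition of $\Lambda_S^{k+2}$, combined with the skewsymmetry in $(i_0, \dots, i_{k+1})$, forces all these coefficients to be equal, so the polynomial lies in the ideal $(\lambda_0 + \cdots + \lambda_{k+1} + \partial)\mc C[\lambda_0, \dots, \lambda_{k+1}]$ and $\delta_K(uA) = 0$.

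Next, I match dimensions. The assignment $a_{j, i_0, \dots, i_k} \mapsto \tilde a_{i_{-1}, i_0, \dots, i_k} := \sum_j s_{i_{-1}, j}\, a_{j, i_0, \dots, i_k}$, a bijection by nondegeneracy of $S$, identifies $\Lambda_S^{k+2}$ with the totally skewsymmetric arrays $\Lambda^{k+2}$; hence $\dim_{\mc C}\mc A^k = \binom{\ell}{k+1} + \binom{\ell}{k+2} = \binom{\ell+1}{k+2}$ by Pascal. On the other hand, unpacking Remark \ref{20110629:rem1} for $K = S\partial$ shows that $\Sigma_k$ is defined by exactly the same equation as $\Lambda_S^{k+1}$, so $\dim_{\mc C}\ker\alpha_k = \binom{\ell}{k+1}$; the exact sequence \eqref{20110618:eq1} then saturates the bound \eqref{20110620:eq1}, giving $\dim_{\mc C}\mc H^k_K = \binom{\ell+1}{k+2} = \dim_{\mc C}\mc A^k$. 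The induced injection $\mc A^k \hookrightarrow \mc H^k_K$ between equal-dimensional spaces is therefore an isomorphism, yielding both $\mc A^k \cap \mc B^k_K = 0$ and the splitting $\mc Z^k_K = \mc A^k \oplus \mc B^k_K$ of (a).

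For (b), I first check that $\mc A^\bullet$ is closed under the Schouten bracket \eqref{20110611:eq1}: since elements of $\mc A^\bullet$ are polynomials of degree $\leq 1$ in $u$, no $u^{(n)}$ with $n \geq 1$ can arise, the output again has the form $B' + uA'$, and contraction identities involving $S$ preserve the defining $\Lambda_S$-condition. For transitivity, note that bracketing $P = B + uA \in \mc A^k$ with $uv \in \mc A^{-1}$ ($v \in \mc C^\ell$) produces a constant part proportional to $B \cdot v$ and a $u$-linear part given by the $v$-contraction of $A$; requiring vanishing for all $v$ forces $B = 0$ and $A = 0$. I then identify $\mc A^{-1} = \mc C \oplus \mc C^\ell$ with $\tilde H_{-1}(\ell+1, \tilde S) = \langle \eta, \xi_1, \dots, \xi_\ell \rangle$ (sending constants to $\eta$), and identify $\mc A^0 = \mc C^\ell \oplus \Lambda_S^2 \simeq \mc C^\ell \oplus so(\ell, S)$ with $so(\ell+1, \tilde S)$ via the block decomposition from the proof of Lemma \ref{20110628:lem}. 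By Proposition \ref{20110622:prop} (applicable since $\tilde S$ has rank $\ell = (\ell+1)-1$), $\tilde H(\ell+1, \tilde S)$ is the full prolongation of $(\mc C^{\ell+1}, so(\ell+1, \tilde S))$; the transitive graded Lie superalgebra $\mc A^\bullet$ therefore embeds into it, and matching graded dimensions $\binom{\ell+1}{k+2}$ forces equality.

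The main obstacle is the verification work in the last paragraph: closing $\mc A^\bullet$ under the Schouten bracket and checking that the $\mc A^0$-action on $\mc A^{-1}$ matches the natural $so(\ell+1, \tilde S)$-action on $\mc C^{\ell+1}$ under the chosen identifications. These are concrete but bookkeeping-heavy computations with formula \eqref{20110611:eq1}, where the essential input is that contraction of skewsymmetric tensors with $S$ preserves the $\Lambda_S$-condition. Once these are in hand, everything else follows formally from Proposition \ref{20110622:prop} and the dimension count in Step 2.
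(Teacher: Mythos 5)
Your proposal is correct and follows the same overall strategy as the paper: exhibit $\mc A^\bullet$ as a transitive $\mb Z$-graded subalgebra of $\mc Z^\bullet_K(\mc V)$, get the dimension of $\mc H^k_K$ from the long exact sequence \eqref{20110618:eq1}, and then identify $\mc A^\bullet$ with $\tilde H(\ell+1,\tilde S)$ by embedding it into the full prolongation of $(\mc C^{\ell+1},so(\ell+1,\tilde S))$ via Proposition \ref{20110622:prop} and matching graded dimensions. The one genuine variation is in the dimension count: the paper only uses the upper bound \eqref{20110620:eq1} together with the lower bound $\dim_{\mc C}\mc H^k_K\geq\dim_{\mc C}\mc A^k$ coming from the injection $\mc A^k\hookrightarrow\mc H^k_K$, whereas you compute $\dim_{\mc C}\ker\alpha_k$ exactly by identifying $\Sigma_k$ (for $N=1$, $K=S\partial$) with $\Lambda^{k+1}_S\simeq\Lambda^{k+1}$. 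This identification is correct (the $(\alpha,\beta)=(0,1)$ divisibility condition is precisely the $\Lambda_S$-equation by symmetry of $S$, and the remaining conditions follow from total skewsymmetry of the contracted tensor), and it buys you the equality $\dim_{\mc C}\mc H^k_K=\binom{\ell+1}{k+2}$ independently of any statement about $\mc A^k$ — a slightly more self-contained route through Remark \ref{20110629:rem1}.

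One logical point to tighten: in your second step you invoke ``the induced injection $\mc A^k\hookrightarrow\mc H^k_K$'' before you have justified that $\mc A^k\cap\mc B^k_K=0$; equal dimensions alone do not make the quotient map injective. The justification is exactly your later transitivity computation, combined with the fact that exact elements are essential (Proposition \ref{20110612:prop}(b)): if $P=B+uA$ were exact, all iterated brackets $[\dots[[P,C_0],\dots],C_k]$ with Casimirs $C_i=\tint u\cdot v_i$ would vanish, and transitivity of $\mc A^\bullet$ then forces $P=0$. This is precisely how the paper argues, so the ingredient is present in your write-up; it just needs to be placed before (or explicitly fed into) the injectivity claim. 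The remaining gaps you flag yourself — closure of $\mc A^\bullet$ under the box product and the matching of the $\mc A^0$-action on $\mc A^{-1}$ with the $so(\ell+1,\tilde S)$-action — are indeed the concrete computations the paper records in \eqref{20110627:eq3} and \eqref{20110627:eq4}, and they come out as you describe.
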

\begin{proof}
For $B\in\Lambda^{k+1}$, we obviously have $\delta_KB=0$.
Moreover, it is immediate to check, using the formula \eqref{20110611:eq2} for $\delta_K$,
that, if $A\in\Lambda^{k+2}_S$, then $\delta_K(uA)=0$.
Hence, $\mc A^k\subset\mc Z^k_K$ for every $k\geq-1$.
Next, we compute the box product \eqref{100418:eq2} between two elements of $\mc A^\bullet$.
Let $B\oplus uA\in\Lambda^{h+1}\oplus u\Lambda^{h+2}_S=\mc A^h$,
and $D\oplus uC\in\Lambda^{k-h+1}\oplus u\Lambda^{k-h+2}_S=\mc A^{k-h}$.
We have $B\Box D=0$, $uA\Box D=0$,
moreover,
$B\Box uC\in\Lambda^{k+1}\subset\mc A^k$ and $uA\Box uC\in u\Lambda^{k+2}_S\subset\mc A$
are given by
\begin{equation}\label{20110627:eq3}
\begin{array}{l}
\displaystyle{
{(B\Box uC)}_{i_0,\dots,i_k}
= \sum_{\sigma\in\mc S_{h,k}}
\sign(\sigma)
\sum_{j\in I}
b_{j,i_{\sigma(k-h+1)},\dots,i_{\sigma(k)}}
c_{j,i_{\sigma(0)},\dots,i_{\sigma(k-h)}}
\,,
} \\
\displaystyle{
{(uA\Box uC)}_{i_0,\dots,i_k}
= \sum_{\sigma\in\mc S_{h,k}}
\sign(\sigma)
\sum_{i,j\in I}
u_i a_{i,j,i_{\sigma(k-h+1)},\dots,i_{\sigma(k)}}
c_{j,i_{\sigma(0)},\dots,i_{\sigma(k-h)}}
\,.
}
\end{array}
\end{equation}
We thus conclude that $\mc A^\bullet=\bigoplus_{k\geq-1}\mc A^k$ is a subalgebra
of the $\mb Z$-graded Lie superalgebra $\mc Z^\bullet(\mc V)\subset W^{\var}(\Pi\mc V)$.

Since $\mc S_{-1,k+1}=\emptyset$, we have that $\mc A^{-1}\Box\mc A^\bullet=0$.
Moreover, $\mc S_{-1,k+1}=\{1\}$.
Hence, for $d\oplus uC\in\mc C\oplus u\mc C^\ell=\mc A^{-1}$
and $B\oplus uA\in\Lambda^{k+1}\oplus u\Lambda^{k+2}_S=\mc A^k$,
we have
$$
[B\oplus uA,d\oplus uC]=B\Box(uC)\oplus(uA\Box uC)\in\Lambda^{k}\oplus u\Lambda^{k+1}_S=\mc A^{k-1}\,,
$$
with entries
\begin{equation}\label{20110627:eq4}
\begin{array}{l}
\displaystyle{
[B,uC]_{i_1,\dots,i_k}
=
(B\Box uC)_{i_1,\dots,i_k}
=
\sum_{j\in I}
b_{j,i_1,\dots,i_k}
c_{j}
\,,
} \\
\displaystyle{
[uA,uC]_{i_1,\dots,i_k}
=
(uA\Box uC)_{i_1,\dots,i_k}
=
\sum_{i,j\in I}
u_i a_{i,j,i_1,\dots,i_k}
c_{j}
\,.
}
\end{array}
\end{equation}

It is clear, from formula \eqref{20110627:eq4}, that $[B\oplus uA,uC]=0$ for every $C\in\mc C^\ell$
if and only if $A=0$ and $B=0$.
Hence $\mc A^\bullet$ is a transitive $\mb Z$-graded Lie superalgebra.

Since $[\mc B^k_K,\mc Z^{-1}_K]=0$, it follows, in particular, that $\mc A^k\cap\mc B^k_K=0$ for every $k\geq-1$.
Hence $\mc A^k$ coincides with its image in $\mc H^k_K(\mc V)$,
and $\mc A^\bullet$ can be viewed as a subalgebra of the $\mb Z$-graded Lie superalgebra $\mc H^\bullet_K(\mc V)$.
Therefore (by the Tartaglia-Pascal triangle) $\dim_{\mc C}\mc H^k_K\geq\dim_{\mc C}\mc A^k=\binom{\ell+1}{k+2}$.
Since, by \eqref{20110620:eq1}, $\dim_{\mc C}\mc H^k_K\leq\binom{\ell+1}{k+2}$,
we conclude that all these inequalities are equalities,
and that $\mc H^\bullet(\mc V)\simeq\mc A^\bullet$ are isomorphic $\mb Z$-graded Lie superalgebras.

To conclude, in view of Proposition \ref{20110622:prop},
we need to prove that $\mc A^\bullet$ is the full prolongation of the pair
$(\mc C^{\ell+1},so(\ell+1,\tilde S)$,
where $\tilde S$ is the $(\ell+1)\times(\ell+1)$ matrix obtained adding a zero row and column to $S$.
We have
$\mc C^{\ell+1}=\mc C\oplus\mc C^\ell$, and
$$
so(\ell+1,\tilde S)
=\Big\{\left(\begin{array}{cc} 0 & B^T \\ 0 & A \end{array}\right)
\,\Big|\, B\in\mc C^{\ell}\,,\,\, A\in so(\ell,S)\Big\}
\simeq\mc C^\ell\oplus so(\ell,S)
\,,
$$
with the Lie bracket of $B\oplus A\in\mc C^\ell\oplus so(\ell,S)$ and $d\oplus C\in\mc C\oplus\mc C^\ell$ given by
\begin{equation}\label{20110627:eq1}
[B+A,d+C]=B\cdot C\oplus AC\in\mc C\oplus\mc C^\ell\,.
\end{equation}

By definition, we have $\mc A^0=\Lambda^1\oplus u\Lambda^2_S=\mc C^\ell\oplus u\cdot so(\ell,S)$,
and the action of
$B\oplus uA\in \mc C^\ell\oplus u\cdot so(\ell,S)$
on $d\oplus uC\in\mc C\oplus u\mc C^\ell=\mc A^{-1}$, given by \eqref{20110627:eq4},
is $[B\oplus uA,d\oplus uC]_i=B\cdot C\oplus u AC$.
Namely, in view of \eqref{20110627:eq1}, it is induced by the natural action of
$so(\ell+1,\tilde S)\simeq\mc C^\ell\oplus so(\ell,S)$ on $\mc C\oplus\mc C^\ell$.
Hence, $\mc A^{-1}\oplus\mc A^0\simeq(\mc C\oplus\mc C^\ell)\oplus(\mc C^\ell\oplus so(n,S))$.
Since $\mc A^\bullet$ is a transitive $\mb Z$-graded Lie superalgebra,
it is a subalgebra of the full prolongation of $(\mc C^{\ell+1},so(\ell+1,\tilde S)$.

On the other hand, by Proposition \ref{20110622:prop} the full prolongation of $(\mc C^{\ell+1},so(\ell+1,\tilde S)$
is isomorphic to $\tilde H(\ell+1,\tilde S)$,
and $\dim_{\mc C}\tilde H(\ell+1,\tilde S)=2^{\ell+1}-1=\sum_{k\geq-1}\dim_{\mc C}\mc A^k$.
Hence, $\mc A^\bullet$ must be isomorphic to $\tilde H(\ell+1,\tilde S)$, as we wanted.
\end{proof}
\begin{corollary}\label{20110628:cor}
Under the assumptions of Theorem \ref{20110622:thm},
the essential variational cohomology $\mc E\mc H^\bullet_K(\mc V)$ is zero.
\end{corollary}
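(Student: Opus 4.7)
The plan is to derive this corollary as a clean consequence of Theorem \ref{20110622:thm} together with the transitivity built into the notion of a full prolongation. The key observation is that the condition ``essentially closed mod exact'' is nothing but the transitivity characterization (ii) of Section \ref{sec:before} applied to the cohomology Lie superalgebra $\mc H^\bullet_K(\mc V)$.

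First I would record that the matrix $\tilde S$, obtained from the nondegenerate symmetric $S$ by adjoining a zero row and column, has rank exactly $\ell=(\ell+1)-1$. Hence Proposition \ref{20110622:prop} applies and identifies $\tilde H(\ell+1,\tilde S)$ with the full prolongation of the pair $(\mc C^{\ell+1},so(\ell+1,\tilde S))$. In particular, $\tilde H(\ell+1,\tilde S)$ is transitive. By Theorem \ref{20110622:thm}(b) we have an isomorphism of $\mb Z$-graded Lie superalgebras $\mc H^\bullet_K(\mc V)\simeq \tilde H(\ell+1,\tilde S)$, so $\mc H^\bullet_K(\mc V)$ is itself transitive.

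Next, I would translate the definition of essential cohomology into the transitivity language. Using the inductive definition $\mc E\mc Z^{-1}_K=0$ and $\mc E\mc Z^{k}_K=\{z\in\mc Z^k_K\mid[z,\mc Z^{-1}_K]\subset\mc E\mc Z^{k-1}_K\}$, iterating yields
\[
\big[\dots\big[[z,C_0],C_1\big],\dots,C_k\big]=0\quad\text{in }W^{\var}(\Pi\mc V),
\]
for every $z\in\mc E\mc Z^k_K$ and all $C_0,\dots,C_k\in\mc Z^{-1}_K$. Passing to cohomology classes and remembering that $\mc H^{-1}_K=\mc Z^{-1}_K$, the class $[z]\in\mc H^k_K$ thus satisfies the same iterated-bracket vanishing in the Lie superalgebra $\mc H^\bullet_K(\mc V)$. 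Characterization (ii) of transitivity then forces $[z]=0$ in $\mc H^k_K$, i.e.\ $z\in\mc B^k_K$. This is precisely the statement $\mc E\mc H^\bullet_K(\mc V)=0$.

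There is essentially no technical obstacle here: all the substantive work has already been done in establishing Theorem \ref{20110622:thm} and Proposition \ref{20110622:prop}. The only point requiring a moment's care is verifying that $\tilde S$ falls in the rank range $\{n,n-1\}$ needed by Proposition \ref{20110622:prop} (which it does, since appending one zero row and column to a nondegenerate $\ell\times\ell$ matrix produces exactly the ``rank $n-1$'' case), and checking that the inductive and ``all iterated brackets vanish'' definitions of essential elements agree, which is immediate from the induction.
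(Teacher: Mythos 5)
Your proposal is correct and matches the paper's argument, which likewise deduces the vanishing of $\mc E\mc H^\bullet_K(\mc V)$ immediately from the transitivity of the $\mb Z$-graded Lie superalgebra $\mc H^\bullet_K(\mc V)$ established in Theorem \ref{20110622:thm}. The only cosmetic difference is that the paper gets transitivity directly from the explicit bracket formula \eqref{20110627:eq4} for $\mc A^\bullet$ inside the proof of that theorem, while you re-derive it from the identification with the full prolongation $\tilde H(\ell+1,\tilde S)$ via Proposition \ref{20110622:prop}; both are legitimate and your rank check on $\tilde S$ is the right point to verify.
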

\begin{proof}
It immediately follows from the transitivity of the $\mb Z$-graded Lie superalgebra $\mc H^\bullet_K(\mc V)$.
\end{proof}

\begin{remark}\label{20110628:rem2}
If $S$ is a nondegenerate, but not necessarily symmetric, $\ell\times\ell$ matrix,
we still have an isomorphism of vector spaces $\mc H^k_K\simeq\mc A^k$,
but $\mc H^\bullet_K(\mc V)$ is not, in general, a Lie superalgebra.
\end{remark}
\begin{remark}\label{20110628:rem3}
The description of $\mc H^\bullet_K(\mc V)$, as a vector space, for $K=S\partial$ with $S$ symmetric nondegenerate matrix over $\mc C$,
agrees with the results of S.-Q. Liu and Y. Zhang \cite{LZ11,LZ11pr}.
\end{remark}
%




\begin{thebibliography}{}


\bibitem[BDSK09]{BDSK09}
A.~Barakat, A.~De Sole, and V.G.~Kac,
\emph{Poisson vertex algebras in the theory of Hamiltonian equations},
Japan. J. Math. \textbf{4}, (2009) 141--252.

\bibitem[DMS05]{DMS05}
L.~Degiovanni, F.~Magri, and V.~Sciacca,
\emph{On deformation of Poisson manifolds of hydrodynamic type}
Comm. Math. Phys. \textbf{253}, (2005) no. 1, 1--24.

\bibitem[DSK06]{DSK06}
A.~De Sole, and V.G.~Kac,
\emph{Finite vs. affine $W$-algebras},
Japan. J. Math. \textbf{1}, (2006) 137--261.

\bibitem[DSK09]{DSK09}
A.~De Sole, and V.G.~Kac,
\emph{Lie conformal algebra cohomology and the variational complex},
Commun. Math. Phys. \textbf{292}, (2009) 667--719.

\bibitem[DSK11]{DSK11}
A.~De Sole, and V.G.~Kac,
\emph{The variational Poisson cohomology}, (2011) arXiv:1106.0082.

\bibitem[DSKW10]{DSKW10}
A.~De Sole, V.G.~Kac, and M.~Wakimoto,
\emph{On classification of Poisson vertex algebras},
Transform. Groups \textbf{15} (2010), no. 4, 883--907.

\bibitem[Dor93]{Dor93}
I.Ya.~Dorfman,
\emph{Dirac structures and integrability of nonlinear evolution equations},
Nonlinear Sci. Theory Appl. (John Wiley \& Sons, 1993) 176 pp.

\bibitem[Get02]{Get02}
E.Getzler, \emph{A Darboux theorem for Hamiltonian operators in the formal calculus of variations},
Duke Math. J. \textbf{111} (2002), no. 3, 535--560.


\bibitem[GS64]{GS64}
V.W.~Guillemin, S.~Sternberg,
\emph{An algebraic model of transitive differential geometry}
Bull. Amer. Math. Soc. \textbf{70} (1964) 16--47.

\bibitem[Kac77]{Kac77}
V.G.~Kac, \emph{Lie superalgebras}
Advances in Math. \text{26} (1977), no. 1, 8--96.

\bibitem[Kra88]{Kra88}
I.S.~Krasilshchik,
\emph{Schouten brackets and canonical algebras},
Lecture Notes in Math. \textbf{1334}, (Springer Verlag, New York 1988).

\bibitem[Kup80]{Kup80}
B.A.~Kupershmidt,
\emph{Geometry of jet bundles and the structure of Lagrangian and Hamiltonian formalisms},
in \emph{Geometric methods in Mathematical Physics},
Lecture Notes in Math. \textbf{775}, (Springer Verlag, New York 1980) 162--218.

\bibitem[Lic77]{Lic77}
A.~Lichnerowicz,
\emph{Les varietes de Poisson et leur algebres de Lie associees},
J. Diff. Geom. \textbf{12}, (1977) 253--300.

\bibitem[LZ11]{LZ11}
S.-Q.~Liu, and Y.~Zhang,
\emph{Jacobi structures of evolutionary partial differential equations},
Adv. Math. \textbf{227} (2011), 73--130.

\bibitem[LZ11pr]{LZ11pr}
S.-Q.~Liu, and Y.~Zhang, private communication, Beijing, June 2011.

\bibitem[Mag78]{Mag78}
F.~Magri,
\emph{A simple model of the integrable Hamiltonian equation},
J. Math. Phys. \textbf{19}, (1978) 129--134.

\bibitem[Olv87]{Olv87}
P.J.~Olver,
\emph{BiHamiltonian systems}, in \emph{Ordinary and partial differential equations},
Pitman Research Notes in Mathematics Series \textbf{157},
(Longman Scientific and Technical, New York 1987) 176--193.


\end{thebibliography}
\end{document}